\numberwithin{equation}{section}
\newtheorem{theorem}{Theorem}[section]
\newtheorem{lemma}[theorem]{Lemma}
\newtheorem{corollary}[theorem]{Corollary}
\newtheorem{remark}[theorem]{Remark}
\newtheorem{proposition}[theorem]{Proposition}
\newtheorem{assumption}[theorem]{Assumption}
\newtheorem*{RIE}{Property \textup{(RIE)}}
\newcommand{\dd}{\,\mathrm{d}}
\renewcommand{\d}{\mathrm{d}}
\newcommand{\D}{\mathrm{D}}
\renewcommand{\epsilon}{\varepsilon}
\let\oldphi\phi
\renewcommand{\phi}{\varphi}
\newcommand{\E}{\mathbb{E}}
\newcommand{\N}{\mathbb{N}}
\renewcommand{\P}{\mathbb{P}}
\newcommand{\R}{\mathbb{R}}
\newcommand{\W}{\mathbb{W}}
\newcommand{\X}{\mathbb{X}}
\newcommand{\Z}{\mathbb{Z}}
\newcommand{\1}{\mathbf{1}}
\newcommand{\bZ}{\mathbf{Z}}
\newcommand{\bW}{\mathbf{W}}
\newcommand{\bX}{\mathbf{X}}
\newcommand{\tX}{\widetilde{X}}
\newcommand{\tY}{\widetilde{Y}}
\newcommand{\tbbX}{\widetilde{\X}}
\newcommand{\tbX}{\widetilde{\bX}}
\newcommand{\tb}{\tilde{b}}
\newcommand{\tsigma}{\tilde{\sigma}}
\newcommand{\cA}{\mathcal{A}}
\newcommand{\cD}{\mathcal{D}}
\newcommand{\cE}{\mathcal{E}}
\newcommand{\cF}{\mathcal{F}}
\newcommand{\cL}{\mathcal{L}}
\newcommand{\cP}{\mathcal{P}}
\newcommand{\cV}{\mathcal{V}}
\newcommand{\crptX}{\mathcal{V}^p_{\widetilde{X}}}
\newcommand{\crpW}{\mathcal{V}^p_{(\cdot,W)}}
\newcommand{\crpX}{\mathcal{V}^p_X}
\DeclareMathOperator*{\Sym}{Sym}
\title[Pathwise analysis of log-optimal portfolios]{Pathwise analysis of log-optimal portfolios}
\author[Allan]{Andrew L. Allan}
\address{Andrew L. Allan, Durham University, United Kingdom}
\email{andrew.l.allan@durham.ac.uk}
\author[Kwossek]{Anna P. Kwossek}
\address{Anna P. Kwossek, University of Vienna, Austria}
\email{anna.paula.kwossek@univie.ac.at}
\author[Liu]{Chong Liu}
\address{Chong Liu, ShanghaiTech University, China}
\email{liuchong@shanghaitech.edu.cn}
\author[Pr{\"o}mel]{David J. Pr{\"o}mel}
\address{David J. Pr{\"o}mel, University of Mannheim, Germany}
\email{proemel@uni-mannheim.de}
\date{\today}
\begin{document}

\begin{abstract}
Based on the theory of c{\`a}dl{\`a}g rough paths, we develop a pathwise approach to analyze stability and approximation properties of portfolios along individual price trajectories generated by standard models of financial markets. As a prototypical example from portfolio theory, we study the log-optimal portfolio in a classical investment-consumption optimization problem on a frictionless financial market modelled by an It{\^o} diffusion process. We identify a fully deterministic framework that enables a pathwise construction of the log-optimal portfolio, for which we then establish pathwise stability estimates with respect to the underlying model parameters. We also derive pathwise error estimates arising from the time-discretization of the log-optimal portfolio and its associated capital process.
\end{abstract}

\maketitle

\noindent \textbf{Key words:} Black--Scholes model, c{\`a}dl{\`a}g rough path, discretization error, local volatility model, logarithmic utility, model uncertainty, portfolio optimization.

\noindent \textbf{MSC 2020 Classification:} 91G10, 60L20.


\section{Introduction}

A central challenge in mathematical finance, financial economics, and related fields is to understand the decision making of rational agents facing financial markets with their random evolution of asset prices. A major approach to this challenge, initiated by Merton \cite{Merton1969}, is the study of utility maximization problems in continuous-time markets. By now, a vast number of researchers have contributed to this approach and investigated various facets of utility maximization problems; see, e.g., \cite{Jarrow2018} and the references therein. For instance, a substantial body of research is devoted to constructing closed-form solutions, which are of particular interest from a practitioner's perspective; see, e.g., \cite{Merton1971,Kim1996,Zariphopoulou1999,Kraft2005,Guasoni2012}.

In classical portfolio theory, the utility maximization problems are considered and solved with the implicit assumption that the underlying model for the asset prices is perfectly specified, that is, the model parameters (trend and volatility) are fully known. Consequently, essentially all ``optimal'' portfolios in the literature depend on the underlying model parameters. However, in reality, due to the necessity to use statistical estimation to determine the underlying models, there is always a natural uncertainty about the model parameters and, even worse, about the underlying model itself. In particular, estimating the trend of the time-evolution of an asset price on a financial market is known to be a notoriously difficult problem; cf.~\cite{Rogers2001}. Hence, to deal with model uncertainty and to understand its implications is of utmost importance in portfolio theory.

Various approaches have been developed in mathematical finance to treat model uncertainty in the context of portfolio optimization. Let us briefly mention the three major research areas which are most related to the present work. The \emph{sensitivity analysis} of utility maximization problems is based on classical probabilistic modelling, and studies the impact of model perturbations to decision making; see, e.g., \cite{Kardaras2011,Larsen2007,Weston2016,Mostovyi2019}. \emph{Robust} portfolio theory does not fix a fully specified underlying model---instead, it introduces a ``worst-case'' approach, also called the Knightian approach, aiming to solve utility maximization problems simultaneously for a family of models; see, e.g., \cite{Tevzadze2013,Biagini2017,Neufeld2018,Pham2022}. In \emph{model-free} portfolio theory, portfolios are constructed without any underlying probabilistic framework, and their performance is analyzed in an entirely pathwise manner; see, e.g., \cite{Schied2018,Cuchiero2019,Karatzas2020,Allan2023a}.

In this paper we develop a methodology for the pathwise analysis of portfolios, applied to individual price trajectories generated by standard models of financial markets. This yields, in particular, a model-free approach to the classical sensitivity analysis of utility maximization problems, enabling a clear distinction between Knightian uncertainty and the intrinsic randomness inherent in the underlying mathematical problem. Our primary focus is on the pathwise stability of portfolios with respect to model parameters, as this appears to be a stability property of paramount importance for both the practical and mathematical perspectives; see, e.g., \cite{Weston2016} and the discussion therein. That said, we emphasize that the proposed methodology is sufficiently general to handle pathwise stability with respect to various types of perturbations, including changes in the noise process and initial prices, as well as discretization schemes. As an illustration of this flexibility, we analyze the time-discretization error of portfolios, as detailed below.

As a prototypical example of an ``optimal'' portfolio, we investigate the log-optimal portfolio of a classical investment-consumption optimization problem in a frictionless financial market, modelled by an It{\^o} diffusion process; see \cite{Merton1969,Merton1971}. Let us briefly recall this classical utility maximization problem. We assume that the discounted price process $(\bar{S}_t)_{t \in [0,T]}$ is given by
\begin{equation*}
\bar{S}_t = s_0 + \int_0^t \bar{b}_s \dd s + \int_0^t \bar{\sigma}_s \dd \bar{W}_s, \qquad t \in [0,T],
\end{equation*}
where $\bar{b}$, $\bar{\sigma}$ are suitable predictable processes, and $\bar{W}$ is a Brownian motion. It is well-known that there exists a log-optimal portfolio $(\bar{\phi}, \bar{\kappa})$ given a consumption clock $K$, that is,
\begin{equation*}
\E \bigg[ \int_0^T \log (\bar{\kappa}_t) \dd K_t \bigg] = \sup_{(\bar{\oldphi},\bar{\chi})} \E \bigg[ \int_0^T \log (\bar{\chi}_t) \dd K_t \bigg],
\end{equation*}
where the supremum is taken over all admissible portfolios $(\bar{\oldphi}, \bar{\chi})$; see, e.g., \cite{Goll2000, Goll2003}. Before presenting our pathwise analysis of the log-optimal portfolio, we would like to emphasize that, based on the developed methodology, an analogous pathwise analysis can be carried out for numerous portfolios which are known to be ``optimal'' from classical portfolio theory.

As a foundation, we set up a suitable pathwise It{\^o}-type integration, relying on the theory of c{\`a}dl{\`a}g rough paths, as in \cite{Friz2017,Friz2018}, and, more specifically, the so-called Property~\textup{(RIE)} as introduced in \cite{Perkowski2016, Allan2023b}. While rough path theory, of course, provides a comprehensive theory of rough integration, some new results essential in the specific context of mathematical finance are required, and care is needed to obtain the natural economic interpretation of all involved integrals and related objects. In particular, assuming that a ``noise'' path $W$ satisfies Property~\textup{(RIE)}, the discounted price path $(S_t)_{t \in [0,T]}$ can be modelled by the (rough) differential equation
\begin{equation}\label{eq:price intro}
S_t = s_0 + \int_0^t \hat{b}_s \dd s + \int_0^t \hat{\sigma}_s \dd W_s, \qquad t \in [0,T],
\end{equation}
where $\hat{b}$, $\hat{\sigma}$ are suitable paths. The rationale behind the deterministic price dynamics \eqref{eq:price intro} is that $W$ corresponds to a fixed realization of the noise, modelling the randomness of price processes, and $\hat{b}, \hat{\sigma}$ are the model parameters specifying the dynamics of the underlying model adopted for the asset prices. Hence, \eqref{eq:price intro} provides a transparent distinction of model uncertainty and randomness. In this paper, we work in each of the following settings for \eqref{eq:price intro}:
\begin{itemize}
  \item[$\vcenter{\hbox{\scriptsize$\bullet$}}$] local volatility models: $\hat{b}_s = b(s,S_s) $ and $\hat{\sigma}_s = \sigma(s,S_s)$, with $b \in C^3_b(\R^{m+1};\R^m)$ and $\sigma \in C^3_b(\R^{m+1};\mathcal{L}(\R^d;\R^m))$,
  \item[$\vcenter{\hbox{\scriptsize$\bullet$}}$] Black--Scholes-type models: $ \hat{b}_s= b_s S_s$ and $\hat{\sigma}_s = \sigma_s S_s$, where the paths $b$ and $\sigma$ are controlled (in the rough path sense) with respect to $W$.
\end{itemize}
We note that it is necessary to differentiate between these two settings, since rough differential equations with unbounded coefficients are a delicate challenge, and can only be treated in specific situations; see, e.g., \cite{Lejay2012}.

Based on the developed pathwise It{\^o}-type integration, we can construct the log-optimal portfolio of Merton's investment-consumption problem entirely pathwise, given the model parameters $b$, $\sigma$ and a fixed ``noise'' path $W$. Moreover, assuming that the noise paths $W$ are realizations of a Brownian motion $\bar{W}$, the pathwise construction of the log-optimal portfolio is, indeed, a solution to Merton's classical investment-consumption problem in a frictionless financial market, modelled by an It{\^o} diffusion process. We therefore call this pathwise constructed portfolio the \emph{pathwise log-optimal portfolio} $(\phi, \kappa)$. However, let us remark that for the construction of this portfolio, as well as its pathwise analysis, the ``noise'' path $W$ can be a rather general deterministic path, and does not need to be a sample path of any particular stochastic process.

The present pathwise framework and the pathwise construction of the log-optimal portfolio allow us, in particular, to analyze the dependency of the pathwise log-optimal portfolio on the model parameters for a fixed noise path. Relying on continuity estimates for rough integration and rough differential equations, we prove that the pathwise log-optimal portfolio and its associated capital process depend in a locally Lipschitz continuous way on the model parameters $b$, $\sigma$. For instance, in the case of local volatility models, the stability of the pathwise log-optimal portfolio and its associated capital process read as follows:
\begin{equation*}
\|(\phi^{(b,\sigma)}, \kappa^{(b,\sigma)}) - (\phi^{(\tilde{b},\tilde{\sigma})}, \kappa^{(\tilde{b},\tilde{\sigma})})\|_\infty \lesssim \|b - \tilde{b}\|_{C^2_b} + \|\sigma - \tilde{\sigma}\|_{C^2_b}
\end{equation*}
and
\begin{equation*}
\|V^{(b,\sigma)} - V^{(\tilde{b},\tilde{\sigma})}\|_\infty \lesssim \|b - \tilde{b}\|_{C^2_b} + \|\sigma - \tilde{\sigma}\|_{C^2_b},
\end{equation*}
where $(\phi^{(b,\sigma)}, \kappa^{(b,\sigma)})$, $(\phi^{(\tilde{b},\tilde{\sigma})},\kappa^{(\tb,\tsigma)})$ denote the pathwise log-optimal portfolios, and $V^{(b,\sigma)}$, $V^{(\tilde{b},\tilde{\sigma})}$ the associated capital processes, given the model parameters $b$, $\sigma$ and $\tilde{b}$, $\tilde{\sigma}$, respectively. The precise statements of the pathwise stability estimates with respect to model parameters can be found in Sections~\ref{subsec: stability in local vola model} and \ref{subsec: stability in Black Scholes model}.

Model uncertainty is, of course, not the only key challenge when aiming to implement a theoretically optimal portfolio on a real financial market. Another major obstacle is the necessary time-discretization of portfolios, and trading strategies in general; see, e.g., \cite{Gobet2001,Rosenbaum2014,Fukasawa2014,Cai2016}. Indeed, while trading can be done at very high frequency, there is still some gap between high-frequency trading and continuous-time trading, and, for various reasons, there is often a desire to rebalance a portfolio at a lower frequency. With this in mind, we prove the convergence of the time-discretized version of the pathwise log-optimal portfolio to its continuous-time counterpart, as well as the convergence of the associated capital processes, along with quantitative bounds for the discretization error. The precise estimates for the discretization errors can be found in Sections~\ref{subsec: discretization in local vola model} and \ref{subsec: discretization in Black Scholes model}.

\medskip

\noindent \textbf{Organization of the paper:} Section~\ref{sec: portfolio in probabilistic setting} presents the classical investment-consumption optimization problem in a probabilistic setting. In Section~\ref{sec: pathwise analysis}, we recall some essential background from rough path theory and set up the pathwise approach to stochastic It{\^o} integration. In the case of price trajectories generated by local volatility models, the pathwise analysis of the log-optimal portfolio is developed in Section~\ref{sec: local vol models}, and, in the case of price trajectories generated by Black--Scholes-type models, in Section~\ref{sec: BS model}. Appendix~\ref{sec: appendix} establishes several elementary results in the theory of c{\`a}dl{\`a}g rough paths.

\medskip

\noindent\textbf{Acknowledgments:} A.~P.~Kwossek and D.~J.~Pr{\"o}mel gratefully acknowledge financial support by the Baden-W{\"u}rttemberg Stiftung. A.~P.~Kwossek  was affiliated with the University of Mannheim for the majority of this project's duration.

\section{Portfolio optimization in a probabilistic setting}\label{sec: portfolio in probabilistic setting}

Before setting up a pathwise stability analysis of optimal portfolios, let us recall the classical formulation of (and the well-known solution to) an optimal investment-consumption problem in a probabilistic setting {\`a} la Merton \cite{Merton1971}. For this purpose, we fix an underlying probability space $(\Omega, \mathcal{F}, \P)$ with a filtration $(\mathcal{F}_t)_{t \in [0,T]}$ satisfying the usual conditions, i.e., completeness and right-continuity.

Following \cite{Goll2000,Goll2003}, we consider an optimal investment-consumption problem, where ``optimal'' refers to the maximization of the expected $\log$-utility from the investor's consumption over a finite time horizon $T > 0$. In the next subsection, we give a precise formulation of the investment-consumption problem.

\subsection{A classical investment-consumption optimization problem}\label{subsec: setting}

The underlying frictionless financial market consists of $m+1$ assets, where the discounted price process $\bar{S} = (\bar{S}^0, \bar{S}^1, \ldots, \bar{S}^m) = (\bar{S}_t)_{t \in [0,T]}$ is an $\R^{m+1}$-valued $\cF_t$-adapted semimartingale on $(\Omega, \mathcal{F}, \P)$, with $\bar{S}^0 \equiv 1$.

Following \cite{Goll2000}, we adopt the following standard setup:
\begin{itemize}
  \item[$\vcenter{\hbox{\scriptsize$\bullet$}}$] A self-financing trading strategy $\bar{\phi} \in L(\bar{S})$, where $L(\bar{S})$ denotes the space of all $\bar{S}$-integrable predictable processes in the spirit of stochastic It{\^o} integration, is called \emph{admissible}, and denoted by $\bar{\phi} \in \mathfrak{S}$, if $\int_0^t \bar{\phi}_s^\top \dd \bar{S}_s \geq -1$ for all $t \in [0,T]$, $\P$-almost surely, where $(\cdot)^\top$ denotes matrix transposition.
  \item[$\vcenter{\hbox{\scriptsize$\bullet$}}$] The \emph{consumption clock} $K \colon [0,T] \to \R$ is an increasing deterministic c{\`a}dl{\`a}g function, and $\mathfrak{K}$ denotes the set of all non-negative optional processes $\bar{\kappa}$, called the \emph{consumption rate}, such that $\int_0^t \bar{\kappa}_s \dd K_s < \infty$ for all $t \in [0,T]$, $\P$-almost surely. For $\bar{\kappa} \in \mathfrak{K}$, the consumption process is given by $\int_0^\cdot \bar{\kappa}_s \dd K_s$.
  \item[$\vcenter{\hbox{\scriptsize$\bullet$}}$] A pair $(\bar{\phi}, \bar{\kappa}) \in \mathfrak{S} \times \mathfrak{K}$ belongs to the set $\mathfrak{P}$ of \emph{admissible portfolios} if the discounted wealth process $(\bar{V}_t)_{t\in [0,T]}$, given by
  \begin{equation}\label{eq: defn wealth process}
    \bar{V}_t(\bar{\phi}, \bar{\kappa}) := 1 + \int_0^t \bar{\phi}_s^\top \dd \bar{S}_s - \int_0^t \bar{\kappa}_s \dd K_s, \qquad t \in [0,T],
  \end{equation}
  is non-negative, $\P$-almost surely.
\end{itemize}

Typical choices of the consumption clock are $K_t = \1_{\{T\}}(t)$, i.e., consumption only at time $T$, and $K_t = \sum_{s \leq t} \1_{\N}(s) = \lfloor t \rfloor$, i.e., consumption only at integer times.

Occasionally, as will become apparent, we will identify $\bar{S}$ with the $\R^m$-valued process $(\bar{S}^1, \dots, \bar{S}^m)$, and similarly for $\bar{\phi}$.

\medskip

A pair $(\bar{\phi}, \bar{\kappa}) \in \mathfrak{P}$ is called a \emph{log-optimal portfolio} if $(\bar{\phi}, \bar{\kappa})$ maximizes the map $\Phi_{\log} \colon \mathfrak{P} \to \R$, given by
\begin{equation*}
(\bar{\oldphi}, \bar{\chi}) \mapsto \E \bigg[\int_0^T \log (\bar{\chi}_t) \dd K_t \bigg],
\end{equation*}
over all $(\bar{\oldphi}, \bar{\chi}) \in \mathfrak{P}$.

\subsection{The log-optimal portfolio for the investment-consumption problem}

Finding log-optimal portfolios in the context of expected utility maximization is a well-studied mathematical problem; see, e.g., \cite{Korn1997} for a classical introduction. For example, in a general semimartingale framework, the works of Goll and Kallsen \cite{Goll2000, Goll2003} provide explicit formulae in terms of semimartingale characteristics. In the following, we recall the result of \cite{Goll2000} in the case that the discounted price process is modelled by an It{\^o} process.

Let $\bar{W} = (\bar{W}_t)_{t \in [0,T]}$ be a $d$-dimensional Brownian motion defined on $(\Omega, \cF, \P)$ with respect to $(\cF_t)_{t \in [0,T]}$, and suppose that the discounted price process $(\bar{S}_t)_{t \in [0,T]}$ is given by
\begin{equation*}
  \bar{S}_t = s_0 + \int_0^t \bar{b}_s \dd s + \int_0^t \bar{\sigma}_s \dd \bar{W}_s, \qquad t \in [0,T],
\end{equation*}
where $s_0 \in \R^m$, $\bar{b}$ is a predictable, locally integrable $\R^m$-valued process, and $\bar{\sigma}$ is a predictable, locally square integrable $\R^{m \times d}$-valued process such that $\bar{\sigma}_t \bar{\sigma}_t^\top$ is a positive definite $m \times m$-matrix for every $t$, where each coefficient is bounded away from zero.

In the present setting, \cite[Theorem~3.1]{Goll2000}, which formulates the solution to the optimal investment-consumption problem, reads as follows.

\begin{theorem}\label{thm: Goll-Kallsen's main theorem Ito processes}
  Assume that there exists an $\R^m$-valued process $\bar{H} \in L(\bar{S})$ such that
  \begin{equation*}
    \bar{b}_t - \bar{c}_t \bar{H}_t = 0 \qquad \text{with} \qquad \bar{c}_t := \bar{\sigma}_t \bar{\sigma}_t^\top, \qquad t \in [0,T],
  \end{equation*}
  holds $\P \otimes \d t$-almost everywhere, and set
  \begin{align*}
    &\bar{\kappa}_t := \frac{1}{K_T} \bar{\cE} \bigg( \int_0^\cdot \bar{H}^\top_s \dd \bar{S}_s \bigg)_{\hspace{-2pt}t}, \qquad \bar{V}_t := \bar{\kappa}_t (K_T - K_t),\\
    &\bar{\phi}_t^i := \bar{H}_t^i \bar{V}_{t-}, \quad i = 1, \ldots, m, \qquad \bar{\phi}^0_t := \int_0^t \bar{\phi}_s^\top \dd \bar{S}_s - \sum_{i=1}^m \bar{\phi}^i_t \bar{S}^i_t, \qquad t \in [0,T],
  \end{align*}
  where we set $\bar{V}_{0-} := 0$, and $\bar{\cE}$ denotes the stochastic exponential. Then, $(\bar{\phi}, \bar{\kappa}) \in \mathfrak{P}$ is a log-optimal portfolio with discounted wealth process $(\bar{V}_t)_{t \in [0,T]}$.
\end{theorem}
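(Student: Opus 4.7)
The strategy is twofold: first, verify that the proposed construction $(\bar{\phi}, \bar{\kappa})$ is genuinely admissible with wealth process $\bar{V}$; then, prove optimality via the concavity of $\log$ combined with a budget constraint derived from a local martingale argument exploiting the structural condition $\bar{b} = \bar{c}\bar{H}$. For the self-financing verification, set $\bar{X}_t := \bar{\cE}(\int_0^\cdot \bar{H}_s^\top \dd \bar{S}_s)_t$, which is a strictly positive continuous semimartingale with $\dd \bar{X}_t = \bar{X}_{t-} \bar{H}_t^\top \dd \bar{S}_t$. Since $\bar{V}_t = \bar{X}_t(K_T - K_t)/K_T$, integration by parts (with $[\bar{X}, K] = 0$ because $\bar{X}$ is continuous and $K$ has finite variation) yields
\begin{equation*}
\dd \bar{V}_t = \frac{K_T - K_{t-}}{K_T} \bar{X}_{t-} \bar{H}_t^\top \dd \bar{S}_t - \frac{\bar{X}_{t-}}{K_T} \dd K_t = \bar{V}_{t-} \bar{H}_t^\top \dd \bar{S}_t - \bar{\kappa}_t \dd K_t,
\end{equation*}
which is precisely \eqref{eq: defn wealth process} with $\bar{\phi}_t^i = \bar{H}_t^i \bar{V}_{t-}$, and non-negativity of $\bar{V}$ is immediate from $\bar{X} > 0$ and $K_T \geq K_t$.

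For the optimality step, the elementary inequality $\log y \leq \log x + y/x - 1$ for $x, y > 0$, applied with $x = \bar{\kappa}_t$ and $y = \bar{\chi}_t$ and integrated against $\dd K$, reduces matters to verifying the budget constraint
\begin{equation*}
\E\bigg[\int_0^T \frac{\bar{\chi}_t}{\bar{X}_t} \dd K_t\bigg] \leq 1 \qquad \text{for every admissible } (\bar{\oldphi}, \bar{\chi}) \in \mathfrak{P}.
\end{equation*}
To establish this, introduce the candidate state-price density $\bar{Y}_t := 1/\bar{X}_t$; Itô's formula together with $\bar{b} = \bar{c}\bar{H}$ shows that $\dd \bar{Y}_t = -\bar{Y}_t \bar{H}_t^\top \bar{\sigma}_t \dd \bar{W}_t$, so $\bar{Y}$ is a strictly positive local martingale. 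A direct product-rule computation then gives
\begin{equation*}
\dd\big(\bar{Y}_t \bar{V}_t(\bar{\oldphi}, \bar{\chi})\big) = \big(\bar{Y}_t \bar{\oldphi}_t^\top \bar{\sigma}_t - \bar{Y}_t \bar{V}_{t-}(\bar{\oldphi}, \bar{\chi}) \bar{H}_t^\top \bar{\sigma}_t\big) \dd \bar{W}_t - \bar{Y}_{t-} \bar{\chi}_t \dd K_t,
\end{equation*}
so that $\bar{Y} \bar{V}(\bar{\oldphi}, \bar{\chi}) + \int_0^\cdot \bar{Y}_s \bar{\chi}_s \dd K_s$ is a non-negative local martingale, hence a supermartingale. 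Taking expectations at $t = T$ and using $\bar{V}_T(\bar{\oldphi}, \bar{\chi}) \geq 0$ yields the budget constraint.

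The principal technical obstacle will be making the supermartingale step rigorous at the finite horizon $T$ (standard localization and Fatou's lemma) and ensuring the expected log-utility is well-defined for the comparison portfolio---in particular, handling the negative part of $\log \bar{\chi}$ by restricting to competitors with $\E[\int_0^T \log^-(\bar{\chi}_t) \dd K_t] < \infty$, which is the usual convention in this framework. Alternatively, and perhaps most economically, one may simply specialize Theorem~3.1 of \cite{Goll2000} to the present Itô setting: the continuous semimartingale characteristics of $\bar{S}$ are $(B, C, \nu) = (\int_0^\cdot \bar{b}_s \dd s, \int_0^\cdot \bar{c}_s \dd s, 0)$, the general Goll--Kallsen structural condition collapses (in the absence of jumps) to $\bar{b} - \bar{c}\bar{H} = 0$, and the explicit formulae coincide with those stated here.
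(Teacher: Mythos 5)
The paper does not actually prove Theorem~\ref{thm: Goll-Kallsen's main theorem Ito processes}: it records it as a specialization of \cite[Theorem~3.1]{Goll2000} to the It\^o-process setting (the preceding sentence reads ``In the present setting, \cite[Theorem~3.1]{Goll2000} \dots reads as follows''), exactly as you observe in your final paragraph. Your proposal, by contrast, is a self-contained verification via the classical duality argument, and it is essentially correct. The construction step (integration by parts on $\bar{X}_t(K_T-K_t)/K_T$, using that $\bar{X}$ is continuous so $[\bar{X},K]=0$) correctly recovers the budget equation~\eqref{eq: defn wealth process} with the stated $\bar{\phi}$; non-negativity is immediate. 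For optimality, your use of the concavity inequality $\log y \le \log x + y/x - 1$ reduces everything to the budget constraint $\E[\int_0^T \bar{\chi}_t/\bar{X}_t \,\dd K_t]\le 1$, and your computation that $\bar{Y}:=1/\bar{X}$ is a positive local martingale hinges on the symmetry of $\bar{c}_t$ and the structural condition $\bar{b}=\bar{c}\bar{H}$, so that the drift in $\dd(\bar{Y}\bar{V}(\bar{\oldphi},\bar{\chi}))$ cancels, leaving a non-negative local martingale, hence a supermartingale. That argument is sound. What the two routes buy: the paper's citation to \cite{Goll2000} yields the result at the full generality of semimartingale characteristics and avoids re-deriving standard material, whereas your direct proof is more transparent in the It\^o setting, makes the role of the state-price density explicit, and surfaces the integrability caveat (restricting competitors to those with $\E[\int_0^T \log^-(\bar{\chi}_t)\,\dd K_t]<\infty$) that is otherwise buried in the semimartingale framework. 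One minor point you may wish to make explicit: the identity $\bar{V}_0=1$ in \eqref{eq: defn wealth process} matches $\bar{\kappa}_0(K_T-K_0)$ only under the implicit normalization $K_0=0$, which is satisfied by the paper's canonical examples.
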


\begin{remark}\label{remark: Goll Kallsen}
  If the price process $\bar{S}$ is given by a linear stochastic differential equation, i.e., if $\bar{b}^i = \bar{S}^i \hat{b}^i$ and $\bar{\sigma}^{i,j} = \bar{S}^i \hat{\sigma}^{ij}$, for some predictable $\hat{b}^i$, $\hat{\sigma}^{ij}$, $i = 1, \ldots, m$, $j = 1, \ldots, d$, then the previous theorem can be rephrased as follows; see also \cite[Example~4.2]{Goll2000}.

  Assume that there exists a predictable, $\R^m$-valued process $\bar{h}$ such that
  \begin{equation*}
    \hat{b}_t - \hat{c}_t \bar{h}_t = 0 \qquad \text{with} \qquad \hat{c}_t = \hat{\sigma}_t \hat{\sigma}_t^\top, \qquad t \in [0,T],
  \end{equation*}
  holds $\P \otimes \d t$-almost everywhere, and set
  \begin{equation*}
    \bar{H}^i_t := \frac{\bar{h}^i_t}{\bar{S}^i_t}, \qquad t \in [0,T], \qquad i = 1, \ldots, m,
  \end{equation*}
  and $\bar{\kappa}$, $\bar{V}$, $\bar{\phi}^i$, $i = 0, \ldots, m$, as defined in Theorem~\ref{thm: Goll-Kallsen's main theorem Ito processes}. Then, $(\bar{\phi},\bar{\kappa}) \in \mathfrak{P}$ is a log-optimal portfolio with discounted wealth process $(\bar{V}_t)_{t \in [0,T]}$.
\end{remark}

\section{Pathwise stochastic analysis}\label{sec: pathwise analysis}

Developing a methodology that allows for a pathwise analysis of optimal portfolios requires, unsurprisingly, an underlying pathwise framework. To that end, we rely on the theory of rough paths---see, e.g., \cite{FrizHairer2020} for an introductory textbook---and, more specifically, the so-called Property \textup{(RIE)} as introduced in \cite{Perkowski2016, Allan2023b}, which provides a suitable foundation for the use of rough path theory in mathematical finance. We start by recalling some essentials from the theory of c{\`a}dl{\`a}g rough paths. For a more comprehensive introduction we refer to \cite{Friz2017,Friz2018}.

\subsection{Essentials of rough path theory}\label{subsec: rough path theory}

A \emph{partition} $\mathcal{P}$ of an interval $[s,t]$ is a finite set of points between and including the points $s$ and $t$, i.e., $\mathcal{P} = \{s = u_0 < u_1 < \cdots < u_N = t\}$ for some $N \in \N$, and its mesh size is denoted by $|\mathcal{P}| := \max\{|u_{i+1} - u_i| \, : \, i = 0, \ldots, N-1\}$.

Throughout, we let $T > 0$ be a fixed finite time horizon. We let $\Delta_T := \{(s,t) \in [0,T]^2 \, : \, s \leq t\}$ denote the standard $2$-simplex. A function $w \colon \Delta_T \to [0,\infty)$ is called a \emph{control function} if it is superadditive, in the sense that $w(s,u) + w(u,t) \leq w(s,t)$ for all $0 \leq s \leq u \leq t \leq T$. For two vectors $x = (x^1, \ldots, x^d)^\top, y = (y^1, \ldots, y^d)^\top \in \R^d$ we use the usual tensor product
\begin{equation*}  
  x \otimes y := (x^i y^j)_{i, j = 1, \ldots, d} \in \R^{d \times d}.
\end{equation*}
Whenever $(B,\|\hspace{0.5pt}\cdot\hspace{0.5pt}\|)$ is a normed space and $f, g \colon B \to \R$ are two functions on $B$, we shall write $f \lesssim g$ or $f \leq C g$ to mean that there exists a constant $C > 0$ such that $f(x) \leq C g(x)$ for all $x \in B$. The constant $C$ may depend on the normed space, e.g., through its dimension or regularity parameters.

\medskip

For two vector spaces, the space of linear maps from $E_1 \to E_2$ is denoted by $\cL(E_1; E_2)$, and we write, e.g., $C^k_b = C^k_b(\R^n;\mathcal{L}(\R^d;\R^m))$ for the space of $m$-times differentiable (in the Fr{\'e}chet sense) functions $f \colon \R^n \to \mathcal{L}(\R^d;\R^m)$ such that $f$ and all its derivatives up to order $k$ are continuous and bounded. We equip this space with the norm
\begin{equation*}
  \|f\|_{C^k_b} := \|f\|_\infty + \|\D f\|_\infty + \cdots + \|\D^k f\|_\infty,
\end{equation*}
where $\D^r f$ denotes the $r$-th order derivative of $f$, and $\|\hspace{0.5pt}\cdot\hspace{0.5pt}\|_{\infty}$ denotes the supremum norm on the corresponding spaces of operators.

\medskip

For a normed space $(E,|\hspace{0.5pt}\cdot\hspace{0.5pt}|)$, we let $D([0,T];E)$ denote the set of c{\`a}dl{\`a}g (right-continuous with left-limits) paths from $[0,T] \to E$. For $X \in D([0,T];E)$, the supremum norm of the path $X$ is given by
\begin{equation*}
  \|X\|_{\infty} := \sup_{t \in [0,T]} |X_t|,
\end{equation*}
and, for $p \geq 1$, the $p$-variation of the path $X$ is given by
\begin{equation*}
  \|X\|_p := \|X\|_{p,[0,T]} \qquad \text{with} \qquad \|X\|_{p,[s,t]} := \bigg( \sup_{\mathcal{P} \subset [s,t]} \sum_{[u,v] \in \mathcal{P}} |X_v - X_u|^p \bigg)^{\hspace{-2pt}\frac{1}{p}}, \quad (s,t) \in \Delta_T,
\end{equation*}
where the supremum is taken over all possible partitions $\mathcal{P}$ of the interval $[s,t]$. We recall that, given a path $X$, we have that $\|X\|_p < \infty$ if and only if there exists a control function $w$ such that\footnote{Here and throughout, we adopt the convention that $\frac{0}{0} := 0$.}
\begin{equation*}
  \sup_{(u,v) \in \Delta_T} \frac{|X_v - X_u|^p}{w(u,v)} < \infty.
\end{equation*}
We write $D^p = D^p([0,T];E)$ for the space of paths $X \in D([0,T];E)$ which satisfy $\|X\|_p < \infty$.

\medskip

For a path $X \in D([0,T];E)$, we will use the shorthand notation:
\begin{equation*}
  X_{s,t} := X_t - X_s \quad \text{and} \quad X_{t-} := \lim_{u \nearrow t} X_u, \qquad \text{for} \quad (s,t) \in \Delta_T.
\end{equation*}
For $r \geq 1$ and a two-parameter function $\mathbb{X} \colon \Delta_T \to E$, we similarly define
\begin{equation*}
  \|\mathbb{X}\|_r := \|\mathbb{X}\|_{r,[0,T]} \qquad \text{with} \qquad \|\mathbb{X}\|_{r,[s,t]} := \bigg(\sup_{\mathcal{P} \subset [s,t]} \sum_{[u,v] \in \mathcal{P}} |\mathbb{X}_{u,v}|^r\bigg)^{\hspace{-2pt}\frac{1}{r}}, \quad (s,t) \in \Delta_T.
\end{equation*}
We write $D_2^r = D_2^r(\Delta_T;E)$ for the space of all functions $\mathbb{X} \colon \Delta_T \to E$ which satisfy $\|\X\|_r < \infty$, and are such that the maps $s \mapsto \mathbb{X}_{s,t}$ for fixed $t$, and $t \mapsto \mathbb{X}_{s,t}$ for fixed $s$, are both c{\`a}dl{\`a}g.

\medskip

For $p \in [2,3)$, a pair $\bX = (X,\X)$ is called a \emph{c{\`a}dl{\`a}g rough path} over $\R^d$ if
\begin{enumerate}
  \item[(i)] $X \in D^p([0,T];\R^d)$ and $\X \in D_2^{\frac{p}{2}}(\Delta_T;\R^{d \times d})$, and
  \item[(ii)] Chen's relation: $\mathbb{X}_{s,t} = \mathbb{X}_{s,u} + \mathbb{X}_{u,t} + X_{s,u} \otimes X_{u,t}$ holds for all $0 \leq s \leq u \leq t \leq T$.
\end{enumerate}
In component form, condition (ii) states that $\mathbb{X}^{ij}_{s,t} = \mathbb{X}^{ij}_{s,u} + \mathbb{X}^{ij}_{u,t} + X^i_{s,u} X^j_{u,t}$ for every $i$ and~$j$. We will denote the space of c{\`a}dl{\`a}g rough paths by $\cD^p = \cD^p([0,T];\R^d)$. On the space $\cD^p([0,T];\R^d)$, we use the natural seminorm
\begin{equation*}
  \|\bX\|_{p} := \|\bX\|_{p,[0,T]} \qquad \text{with} \qquad \|\bX\|_{p,[s,t]} := \|X\|_{p,[s,t]} + \|\X\|_{\frac{p}{2},[s,t]}
\end{equation*}
for $(s,t) \in \Delta_T$, and the induced distance
\begin{equation*}
  \|\bX;\tbX\|_p :=  \|\bX;\tbX\|_{p,[0,T]} \qquad \text{with} \qquad \|\bX;\tbX\|_{p,[s,t]} := \|X - \tX\|_{p,[s,t]} + \|\X - \tbbX\|_{\frac{p}{2},[s,t]},
\end{equation*}
whenever $\bX = (X,\X), \tbX = (\tX,\tbbX) \in \cD^p([0,T];\R^d)$. Recall that the rough path $\bX = (X,\X)$ above a path $X$ is not unique.

\medskip

Let $X \in D^p([0,T];\R^d)$ for some $p \in [2,3)$. We say that a pair $(Y,Y')$ is a \emph{controlled path} (with respect to $X$), if
\begin{equation*}
Y \in D^p([0,T];E), \quad Y' \in D^p([0,T];\cL(\R^d;E)) \quad \text{and} \quad R^Y \in D_2^{\frac{p}{2}}(\Delta_T;E),
\end{equation*}
where $R^Y$ is defined by
\begin{equation*}
  Y_{s,t} = Y'_s X_{s,t} + R^Y_{s,t} \qquad \text{for all} \quad (s,t) \in \Delta_T.
\end{equation*}
We write $\cV^p_X = \cV^p_X([0,T];E)$ for the space of $E$-valued controlled paths, which becomes a Banach space when equipped with the norm $(Y,Y') \mapsto \|Y,Y'\|_{\crpX}$, where
\begin{equation*}
  \|Y,Y'\|_{\crpX} := \|Y,Y'\|_{\crpX,[0,T]},
\end{equation*}
with
\begin{equation*}
  \|Y,Y'\|_{\crpX,[s,t]} := |Y_s| + |Y'_s| + \|Y'\|_{p,[s,t]} + \|R^Y\|_{\frac{p}{2},[s,t]},
\end{equation*}
for $(s,t) \in \Delta_T$. It is straightforward to see that
\begin{equation*}
  \|Y\|_p \leq \|Y'\|_\infty \|X\|_p + \|R^Y\|_{\frac{p}{2}} \qquad \text{and} \qquad \|Y'\|_\infty \leq |Y'_0| + \|Y'\|_p,
\end{equation*}
so that in particular
\begin{equation}\label{eq: bound on sup-norm}
  \|Y\|_\infty \leq (1 + \|X\|_p) \|Y,Y'\|_{\crpX}.
\end{equation}

Throughout the paper, for notational brevity, we will usually leave the Gubinelli derivative $Y'$ as implicit, and write simply $Y \in \crpX$, and use $\|Y\|_{\crpX}$ as shorthand for $\|Y,Y'\|_{\crpX}$.

We further introduce the standard ``distance''
\begin{equation*}
  \|Y;\tY\|_{\crpX,\crptX} := \|Y;\tY\|_{\crpX,\crptX,[0,T]}
\end{equation*}
with
\begin{equation*}
\|Y;\tY\|_{\crpX,\crptX,[s,t]} := |Y_s - \tY_s| + |Y'_s - \tY'_s| + \|Y' - \tY'\|_{p,[s,t]} + \|R^Y - R^{\tY}\|_{\frac{p}{2},[s,t]}
\end{equation*}
for $(s,t) \in \Delta_T$, whenever $(Y,Y') \in \crpX$, $(\tY,\tY') \in \crptX$. Note that, in general, $\crpX$ and $\crptX$ are different Banach spaces; if $X = \tX$, we will write $\|Y;\tY\|_{\crpX} := \|Y;\tY\|_{\crpX,\crpX}$.

We also note that
\begin{equation}\label{eq: bound on p var norm}
\|Y - \tY\|_p \leq C (\|Y;\tY\|_{\crpX,\crptX} + \|X - \tX\|_p)
\end{equation}
for some constant $C$ which depends only on $\|Y\|_{\crpX}$, $\|\tY\|_{\crptX}$, $\|X\|_p$ and $\|\tX\|_p$.

\medskip

Given $p \in (2,3)$, $\bX = (X,\X) \in \cD^p([0,T];\R^d)$ and $(Y,Y') \in \crpX([0,T];\cL(\R^d;\R^m))$, the (forward) rough integral
\begin{equation}\label{eq: rough integral}
  \int_s^t Y_u \dd \bX_u := \lim_{|\cP^n| \to 0} \sum_{[u,v] \in \cP^n} (Y_u X_{u,v} + Y'_u \X_{u,v}), \qquad (s,t) \in \Delta_T,
\end{equation}
exists (in the classical mesh Riemann--Stieltjes sense), where the limit is taken along any sequence of partitions $(\cP^n)_{n \in \N}$ of the interval $[s,t]$ such that $|\cP^n| \to 0$ as $n \to \infty$. To be precise, in writing the product $Y_u X_{u,v}$, we apply the operator $Y_u \in \cL(\R^d;\R^m)$ to $X_{u,v} \in \R^d$, and in writing the product $Y'_u \X_{u,v}$, we use the natural identification of $\cL(\R^d;\cL(\R^d;\R^m))$ with $\cL(\R^d \otimes \R^d;\R^m)$. The rough integral comes with the estimate
\begin{equation*}
  \bigg| \int_s^t Y_u \dd \bX_u - Y_s X_{s,t} - Y'_s \X_{s,t} \bigg| \leq C \Big(\|R^Y\|_{\frac{p}{2},[s,t)} \|X\|_{p,[s,t]} + \|Y'\|_{p,[s,t)} \|\X\|_{\frac{p}{2},[s,t]}\Big)
\end{equation*}
for some constant $C$ which depends only on $p$ (see, e.g., \cite[Proposition~2.4 and Remark~2.5]{Allan2023b}), where
\begin{equation*}
  \|Y'\|_{p,[s,t)} := \sup_{u < t} \|Y'\|_{p,[s,u]} \qquad \text{and} \qquad \|R^Y\|_{\frac{p}{2},[s,t)} := \sup_{u < t} \|R^Y\|_{\frac{p}{2},[s,u]}.
\end{equation*}
This implies that $(\int_0^\cdot Y_u \dd \bX_u, Y) \in \crpX$ is a controlled path with respect to $X$, and satisfies
\begin{equation}\label{eq: estimate for rough integral}
  \bigg\|\int_0^\cdot Y_u \dd \bX_u\bigg\|_{\crpX} \leq C,
\end{equation}
for some constant $C$ depending only on $p, \|Y\|_{\crpX}$ and $\|\bX\|_p$.

\medskip

Given a rough path $\bX = (X,\X) \in \cD^p([0,T];\R^d)$ with $p \in [2,3)$, there exists a unique controlled path $(Y,Y') \in \cV^p_X([0,T];\R^m)$ satisfying the rough differential equation (RDE)
\begin{equation*}
  Y_t = y_0 + \int_0^t b(s,Y_s) \dd s + \int_0^t \sigma(s,Y_s) \dd \bX_s, \qquad t \in [0,T],
\end{equation*}
if $b \in C_b^2(\R^{m+1};\R^m)$ and $\sigma \in C_b^3(\R^{m+1};\mathcal{L}(\R^d;\R^m))$; see, e.g., \cite[Theorem~2.2]{Allan2023c}.

\subsection{Pathwise It{\^o}-type integration}\label{subsec: pathwise ito integration}

Rough path theory provides a pathwise approach to stochastic integration and stochastic differential equations. In particular, it allows to recover the stochastic It\^o and Stratonovich integrals by choosing the corresponding rough path lift of a semimartingale. Consequently, from a financial modelling perspective, choosing the rough path above a given path without care could create arbitrage. Moreover, the definition of the rough integral~\eqref{eq: rough integral} lacks a canonical interpretation in mathematical finance. To overcome these issues, we rely on the so-called Property \textup{(RIE)}, as introduced in \cite{Perkowski2016,Allan2023b}.

\begin{RIE}
  Let $p \in (2,3)$ and let $\cP^n = \{0 = t^n_0 < t^n_1 < \cdots < t^n_{N_n} = T\}$, $n \in \N$, be a sequence of partitions of the interval $[0,T]$ such that $|\mathcal{P}^n| \to 0$ as $n \to \infty$. For $X \in D([0,T];\R^d)$, and each $n \in \N$, we define $X^n \colon [0,T] \to \R^d$ by
  \begin{equation*}
    X^n_t = X_T \1_{\{T\}}(t) + \sum_{k=0}^{N_n - 1} X_{t^n_k} \1_{[t^n_k,t^n_{k+1})}(t), \qquad t \in [0,T].
  \end{equation*}
  We assume that:
  \begin{enumerate}
    \item[(i)] the sequence of paths $(X^n)_{n \in \N}$ converges uniformly to $X$ as $n \to \infty$,
    \item[(ii)] the Riemann sums $\int_0^t X^n_u \otimes \d X_u := \sum_{k=0}^{N_n-1} X_{t^n_k} \otimes X_{t^n_k \wedge t,t^n_{k+1} \wedge t}$ converge uniformly as $n \to \infty$ to a limit, which we denote by $\int_0^t X_u \otimes \d X_u$, $t \in [0,T]$,
    \item[(iii)] and there exists a control function $w$ such that
	\begin{equation}\label{eq: RIE inequality}
      \sup_{(s,t) \in \Delta_T} \frac{|X_{s,t}|^p}{w(s,t)} + \sup_{n \in \N} \, \sup_{0 \leq k < \ell \leq N_n} \frac{|\int_{t^n_k}^{t^n_\ell} X^n_u \otimes \d X_u - X_{t^n_k} \otimes X_{t^n_k,t^n_\ell}|^{\frac{p}{2}}}{w(t^n_k,t^n_\ell)} \leq 1.
    \end{equation}
  \end{enumerate}
\end{RIE}

We say that a path $X \in D([0,T];\R^d)$ satisfies Property \textup{(RIE)} relative to $p$ and $(\mathcal{P}^n)_{n \in \N}$, if $p$, $(\mathcal{P}^n)_{n \in \N}$ and $X$ together satisfy Property \textup{(RIE)}.

\medskip

It is known that, if a path $X \in D([0,T];\R^d)$ satisfies Property \textup{(RIE)}, then $X$ extends canonically to a rough path $\mathbf{X} = (X,\mathbb{X}) \in \mathcal{D}^p([0,T];\R^d)$, where the lift $\X$ is defined by
\begin{equation}\label{eq: RIE rough path}
  \mathbb{X}_{s,t} := \int_s^t X_u \otimes \d X_u - X_s \otimes (X_t - X_s), \qquad (s,t) \in \Delta_T,
\end{equation}
with $\int_s^t X_u \otimes \d X_u := \int_0^t X_u \otimes \d X_u - \int_0^s X_u \otimes \d X_u$, and the existence of the integral $\int_0^t X_u \otimes \d X_u$ is ensured by condition (ii) of Property \textup{(RIE)}; see \cite[Lemma~2.13]{Allan2023b}. When assuming Property \textup{(RIE)} for a path $X$, we will always work with the rough path $\mathbf{X} = (X,\mathbb{X})$ defined via~\eqref{eq: RIE rough path}, and note that $\mathbf{X} = (X,\mathbb{X})$ corresponds to the It{\^o} rough path lift of a stochastic process, since the ``iterated integral'' $\mathbb{X}$ is given as a limit of left-point Riemann sums, analogously to the stochastic It{\^o} integral.

\medskip

Property \textup{(RIE)} not only ensures the existence of a suitable rough path lift of a path, but also allows the rough integral to be expressed as a classical limit of Riemann sums. Consequently, the rough integral possesses the natural interpretation in a financial context as the capital process of a portfolio. The next theorem is a slight generalization of \cite[Theorem~2.15]{Allan2023b}.

\begin{theorem}\label{thm: rough int against controlled path under RIE}
Let $p \in (2,3)$, and let $\cP^n = \{0 = t^n_0 < t^n_1 < \cdots < t^n_{N_n} = T\}$, $n \in \N$, be a sequence of partitions such that $|\cP^n| \to 0$ as $n \to \infty$. Suppose that $X \in D([0,T];\R^d)$ satisfies Property \textup{(RIE)} relative to $p$ and $(\cP^n)_{n \in \N}$, and let $\bX$ be the canonical rough path lift of $X$, as constructed in \eqref{eq: RIE rough path}. Let $(F,F'), (G,G') \in \crpX$ be controlled paths with respect to $X$, and suppose that $J_F \subseteq \liminf_{n \to \infty} \cP^n := \bigcup_{m \in \N} \bigcap_{n \geq m} \cP^n$, where $J_F := \{t \in (0,T] : F_{t-} \neq F_t\}$ denotes the set of jump times of $F$. Then, the limit
  \begin{equation*}
    \int_0^t F_u \dd G_u := \lim_{n \to \infty} \sum_{k=0}^{N_n-1} F_{t^n_k} G_{t^n_k \wedge t,t^n_{k+1} \wedge t},
  \end{equation*}
  exists, where the convergence holds uniformly for $t \in [0,T]$, and it coincides with the rough integral of $(F,F')$ against $(G,G')$, as defined in \eqref{eq: integration against controlled path}.
\end{theorem}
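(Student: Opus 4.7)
My plan is to identify the rough integral $\int_0^t F_u \dd G_u$ as the limit of a compensated Riemann sum via the c{\`a}dl{\`a}g sewing lemma, and then to show that, under Property~\textup{(RIE)}, the discrepancy between the uncompensated Riemann sums $S_n(t) := \sum_k F_{t_k^n} G_{t_k^n \wedge t, t_{k+1}^n \wedge t}$ and the compensated ones vanishes uniformly in $t$. This mirrors the strategy used for the original Theorem~2.15 in \cite{Allan2023b} (which treats the case $G = X$), with additional bookkeeping to accommodate the controlled-path structure of $G$.

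For the first step, I would apply the c{\`a}dl{\`a}g sewing lemma to the germ
\[
\Xi_{s,t} := F_s G_{s,t} + F'_s G'_s \X_{s,t}.
\]
A direct computation using Chen's relation together with the identity $R^G_{u,v} - R^G_{u,m} - R^G_{m,v} = G'_{u,m} X_{m,v}$ yields
\[
\delta \Xi_{u,m,v} = -F'_u G'_{u,m} X_{u,m} \otimes X_{m,v} - R^F_{u,m} G'_m X_{m,v} - F_{u,m} R^G_{m,v} - (F'G')_{u,m} \X_{m,v},
\]
each term being of joint order $w(u,v)^{3/p}$ with $3/p > 1$, so the sewing lemma applies. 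Under the hypothesis $J_F \subseteq \liminf_n \cP^n$, the left-endpoint partition sums $\sum_k \Xi_{t_k^n \wedge t, t_{k+1}^n \wedge t}$ converge uniformly in $t$ to the rough integral $\int_0^t F_u \dd G_u$ as defined in \eqref{eq: integration against controlled path}.

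For the second step, the residual between the uncompensated and compensated sums equals $-\sum_k F'_{t_k^n} G'_{t_k^n} \X_{t_k^n \wedge t, t_{k+1}^n \wedge t}$. Introducing the c{\`a}dl{\`a}g path $\lambda_n(t) := \sum_{k : t_{k+1}^n \leq t} \X_{t_k^n, t_{k+1}^n}$, the definition \eqref{eq: RIE rough path} of $\X$ gives the telescoping identity
\[
\lambda_n(t) = \int_0^{\pi_n(t)} X_u \otimes \dd X_u - \int_0^{\pi_n(t)} X^n_u \otimes \dd X_u, \qquad \pi_n(t) := \max\{t_k^n \leq t\}.
\]
Condition~(ii) of Property~\textup{(RIE)} yields $\|\lambda_n\|_\infty \to 0$, while condition~(iii) combined with the bound $|\X_{s,t}|^{\frac{p}{2}} \leq w(s,t)$ gives the uniform variation estimate $\|\lambda_n\|_{\frac{p}{2}} \leq C$. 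Since $p < 3$, interpolation between $\|\lambda_n\|_\infty$ and $\|\lambda_n\|_{p/2}$ produces $\|\lambda_n\|_q \to 0$ for some $q \in (\tfrac{p}{2}, \tfrac{p}{p-1})$ satisfying $\tfrac{1}{p} + \tfrac{1}{q} > 1$. Setting $H := F'G' \in D^p$ and applying Young's inequality for Young integrals to $H$ and $\lambda_n$ yields the uniform bound
\[
\Big| \sum_k H_{t_k^n} \X_{t_k^n \wedge t, t_{k+1}^n \wedge t} \Big| \lesssim \|H\|_\infty \|\lambda_n\|_\infty + \|H\|_p \|\lambda_n\|_q \to 0,
\]
up to a boundary correction $H_{\pi_n(t)} \X_{\pi_n(t), t}$ of order $w(\pi_n(t), t)^{\frac{2}{p}} \to 0$. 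Combining with the first step gives $S_n(t) \to \int_0^t F_u \dd G_u$ uniformly in $t$.

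The most delicate point will be the c{\`a}dl{\`a}g sewing application in the first step: the hypothesis $J_F \subseteq \liminf_n \cP^n$ is essential to ensure that the left-endpoint Riemann sums of $\Xi$ converge to the sewing limit, rather than to a version missing the jumps of $F$. A secondary technicality is to maintain uniform (rather than pointwise) convergence throughout, which is handled by the quantitative bounds supplied by the sewing lemma and by Young's inequality.
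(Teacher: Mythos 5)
Your approach is genuinely different from the paper's, which simply invokes \cite[Theorem~2.15]{Allan2023b}: that proof expresses the left-point Riemann sum as a rough integral of the piecewise-constant approximation $(F^n,F')$ against $(G,G')$ relative to the cross rough path $\bX^n = (X,X^n,\X^n)$, and then applies the stability of rough integrals together with the uniform convergence $F^n \to F$ and $\X^n \to \X$ provided by the jump hypothesis and Property~\textup{(RIE)}. Your decomposition instead strips off the second-order compensator and attacks the residual $\sum_k F'_{t_k^n} G'_{t_k^n} \X_{t_k^n \wedge t, t_{k+1}^n \wedge t}$ directly via the auxiliary path $\lambda_n$ and Young interpolation; that second step is a neat self-contained observation (the identity $\lambda_n(t^n_\ell)-\lambda_n(t^n_k) = \X_{t^n_k,t^n_\ell}-\X^n_{t^n_k,t^n_\ell}$ and the choice of $q \in (\tfrac{p}{2},\tfrac{p}{p-1})$, which exists precisely because $p<3$, are both correct).

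The substantive gap is in Step~1, which is where the hypothesis $J_F \subseteq \liminf_n \cP^n$ really does its work. You assert that the compensated sums $\sum_k \Xi_{t^n_k\wedge t, t^n_{k+1}\wedge t}$ converge \emph{uniformly} in $t$ to the rough integral, citing the sewing lemma, but you do not explain how uniformity is obtained. Pointwise convergence for each fixed $t$ is immediate from the mesh Riemann--Stieltjes definition in~\eqref{eq: integration against controlled path}; that part needs no jump hypothesis. Uniformity, however, requires a telescoping argument using the local estimate in Lemma~\ref{lem: integration against controlled path}, and the resulting bound involves $\max_k$ of quantities such as $\|R^F\|_{\frac{p}{2},[t^n_k,t^n_{k+1})}$ and $\|X\|_{p,[t^n_k,t^n_{k+1}]}$; these do \emph{not} tend to zero if $F$ or $X$ jumps strictly inside a partition subinterval. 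This is exactly what $J_F \subseteq \liminf_n \cP^n$ (combined with $J_X \subseteq \liminf_n \cP^n$, which is forced by condition~(i) of Property~\textup{(RIE)}) is needed to exclude; the argument needs to be made, not just asserted. Your stated rationale --- that without the hypothesis the Riemann sums would ``converge to a version missing the jumps of $F$'' --- is not correct, since the MRS limit is partition-independent by definition; the issue is uniformity, not correctness of the pointwise limit. A secondary, more easily repaired point: the boundary term $H_{\pi_n(t)}\X_{\pi_n(t),t}$ in Step~2 is not $O(w(\pi_n(t),t)^{2/p})$ uniformly, since the control $w$ may itself jump at a non-partition time; uniform vanishing follows instead from the fact that $\X_{t-,t}=0$ under Property~\textup{(RIE)} together with the c\`adl\`ag regularity of $\X$.
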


The proof of Theorem~\ref{thm: rough int against controlled path under RIE} follows the proof of \cite[Theorem~2.15]{Allan2023b} almost verbatim. The only difference is that, rather than using \cite[Proposition~2.14]{Allan2023b} to establish the uniform convergence of $F^n$ to $F$, we can instead use \cite[Proposition~B.1]{Allan2023c} (which does not require the sequence of partitions to be nested).

\medskip

A crucial observation for our pathwise analysis of log-optimal portfolios is that, if a path $X$ satisfies Property \textup{(RIE)}, then suitable controlled paths with respect to $X$ do as well. This is made precise in Theorem~\ref{thm: controlled path satisfies RIE} below. In particular, a corollary of this result is that if $X$ satisfies Property \textup{(RIE)}, and $Y$ is the solution to an RDE driven by the canonical rough path lift of $X$, then $Y$ itself satisfies Property \textup{(RIE)} with respect to the same sequence of partitions.

\begin{theorem}\label{thm: controlled path satisfies RIE}
  Suppose that $X \in D([0,T];\R^d)$ satisfies Property \textup{(RIE)} relative to some $p \in (2,3)$ and a sequence of partitions $\cP^n = \{0 = t^n_0 < t^n_1 < \cdots < t^n_{N_n} = T\}$, $n \in \N$. Let $(Y,Y') \in \cV^p_X$ be a controlled path such that $J_Y \subseteq \liminf_{n \to \infty} \cP^n$, where $J_Y := \{t \in (0,T] : Y_{t-} \neq Y_t\}$ denotes the set of jump times of $Y$. Then $Y$ satisfies Property \textup{(RIE)} relative to $p$ and $(\cP^n)_{n \in \N}$.
\end{theorem}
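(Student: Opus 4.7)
The plan is to verify each of the three conditions of Property~\textup{(RIE)} for $Y$ relative to $p$ and the same sequence $(\cP^n)_{n \in \N}$.

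Condition~(i), the uniform convergence $Y^n \to Y$, follows directly from the hypothesis $J_Y \subseteq \liminf_{n \to \infty} \cP^n$ via \cite[Proposition~B.1]{Allan2023c}: at continuity points of $Y$ the vanishing mesh gives pointwise convergence, while every jump time $t \in J_Y$ eventually belongs to each $\cP^n$, so the left-point approximation $Y^n$ captures the jump. For condition~(ii), I view the tensor map $L_u := Y_u \otimes (\cdot) \in \cL(\R^m;\R^m \otimes \R^m)$ as a controlled path with respect to $X$, with Gubinelli derivative $L'_u(\xi)(v) := (Y'_u \xi) \otimes v$ and remainder $R^L_{s,t}(v) = R^Y_{s,t} \otimes v$. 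Since $J_L = J_Y \subseteq \liminf_{n \to \infty} \cP^n$, Theorem~\ref{thm: rough int against controlled path under RIE} applied with $F = L$ and $G = Y$ yields the uniform convergence of the Riemann sums $\sum_k Y_{t^n_k} \otimes Y_{t^n_k \wedge t, t^n_{k+1} \wedge t}$ on $[0,T]$, and I take this limit as the definition of $\int_0^t Y_u \otimes \d Y_u$ required by condition~(ii).

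Condition~(iii) is the main technical work. The first bound $|Y_{s,t}|^p \leq w_Y(s,t)$ is immediate from $|Y_{s,t}| \leq |Y'_s||X_{s,t}| + |R^Y_{s,t}|$ combined with the RIE control for $X$ and a $\frac{p}{2}$-variation control for $R^Y$. For the discrete iterated-integral bound, I would start from the telescoping identity
\[
\int_{t^n_k}^{t^n_\ell} Y^n_u \otimes \d Y_u - Y_{t^n_k} \otimes Y_{t^n_k, t^n_\ell} = \sum_{j=k}^{\ell-1} Y_{t^n_k, t^n_j} \otimes Y_{t^n_j, t^n_{j+1}},
\]
substitute $Y_{a,b} = Y'_a X_{a,b} + R^Y_{a,b}$ into both factors, and further split $Y'_{t^n_j} = Y'_{t^n_k} + Y'_{t^n_k, t^n_j}$ on the right factor to isolate the leading sum $(Y'_{t^n_k})^{\otimes 2} \sum_j X_{t^n_k, t^n_j} \otimes X_{t^n_j, t^n_{j+1}}$. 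Using the identity $\sum_j X_{t^n_k, t^n_j} \otimes X_{t^n_j, t^n_{j+1}} = \int_{t^n_k}^{t^n_\ell} X^n_u \otimes \d X_u - X_{t^n_k} \otimes X_{t^n_k, t^n_\ell}$ together with the RIE hypothesis on $X$, this leading contribution is bounded by $\|Y'\|_\infty^2 \, w(t^n_k, t^n_\ell)^{2/p}$. The three remaining cross-sums, of the shapes $X \otimes (Y' \cdot X)$, $X \otimes R^Y$ (and $R^Y \otimes X$), and $R^Y \otimes R^Y$, are each summable by standard Young/$p$-variation estimates against a control built from the $p$-variation controls of $X$ and $Y'$ and the $\frac{p}{2}$-variation control of $R^Y$. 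Gathering everything yields a single control function $w_Y$ dominating both halves of \eqref{eq: RIE inequality} up to a multiplicative constant depending on $p$, $\|X\|_p$ and $\|Y\|_{\crpX}$; a final rescaling gives the unit bound required by Property~\textup{(RIE)}.

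The main obstacle is the bookkeeping in condition~(iii): the four cross-sums must be controlled uniformly in $n$ by one common control function with explicit dependence on the controlled-path norm. The RIE information about $X$ enters at exactly one point, namely the pure ``$X \otimes X$'' leading term; the remainder is pure controlled-path arithmetic.
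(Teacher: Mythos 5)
Your treatment of conditions~(i) and~(ii) is correct and coincides with the paper's: uniform convergence $Y^n \to Y$ via \cite[Proposition~B.1]{Allan2023c}, and uniform convergence of the Riemann sums via Theorem~\ref{thm: rough int against controlled path under RIE} applied to the tensoring controlled pair, exactly as in the paper. The divergence, and the gap, is in condition~(iii).

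After the telescoping, the anchored cross-sums do not satisfy the Young/sewing estimates you invoke. The obstruction: for $p>2$ the factors $X_{t^n_j,t^n_{j+1}}$ and $R^Y_{t^n_j,t^n_{j+1}}$ have finite $p$- and $\tfrac{p}{2}$-variation but not finite $1$-variation, so a crude termwise bound of, say, $\sum_j X_{t^n_k,t^n_j}\otimes R^Y_{t^n_j,t^n_{j+1}}$ need not stay bounded as $n\to\infty$; one must sew. But the germs anchored at the fixed left endpoint $t^n_k$ have sewing defects with pieces of order $w(u,v)^{2/p}$, and $2/p<1$. For instance for $\Xi_{u,v}:=R^Y_{t^n_k,u}\otimes R^Y_{u,v}$, using the Chen-type identity $R^Y_{u,v}=R^Y_{u,m}+R^Y_{m,v}+Y'_{u,m}X_{m,v}$, the defect $\delta\Xi_{u,m,v}$ contains the term $R^Y_{t^n_k,u}\otimes Y'_{u,m}X_{m,v}$, whose prefactor $R^Y_{t^n_k,u}$ is bounded on $[t^n_k,t^n_\ell]$ but does not shrink with $|v-u|$, leaving a defect $\lesssim w(u,v)^{2/p}$ which is outside the reach of the discrete sewing lemma. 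Analogous $2/p$-pieces appear in every one of your four cross-sums (e.g.\ $X_{t^n_k,u}\otimes Y'_{u,m}X_{m,v}$ in the ``$X\otimes R^Y$'' germ). These are precisely the pieces that must be absorbed by the second-level object, so your claim that ``the RIE information about $X$ enters at exactly one point \ldots\ the remainder is pure controlled-path arithmetic'' is not correct. The paper avoids the issue by recognizing the discrete Riemann sum as the rough integral of $(Y^n,Y')$ against $(Y,Y')$ relative to the c\`adl\`ag rough path $\bX^n=(X,X^n,\X^n)$ in the sense of \cite[Definition~2.1]{Allan2023b}, and then applying the general estimate of \cite[Proposition~2.4]{Allan2023b}, in which all the $2/p$-defects are compensated by $\X^n$ and the resulting $\|\X^n\|_{\frac{p}{2}}$-dependent term is controlled uniformly in $n$ via \cite[Lemma~2.12]{Allan2023b}. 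To make the direct telescoping argument work you would have to carry the $\X^n$-compensator inside each germ (so that what remains is genuinely of order $w(u,v)^{3/p}$ with $3/p>1$), which amounts to reproving that rough-integral estimate rather than invoking Young estimates alone.
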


\begin{proof}
  For each $n \in \N$, let
  \begin{equation*}
    Y^n_t = Y_T \1_{\{T\}}(t) + \sum_{k=0}^{N_n - 1} Y_{t^n_k} \1_{[t^n_k,t^n_{k+1})}(t), \qquad t \in [0,T],
  \end{equation*}
  be the piecewise constant approximation of $Y$ along $\cP^n$. Since $J_Y \subseteq \liminf_{n \to \infty} \cP^n$, we have from \cite[Proposition~B.1]{Allan2023c} that $Y^n \to Y$ uniformly as $n \to \infty$, so that part~(i) of Property \textup{(RIE)} holds.

  By Lemma~\ref{lem: integration against controlled path}, we can define the rough integral of the controlled path $(Y,Y')$ against itself as\footnote{In writing $Y'_u \otimes Y'_u$, we technically mean the $4$-tensor whose $ijk\ell$ component is given by $[Y'_u \otimes Y'_u]^{ijk\ell} = (Y'_u)^{ij} (Y'_u)^{k\ell}$, and we interpret the ``multiplication'' $(Y'_u \otimes Y'_u) \X_{u,v}$ as the matrix whose $ik$ component is given by $[(Y'_u \otimes Y'_u) \X_{u,v}]^{ik} = \sum_j \sum_\ell (Y'_u)^{ij} (Y'_u)^{k\ell} \X_{u,v}^{j\ell}$.}
  \begin{equation*}
    \int_0^t Y_r \otimes \d Y_r := \lim_{|\cP| \to 0} \sum_{[u,v] \in \cP} Y_u \otimes Y_{u,v} + (Y'_u \otimes Y'_u) \X_{u,v}, \qquad t \in [0,T],
  \end{equation*}
  relative to the rough path $\bX = (X,\X)$, where the limit exists along any sequence of partitions $\cP$ of the interval $[0,t]$ with mesh size tending to zero. We have from Theorem~\ref{thm: rough int against controlled path under RIE} that
  \begin{equation}\label{eq: Riemann Stieltjes integral Y^n dY}
    \int_0^t Y^n_r \otimes \d Y_r = \sum_{k=0}^{N_n-1} Y_{t^n_k} \otimes Y_{t^n_k \wedge t,t^n_{k+1} \wedge t} \, \longrightarrow \, \int_0^t Y_r \otimes \d Y_r \qquad \text{as} \qquad n \, \longrightarrow \, \infty,
  \end{equation}
  where the convergence is uniform in $t \in [0,T]$, which gives part~(ii) of Property \textup{(RIE)}.

  As the piecewise constant approximation $X^n$ as defined in Property \textup{(RIE)} has finite $1$-variation, we also have that $\bX^n = (X,X^n,\X^n)$ is a c{\`a}dl{\`a}g rough path in the sense of \cite[Definition~2.1]{Allan2023b}, where
  \begin{equation}\label{eq: defn bbX^n}
    \X^n_{s,t} := \int_s^t X^n_u \otimes \d X_u - X^n_s \otimes X_{s,t},  \qquad (s,t) \in \Delta_T.
  \end{equation}
  We note that $(Y^n,Y')$ is a controlled path with respect to $X^n$. We can therefore consider the rough integral of $(Y^n,Y')$ against $(Y,Y')$ relative to the rough path $\bX^n$ in the sense of \cite[Proposition~2.4]{Allan2023b}, which is given by
  \begin{equation*}
    \int_0^t Y^n_r \otimes \d Y_r = \lim_{|\cP| \to 0} \sum_{[u,v] \in \cP} Y^n_u \otimes Y_{u,v} + (Y'_u \otimes Y'_u) \X^n_{u,v}.
  \end{equation*}
  For any refinement $\widetilde{\cP}$ of the partition $(\cP^n \cup \{t\}) \cap [0,t]$ and any $[u,v] \in \widetilde{\cP}$, there exists a $k$ such that $t^n_k \leq u < v \leq t^n_{k+1}$ which, recalling \eqref{eq: defn bbX^n}, implies that $\X^n_{u,v} = 0$. Thus,
  \begin{equation*}
    \int_0^t Y^n_r \otimes \d Y_r = \lim_{|\widetilde{\cP}| \to 0} \sum_{[u,v] \in \widetilde{\cP}} Y^n_u \otimes Y_{u,v} = \sum_{k=0}^{N_n-1} Y_{t^n_k} \otimes Y_{t^n_k \wedge t,t^n_{k+1} \wedge t},
  \end{equation*}
  so that the rough integral $\int_0^t Y^n_r \dd Y_r$ coincides with the Riemann--Stieltjes integral on the left-hand side of~\eqref{eq: Riemann Stieltjes integral Y^n dY}.

  Let us fix $0 \leq k < \ell \leq N_n$. By the estimate in \cite[Proposition~2.4]{Allan2023b}, we have that
  \begin{equation}\label{eq: first RIE bound on int Y^n dY}
    \begin{split}
    &\bigg|\int_{t^n_k}^{t^n_\ell} Y^n_r \otimes \d Y_r - Y_{t^n_k} \otimes Y_{t^n_k,t^n_\ell} - (Y'_{t^n_k} \otimes Y'_{t^n_k}) \X^n_{t^n_k,t^n_\ell}\bigg|\\
    &\quad\lesssim \|Y'\|_\infty (\|Y'\|_{p,[t^n_k,t^n_\ell]}^p + \|X^n\|_{p,[t^n_k,t^n_\ell]}^p)^{\frac{2}{p}} \|X\|_{p,[t^n_k,t^n_\ell]} + \|Y^n\|_{p,[t^n_k,t^n_\ell]} \|R^Y\|_{\frac{p}{2},[t^n_k,t^n_\ell]}\\
    &\quad\hspace{30pt} + \|R^{Y^n}\|_{\frac{p}{2},[t^n_k,t^n_\ell]} \|Y'\|_\infty \|X\|_{p,[t^n_k,t^n_\ell]} + \|Y' \otimes Y'\|_{p,[t^n_k,t^n_\ell]} \|\X^n\|_{\frac{p}{2},[t^n_k,t^n_\ell]}.
    \end{split}
  \end{equation}
  It is clear that the functions given by $w_1(s,t) := \|Y'\|_{p,[s,t]}^p$, $w_2(s,t) := \|X\|_{p,[s,t]}^p$, and $w_3(s,t) := \|R^Y\|_{\frac{p}{2},[s,t]}^{\frac{p}{2}}$ for $(s,t) \in \Delta_T$ are all controls. Since $t^n_k, t^n_\ell \in \cP^n$, we have that
  \begin{align*}
    \|X^n\|_{p,[t^n_k,t^n_\ell]} &\leq \|X\|_{p,[t^n_k,t^n_\ell]} =  w_2(t^n_k,t^n_\ell)^{\frac{1}{p}},\\
    \text{and} \qquad \|Y^n\|_{p,[t^n_k,t^n_\ell]} &\leq \|Y\|_{p,[t^n_k,t^n_\ell]} \leq \|Y\|_p.
  \end{align*}
  Let $w$ denote the control with respect to which~\eqref{eq: RIE inequality} holds for $X$. Note that $X$ also satisfies Property \textup{(RIE)} over the subinterval $[t^n_k,t^n_\ell]$, with respect to $p$, the sequence of partitions $(\cP^m \cap [t^n_k,t^n_\ell])_{m \geq n}$, and the same control $w$. It then follows from \cite[Lemma~2.12]{Allan2023b} that
  \begin{equation*}
    \sup_{m \geq n} \|\X^m\|_{\frac{p}{2},[t^n_k,t^n_\ell]} \lesssim w(t^n_k,t^n_\ell)^{\frac{2}{p}}.
  \end{equation*}
  We also infer from the proof of \cite[Theorem~2.15]{Allan2023b} that
  \begin{align*}
    \sup_{m \geq n} \|R^{Y^m}\|_{\frac{p}{2},[t^n_k,t^n_\ell]}^{\frac{p}{2}} &\lesssim \|Y'\|_{p,[t^n_k,t^n_\ell]}^p + \|X\|_{p,[t^n_k,t^n_\ell]}^p + \|R^Y\|_{\frac{p}{2},[t^n_k,t^n_\ell]}^{\frac{p}{2}}\\
    &= w_1(t^n_k,t^n_\ell) + w_2(t^n_k,t^n_\ell) + w_3(t^n_k,t^n_\ell).
  \end{align*}
  The estimate in \eqref{eq: first RIE bound on int Y^n dY} then implies that
  \begin{equation*}
    \begin{split}
    &\bigg|\int_{t^n_k}^{t^n_\ell} Y^n_u \otimes \d Y_u - Y_{t^n_k} \otimes Y_{t^n_k,t^n_\ell} - (Y'_{t^n_k} \otimes Y'_{t^n_k}) \X^n_{t^n_k,t^n_\ell}\bigg|^{\frac{p}{2}} \\
    &\quad\lesssim \|Y'\|_\infty^{\frac{p}{2}} (w_1(t^n_k,t^n_\ell) + w_2(t^n_k,t^n_\ell)) \|X\|_p^{\frac{p}{2}} + \|Y\|_p^{\frac{p}{2}} w_3(t^n_k,t^n_\ell) \\
    &\hspace{30pt} + (w_1(t^n_k,t^n_\ell) + w_2(t^n_k,t^n_\ell) + w_3(t^n_k,t^n_\ell)) \|Y'\|_\infty^{\frac{p}{2}} \|X\|_p^{\frac{p}{2}} + \|Y' \otimes Y'\|_p^{\frac{p}{2}} w(t^n_k,t^n_\ell).
    \end{split}
  \end{equation*}
  Since we can also bound
  \begin{equation*}
    |(Y'_{t^n_k} \otimes Y'_{t^n_k}) \X^n_{t^n_k,t^n_\ell}|^{\frac{p}{2}} \leq \|Y' \otimes Y'\|_\infty^{\frac{p}{2}} \|\X^n\|_{\frac{p}{2},[t^n_k,t^n_\ell]}^{\frac{p}{2}} \lesssim \|Y' \otimes Y'\|_\infty^{\frac{p}{2}} w(t^n_k,t^n_\ell),
  \end{equation*}
  it is then clear how to choose a control $w_4$ such that
  \begin{equation*}
    \bigg|\int_{t^n_k}^{t^n_\ell} Y^n_u \otimes \dd Y_u - Y_{t^n_k} \otimes Y_{t^n_k,t^n_\ell}\bigg|^{\frac{p}{2}} \leq w_4(t^n_k,t^n_\ell).
  \end{equation*}
  Since $w_4$ does not depend on the choices of $n \in \N$ or $0 \leq k < l \leq N_n$, we have established part~(iii) of Property \textup{(RIE)}.
\end{proof}

\subsection{Consistency of rough and stochastic integration}

In this subsection we briefly discuss the relation between the deterministic theory of rough integration, as developed in Sections~\ref{subsec: rough path theory} and \ref{subsec: pathwise ito integration}, and stochastic integration. As before, we fix a probability space $(\Omega, \mathcal{F}, \P)$ with a filtration $(\mathcal{F}_t)_{t \in [0,T]}$ satisfying the usual conditions. As shown in \cite[Section~3]{Allan2023c}, the sample paths of various stochastic processes, such as Brownian motion, It\^o processes and L\'evy processes, almost surely satisfy Property \textup{(RIE)} relative to $p \in (2,3)$ and suitable sequences of partitions $(\cP^n)_{n \in \N}$. In the current work, we fundamentally rely on Brownian motion, and thus recall the corresponding result in the following remark, which combines the content of \cite[Lemma~3.1 and Proposition~3.2]{Allan2023c}.

\begin{remark}\label{remark: BM satisfies RIE}
  Let $\bar{W} = (\bar{W}_t)_{t \in [0,T]}$ be $d$-dimensional Brownian motion, $p \in (2,3)$ and $\cP^n = \{0 = t_0^n < t_1^n < \cdots < t_{N_n}^n = T\}$, $n \in \N$, be a sequence of equidistant partitions of the interval $[0,T]$, so that, for each $n \in \N$, there exists some $\pi_n > 0$ such that $t^n_{i+1} - t^n_i = \pi_n$ for each $0 \leq i < N_n$. If $\pi_n^{2 - \frac{4}{p}} \log(n) \to 0$ as $n \to \infty$, then, for almost every $\omega \in \Omega$, the sample path $W(\omega)$ satisfies Property \textup{(RIE)} relative to $p$ and $(\cP^n)_{n \in \N}$.

  Moreover, the canonical rough path $\bar{\bW}(\omega) = (\bar{W}(\omega),\bar{\mathbb{W}}(\omega))$ defined via Property \textup{(RIE)} corresponds almost surely to the random rough path defined via It{\^o} integration, namely, where
  \begin{equation*}
    \bar{\mathbb{W}}_{s,t} := \int_s^t \bar{W}_{s,r} \otimes \d \bar{W}_r = \int_s^t \bar{W}_r \otimes \d \bar{W}_r - \bar{W}_s \otimes \bar{W}_{s,t}, \qquad (s,t) \in \Delta_T.
  \end{equation*}
\end{remark}

Property \textup{(RIE)} also ensures that a random rough integral against a semimartingale coincides almost surely with the associated stochastic It\^o integral.

\begin{lemma}\label{lemma: rough integral is equal to stochastic integral}
Let $X = (X_t)_{t \in [0,T]}$ be a $d$-dimensional c{\`a}dl{\`a}g semimartingale, and let $(Y,Y')$ be a c{\`a}dl{\`a}g stochastic process adapted to $(\mathcal{F}_t)_{t \in [0,T]}$. Let $p \in (2,3)$. By part~(i) of \cite[Proposition~3.11]{Allan2023c}, there exists an adapted sequence of partitions $\cP^n = \{\tau^n_k\}$, $n \in \N$, (so that each $\tau^n_k \in \cP^n$ is a stopping time), such that, for almost every $\omega \in \Omega$, the path $X(\omega)$ satisfies Property \textup{(RIE)} relative to $p$ and $(\cP^n(\omega))_{n \in \N}$. Suppose that, for almost every $\omega \in \Omega$, $(Y(\omega),Y'(\omega))$ is a controlled path in $\cV^p_{X(\omega)}$ with $J_{Y(\omega)} \subseteq \liminf_{n \to \infty} \cP^n(\omega)$, where $J_{Y(\omega)}$ denotes the set of jump times of $Y(\omega)$. Then the rough and It{\^o} integrals of $Y$ against $X$ coincide $\P$-almost surely, that is,
  \begin{equation*}
    \int_0^t Y_s(\omega) \dd \bX_s(\omega) = \bigg( \int_0^t Y_{s-} \dd X_s \bigg)(\omega) \qquad \text{for all } \ t \in [0,T],
  \end{equation*}
  holds for almost every $\omega \in \Omega$, where $\bX(\omega)$ is the canonical rough path lift of $X(\omega)$ as defined via Property \textup{(RIE)}.
\end{lemma}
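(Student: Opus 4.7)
The strategy is to show that both integrals arise as limits of the same left-point Riemann sums along $(\cP^n)_{n \in \N}$, but in different senses: the rough integral as a pathwise uniform limit (via Theorem~\ref{thm: rough int against controlled path under RIE}) and the It\^o integral as a uniformly-on-compacts-in-probability (ucp) limit (via standard semimartingale theory). Identifying these two limits along an almost surely convergent subsequence will then yield the claim.

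First, let $\Omega_0$ be the intersection of the almost sure events on which $X(\omega)$ satisfies Property \textup{(RIE)} relative to $p$ and $(\cP^n(\omega))_{n \in \N}$, on which $(Y(\omega),Y'(\omega)) \in \crpX$ with $J_{Y(\omega)} \subseteq \liminf_{n \to \infty} \cP^n(\omega)$, and on which $|\cP^n(\omega)| \to 0$. On $\Omega_0$, Theorem~\ref{thm: rough int against controlled path under RIE} applies pathwise and gives
\begin{equation*}
\int_0^t Y_s(\omega) \dd \bX_s(\omega) = \lim_{n \to \infty} S^n_t(\omega), \qquad S^n_t := \sum_k Y_{\tau^n_k} X_{\tau^n_k \wedge t, \tau^n_{k+1} \wedge t},
\end{equation*}
with the convergence uniform in $t \in [0,T]$.

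On the other hand, the process $Y^n_s := \sum_k Y_{\tau^n_k} \1_{(\tau^n_k, \tau^n_{k+1}]}(s)$ is simple predictable (each $\tau^n_k$ is a stopping time and $Y$ is adapted), and its stochastic integral against $X$ is by construction equal to $S^n_t$. Since $|\cP^n| \to 0$ a.s.\ and $Y$ is c\`adl\`ag, one has $Y^n_s \to Y_{s-}$ for every $s \in (0,T]$ almost surely, with $|Y^n|$ bounded pathwise by the finite random variable $\sup_{t \in [0,T]} |Y_t|$, which serves as a locally bounded dominating process. The dominated convergence theorem for stochastic integrals then yields $S^n \to \int_0^\cdot Y_{s-} \dd X_s$ in the ucp topology. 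Passing to a subsequence $(n_j)$ along which this convergence holds uniformly in $t$ on a further full-measure event, and intersecting with $\Omega_0$, yields a set of full measure on which $S^{n_j}$ has two uniform limits, which must coincide; the full sequence pathwise limit from the rough side is thereby identified with the It\^o integral.

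The most delicate step is the dominated convergence argument, which hinges on the pathwise pointwise convergence $Y^n \to Y_-$ on $(0,T]$: this follows because, for each fixed $s \in (0,T]$, the unique $\tau^n_{k_n} \in \cP^n$ satisfying $\tau^n_{k_n} < s \leq \tau^n_{k_n+1}$ converges to $s$ from below as $|\cP^n| \to 0$, whence $Y^n_s = Y_{\tau^n_{k_n}} \to Y_{s-}$ by right-continuity-with-left-limits of $Y$. A minor cosmetic subtlety is that the Riemann sum evaluates $Y$ at the right-limit value $Y_{\tau^n_k}$ rather than $Y_{\tau^n_k-}$; but the simple integrand $Y^n$ is left-continuous on each sub-interval $(\tau^n_k, \tau^n_{k+1}]$, so the predictable left-limit $Y_-$ is unambiguously what emerges in the ucp limit, and the two limits coincide on the common full-measure subsequence event as required.
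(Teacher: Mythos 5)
Your proof is correct and follows essentially the same route as the paper's: identify the left-point Riemann sums along $(\cP^n)_{n\in\N}$ as the common approximants, obtain pathwise uniform convergence to the rough integral from Theorem~\ref{thm: rough int against controlled path under RIE}, obtain ucp convergence to the It\^o integral, pass to an a.s.\ convergent subsequence, and conclude by uniqueness of limits. The only cosmetic difference is that where you rederive the ucp convergence of the Riemann sums via the dominated convergence theorem for stochastic integrals (for which the cleaner dominating process is the predictable $G_s := \sup_{u<s}|Y_u|$ rather than the $\cF_T$-measurable constant $\sup_{t\in[0,T]}|Y_t|$), the paper simply cites the standard result \cite[Ch.~II, Theorem~21]{Protter2005}.
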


\begin{proof}
  By, e.g., \cite[Ch.~II, Theorem~21]{Protter2005}, we have that
  \begin{equation*}
    \sum_{k=0}^{N_n-1} Y_{\tau_k^n} X_{\tau_k^n \wedge t, \tau_{k+1}^n \wedge t} \, \longrightarrow \, \int_0^t Y_{s-} \dd X_s \qquad \text{as} \quad n \to \infty,
  \end{equation*}
  where the convergence holds uniformly (in $t \in [0,T]$) in probability. By taking a subsequence if necessary, we can then assume that the (uniform) convergence holds almost surely. On the other hand, by Theorem~\ref{thm: rough int against controlled path under RIE}, we know that, for almost every $\omega \in \Omega$,
  \begin{equation*}
    \sum_{k=0}^{N_n-1} Y_{\tau_k^n(\omega)}(\omega) X_{\tau_k^n(\omega) \wedge t, \tau_{k+1}^n(\omega) \wedge t}(\omega) \, \longrightarrow \, \int_0^t Y_s(\omega) \dd \bX_s(\omega) \qquad \text{as} \quad n \to \infty,
  \end{equation*}
  uniformly for $t \in [0,T]$. The result thus follows by the uniqueness of limits.
\end{proof}

\section{Local volatility models: pathwise analysis of log-optimal portfolios}\label{sec: local vol models}

In this section we shall study log-optimal portfolios for the investment-consumption problem, acting on deterministic price paths generated by local volatility models, defined in a pathwise manner. To this end, we make the following assumption throughout this section.

\begin{assumption}\label{ass: partitions exhausting the jumps of K}
  Let $p \in (2,3)$ and let $\cP^n = \{0 = t_0^n < t_1^n < \cdots < t_{N_n}^n = T\}$, $n \in \N$, be a sequence of equidistant partitions of the interval $[0,T]$, such that,
  \begin{itemize}
    \item[$\vcenter{\hbox{\scriptsize$\bullet$}}$] for each $n \in \N$, there exists some $\pi_n > 0$ such that $t^n_{i+1} - t^n_i = \pi_n$ for each $0 \leq i < N_n$,
    \item[$\vcenter{\hbox{\scriptsize$\bullet$}}$] $\pi_n^{2 - \frac{4}{p}} \log(n) \to 0$ as $n \to \infty$,
    \item[$\vcenter{\hbox{\scriptsize$\bullet$}}$] $J_K \subseteq \liminf_{n \to \infty} \cP^n$ with $J_K := \{t \in (0,T] : K_{t-} \neq K_t\}$,
  \end{itemize}
  where the consumption clock $K \colon [0,T] \to \R$ is fixed as in Section~\ref{subsec: setting}. Moreover, the deterministic path $W \colon [0,T] \to \R^d$ satisfies Property \textup{(RIE)} relative to $p$ and $(\mathcal{P}^n)_{n \in \N}$.
\end{assumption}

We suppose that the discounted price path $(S_t)_{t \in [0,T]}$ satisfies the rough differential equation
\begin{equation}\label{eq: pathwise local vola model}
  S_t = s_0 + \int_0^t b(s,S_s) \dd s + \int_0^t \sigma(s, S_s) \dd \bW_s, \qquad t \in [0,T],
\end{equation}
where $s_0 \in \R^m$, $b \in C^3_b(\R^{m+1};\R^m)$, $\sigma \in C^3_b(\R^{m+1};\cL(\R^d;\R^m))$, and $\bW = (W,\mathbb{W})$ is the canonical rough path lift of $W$ as defined in \eqref{eq: RIE rough path}.

\begin{remark}\label{remark: pathwise local vola model}
  If $W$ is a realization of a Brownian motion, the dynamics of the RDE \eqref{eq: pathwise local vola model} can be seen as a fixed realization of a local volatility model for a financial market.

  Indeed, let us assume that $\bar{W} = (\bar{W}_t)_{t \in [0,T]}$ is a $d$-dimensional Brownian motion on a probability space $(\Omega,\mathcal{F},\P)$ with respect to an underlying filtration $(\cF_t)_{t \in [0,T]}$. It is well-known that the stochastic differential equation (SDE)
  \begin{equation}\label{eq: Ito diffusion}
    \bar{S}_t = s_0 + \int_0^t b(s,\bar{S}_s) \dd s + \int_0^t \sigma(s,\bar{S}_s) \dd \bar{W}_s, \qquad t \in [0,T],
  \end{equation}
  has a unique strong solution, where $\int_0^t \sigma(s,\bar{S}_s) \dd \bar{W}_s$ denotes the stochastic It{\^o} integral; see, e.g., \cite[Ch.~V, Theorem~6]{Protter2005}. Note that the It{\^o} diffusion $(\bar{S}_t)_{t \in [0,T]}$ represents many standard models for financial markets, including local volatility models.

  Recall that for almost every $\omega \in \Omega$, the sample path $\bar{W}(\omega)$ of a Brownian motion satisfies Property \textup{(RIE)} relative to $p$ and $(\cP^n)_{n \in \N}$; see Remark~\ref{remark: BM satisfies RIE}. Hence, for almost every $\omega \in \Omega$, the solution $(\bar{S}_t(\omega))_{t \in [0,T]}$ of the SDE \eqref{eq: Ito diffusion} driven by $\bar{W}$, and the solution $(S_t)_{t \in [0,T]}$ of the rough differential equation driven by the rough path $\bW = (W,\mathbb{W}) := \bar{\bW}(\omega) = (\bar{W}(\omega), \bar{\mathbb{W}}(\omega))$ coincide; see \cite[Lemma~3.1]{Allan2023c}. In other words, $(S_t)_{t \in [0,T]}$ can be understood as a fixed realization of the probabilistic model $(\bar{S}_t)_{t \in [0,T]}$.
\end{remark}

In the present setting, it will be convenient to equivalently reformulate the RDE \eqref{eq: pathwise local vola model} into the RDE
\begin{equation}\label{eq: dynamics of price process via RDE}
  S_t = s_0 + \int_0^t (b, \sigma) (s,S_s) \dd (\cdot,\bW)_s, \qquad t \in [0,T],
\end{equation}
where $(\cdot,\bW)$ denotes the time-extended rough path of $\bW$, i.e., the path-level of $(\cdot,\bW)$ is given by $(t,W_t)_{t \in [0,T]}$, and the missing integrals $\int W^j \dd t$, $\int t \dd W^j$, $j = 1, \ldots, d$, to define a rough path are canonically defined as Riemann--Stieltjes integrals. By classical rough path theory (e.g., \cite[Theorem~2.5]{Allan2020}), for any $b \in C^3_b(\R^{m+1};\R^m)$, $\sigma \in C^3_b(\R^{m+1};\cL(\R^d;\R^m))$, there exists a unique solution $(S,S') \in \cV^p_{(\cdot,W)}$ to the RDE \eqref{eq: dynamics of price process via RDE}, where $S' = (b,\sigma) (\cdot, S)$. Moreover, $(S_t)_{t \in [0,T]}$ satisfies the RDE \eqref{eq: dynamics of price process via RDE} if and only if it satisfies the RDE \eqref{eq: pathwise local vola model}. For later reference, we also remark that $(\cdot,W)$ satisfies Property \textup{(RIE)} relative to $p$ and $(\cP^n)_{n \in \N}$ by \cite[Proposition~2.12]{Allan2023c}.

\subsection{Pathwise construction of log-optimal portfolios}

As a first step to a pathwise analysis of optimal portfolios, we prove a pathwise construction of the log-optimal portfolio, supposing that the underlying price dynamics of the financial market are given by a local volatility model. Recall that in the probabilistic setting the log-optimal portfolio is well-known and was presented in Theorem~\ref{thm: Goll-Kallsen's main theorem Ito processes}, which is due to \cite{Goll2000}.

\smallskip

In the following, we write $\cA \subset C^3_b([0,T] \times \R^m;\cL(\R^d;\R^m))$ for the class of functions $\sigma \in C^3_b$ such that $\sigma(t,x) \sigma(t,x)^\top \in \mathrm{GL}(\R^{m \times m})$ for all $(t,x)$, such that the determinant is uniformly bounded away from zero. Here, $\mathrm{GL}(\R^{m \times m})$ denotes the general linear group of degree $m$. For a $m \times m$-matrix we write $\det(\cdot)$ for its determinant, and $(\cdot)^\top$ denotes matrix transposition. Given a path $W$, the time-augmented path is denoted by $(\cdot,W) = (t,W)_{t \in [0,T]}$.

\begin{lemma}\label{lem: pathwise optimal portfolio}
  For $(b,\sigma) \in C^3_b([0,T] \times \R^m; \R^m) \times \cA$, let
  \begin{equation*}
    H_t :=  H^{(b,\sigma)}_t := c(t,S_t)^{-1} b(t,S_t) \qquad \text{with} \qquad c(t,S_t) := c^{(b,\sigma)}_t := \sigma(t,S_t) \sigma(t,S_t)^\top,
  \end{equation*}
  for $t \in [0,T]$, and set $(\phi,\kappa) := (\phi^{(b,\sigma)},\kappa^{(b,\sigma)}) := (\phi^{(b,\sigma),0}, \dots, \phi^{(b,\sigma),m}, \kappa^{(b,\sigma)})$, with
  \begin{align*}
    &\kappa_t := \kappa_t^{(b,\sigma)} := \frac{1}{K_T} \cE \bigg( \sum_{i=1}^m \int_0^\cdot H_s^i \dd S_s^i \bigg)_{\hspace{-2pt}t}, \qquad V_t := V_t^{(b,\sigma)} := \kappa_t(K_T - K_t),\\
    &\phi_t^i := \phi_t^{(b,\sigma),i} := H_t^i V_t, \qquad i = 1, \ldots, m, \qquad  \phi_t^0 := \phi_t^{(b,\sigma),0} := \sum_{i=1}^m \int_0^t \phi_s^i \dd S_s^i - \phi_t^i S_t^i,
  \end{align*}
  for $t \in [0,T]$, where $\int_0^t \phi_s^i \dd S_s^i$ is the rough integral, and $\cE$ is the rough exponential as defined in Lemma~\ref{lem: dynamics of rough exponential}. Then $\phi, \kappa$ and $V$ are all well-defined and are controlled paths with respect to $W$ and, in particular, with respect to $(\cdot,W)$.
\end{lemma}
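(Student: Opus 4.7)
The plan is to unwind the definitions of $H$, $\kappa$, $V$ and the components of $\phi$ by repeatedly applying four operations that preserve the controlled-path property with respect to $W$: composition of a controlled path with a $C^3_b$ function, rough integration of one controlled path against another (Lemma~\ref{lem: integration against controlled path}), pointwise products of two controlled paths, and the rough exponential (Lemma~\ref{lem: dynamics of rough exponential}). Each object in the statement is built from $S$ and $K$ via such operations, so the proof reduces to an ordered chain of ``controlled-path in, controlled-path out'' arguments.

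Concretely, I would first observe that the RDE solution $(S, (b, \sigma)(\cdot, S))$ belongs to $\cV^p_{(\cdot, W)}$ by the well-posedness result recalled immediately after \eqref{eq: dynamics of price process via RDE}. Since $t \mapsto t$ has finite $1$-variation (hence finite $\frac{p}{2}$-variation), the drift contribution $b(s, S_s)(t-s)$ can be absorbed into the remainder, so $(S, \sigma(\cdot, S))$ is equally an element of $\cV^p_W$. The assumption $\sigma \in \cA$ provides a uniform positive lower bound on $\det(\sigma \sigma^\top)$, and combined with $\sigma \in C^3_b$ the cofactor formula for the inverse matrix yields $(\sigma\sigma^\top)^{-1} \in C^3_b$; hence the map $(t,x) \mapsto c(t,x)^{-1} b(t,x)$ lies in $C^3_b(\R^{m+1}; \R^m)$. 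The standard composition rule for controlled paths then gives $H \in \cV^p_W$ (and therefore also in $\cV^p_{(\cdot, W)}$).

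With $H, S \in \cV^p_W$ in hand, the rough integrals $\int_0^\cdot H_s^i \dd S_s^i$ are well-defined as rough integrals of one controlled path against another, and are themselves elements of $\cV^p_W$ by Lemma~\ref{lem: integration against controlled path}. Summing over $i$ and invoking Lemma~\ref{lem: dynamics of rough exponential} shows that $\kappa \in \cV^p_W$. The consumption clock $K$ is increasing and bounded, hence of finite $1$-variation, so $(K, 0) \in \cV^p_W$ trivially, and the Leibniz rule for controlled paths yields $V_t = \kappa_t(K_T - K_t) \in \cV^p_W$. Each $\phi^i = H^i V$, $i = 1, \dots, m$, is then a product of controlled paths, and $\phi^0$ is a sum of rough integrals of $\phi^i \in \cV^p_W$ against $S^i \in \cV^p_W$ (again well-defined by Lemma~\ref{lem: integration against controlled path}) and of products $\phi^i S^i$, so $\phi^0 \in \cV^p_W$ as well. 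Finally, any controlled path over $W$ becomes a controlled path over $(\cdot, W)$ by adjoining a zero Gubinelli derivative in the time component, which produces the ``in particular'' assertion.

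The only nontrivial analytic input I foresee is the verification that $c^{-1}$ inherits $C^3_b$-regularity from $c$ under a uniform lower bound on $\det(c)$; this is a linear-algebraic cofactor computation but is the one place where the class $\cA$ genuinely enters. Once this is in place, every remaining step is a direct invocation of the rough-path machinery from Section~\ref{subsec: pathwise ito integration} and Appendix~\ref{sec: appendix}, and the argument amounts to a systematic propagation of the controlled-path property through the chain of operations defining $(\phi, \kappa)$.
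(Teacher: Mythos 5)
Your proof is correct and follows essentially the same chain of controlled-path-preserving operations as the paper's argument (composition with regular functions, rough integration via Lemma~\ref{lem: integration against controlled path}, the rough exponential, products with the consumption-clock factor). Two small remarks: (i) Lemma~\ref{lem: dynamics of rough exponential} takes a \emph{rough path} as input, so between establishing that $Z := \sum_{i} \int_0^\cdot H^i_s \dd S^i_s$ is a controlled path and invoking the rough exponential you should cite Lemma~\ref{lem: consistency of rough integrals} to produce the canonical rough-path lift of $Z$, as the paper does explicitly; (ii) your handling of $H$ — showing the composite map $(t,x) \mapsto c(t,x)^{-1}b(t,x)$ is itself $C^3_b$ via the cofactor formula and composing once — is a valid stylistic variant of the paper's route, which instead forms $\sigma(\cdot,S)$, $b(\cdot,S)$ as controlled paths and then uses products and matrix reciprocals of controlled paths (Lemma~\ref{lem: product of controlled paths}) to reach $H$.
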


\begin{proof}
  It is a well known result in rough path theory (e.g., \cite[Lemma~3.5]{Friz2018}) that the composition of a controlled path with a regular function remains a controlled path. More precisely, since $S \in \cV^p_W$, we have that $\sigma(\cdot,S)$ and $b(\cdot,S)$ are controlled paths in $\cV^p_W$. We recall that $\det(\sigma(\cdot,S) \sigma(\cdot,S)^\top)$ is bounded away from zero. We then obtain (componentwise) that $H$ is a controlled path in $\cV^p_W$, since the sum and the product of two (real-valued) controlled paths is again a controlled path by Lemma~\ref{lem: product of controlled paths}, as well as the inverse of a controlled path which is bounded away from zero (as a composition with the smooth function $x \mapsto \frac{1}{x}$).

  By Lemma~\ref{lem: integration against controlled path}, since $H^i$ and $S^i$ are both controlled paths, the rough integral $\int_0^\cdot H_s^i \dd S_s^i$ is well-defined and is itself a controlled path for each $i = 1, \ldots, m$. By Lemma~\ref{lem: consistency of rough integrals}, the path $Z_t := \sum_{i=1}^m \int_0^t H_s^i \dd S_s^i$ may then be considered as a rough path, and Lemma~\ref{lem: dynamics of rough exponential} then implies that the rough exponential $\cE(Z)$, and hence also $\kappa = \frac{1}{K_T} \cE(Z)$, are controlled paths.

  Since the consumption clock $K$ is a c{\`a}dl{\`a}g (deterministic) and increasing function (and thus of finite $1$-variation), by Lemma~\ref{lem: product of controlled paths}, the wealth process $V$ is a controlled path in $\cV^p_W$, as the product of two controlled paths.

  By similar arguments, we see that $\phi^i \in \cV^p_W$, $i = 0, 1, \ldots, m$ are also all controlled paths with respect to $W$, and hence also with respect to $(\cdot,W)$.
\end{proof}

The portfolio constructed in Lemma~\ref{lem: pathwise optimal portfolio} in a pathwise manner agrees, indeed, with the log-optimal portfolio for the investment-consumption problem as considered in Section~\ref{sec: portfolio in probabilistic setting}, if the underlying frictionless financial market is generated by a local volatility model, such as the stochastic differential equation \eqref{eq: Ito diffusion}. Hence, in the following we shall call the portfolio $(\phi, \kappa) = (\phi^{(b,\sigma)}, \kappa^{(b,\sigma)})$ from Lemma~\ref{lem: pathwise optimal portfolio} a \emph{pathwise log-optimal portfolio}.

\begin{lemma}\label{lem: pathwise portfolio is optimal}
  Suppose that the discounted price process $(\bar{S}_t)_{t \in [0,T]}$ is modelled by the SDE \eqref{eq: Ito diffusion} driven by a Brownian motion $\bar{W}$ on a probability space $(\Omega,\mathcal{F},\P)$ with respect to an underlying filtration $(\cF_t)_{t \in [0,T]}$. Then the log-optimal portfolio $(\bar{\phi}, \bar{\kappa})$, as provided in Theorem~\ref{thm: Goll-Kallsen's main theorem Ito processes}, and the pathwise log-optimal portfolio $(\phi, \kappa)$, as provided in Lemma~\ref{lem: pathwise optimal portfolio}, coincide $\P$-almost surely, where $(\phi, \kappa)$ is constructed given the realization $\bW := \bar{\bW}(\omega)$ of the It\^o rough path lift of the Brownian motion $\bar{W}$, for almost every $\omega \in \Omega$.
\end{lemma}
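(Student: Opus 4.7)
The plan is to verify the identity sample-path by sample-path, matching each ingredient of the pathwise construction with its probabilistic counterpart on a set of full $\P$-measure. By Remark~\ref{remark: BM satisfies RIE}, there is an event $\Omega_0 \subseteq \Omega$ with $\P(\Omega_0) = 1$ on which $\bar{W}(\omega)$ satisfies Property~\textup{(RIE)} relative to $p$ and the equidistant partitions $(\cP^n)_{n \in \N}$ fixed by Assumption~\ref{ass: partitions exhausting the jumps of K}, and the induced canonical rough path lift coincides with the It{\^o} lift $\bar{\bW}(\omega)$. On $\Omega_0$, Remark~\ref{remark: pathwise local vola model} gives that the strong solution $\bar{S}(\omega)$ of~\eqref{eq: Ito diffusion} coincides with the RDE solution $S$ of~\eqref{eq: pathwise local vola model} driven by $\bW := \bar{\bW}(\omega)$. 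Consequently $\bar{b}_t(\omega) = b(t,S_t)$, $\bar{\sigma}_t(\omega) = \sigma(t,S_t)$, $\bar{c}_t(\omega) = c^{(b,\sigma)}_t$, and therefore $\bar{H}_t(\omega) = H^{(b,\sigma)}_t$ for every $t \in [0,T]$.

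Next I would identify the rough integrals against $S$ with the corresponding It{\^o} integrals against $\bar{S}$. By Lemma~\ref{lem: pathwise optimal portfolio}, each $H^i$ is a controlled path with respect to $(\cdot,W)$, and since Brownian motion has continuous sample paths while $K$ is deterministic, the process $H^i$ is continuous, so in particular its jump set is empty and trivially contained in $\liminf_n \cP^n$. Thus Lemma~\ref{lemma: rough integral is equal to stochastic integral} applies and yields, on a (possibly smaller) full-measure event, that the rough integral $\int_0^\cdot H^i_s \dd S^i_s$ agrees with the It{\^o} integral $\int_0^\cdot \bar{H}^i_{s-} \dd \bar{S}^i_s$. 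Summing over $i$, the path $Z_t := \sum_{i=1}^m \int_0^t H^i_s \dd S^i_s$ then coincides almost surely with $\bar{Z}_t := \int_0^t \bar{H}^\top_s \dd \bar{S}_s$.

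Finally, I would identify the rough exponential $\cE(Z)$ with the stochastic exponential $\bar{\cE}(\bar{Z})$. By Lemma~\ref{lem: dynamics of rough exponential}, the path $\cE(Z)$ solves the linear RDE $\cE(Z)_t = 1 + \int_0^t \cE(Z)_s \dd Z_s$ in the rough sense; since $\cE(Z)$ is itself a controlled path with continuous sample paths, a second application of Lemma~\ref{lemma: rough integral is equal to stochastic integral} converts this rough integral into the It{\^o} integral $\int_0^t \cE(Z)_{s-} \dd \bar{Z}_s$. Hence $\cE(Z)$ satisfies the same linear SDE as $\bar{\cE}(\bar{Z})$, and strong uniqueness forces $\cE(Z) = \bar{\cE}(\bar{Z})$ almost surely. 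Dividing by $K_T$ gives $\kappa = \bar{\kappa}$, and then $V = \bar{V}$ follows from the definition involving the deterministic $K$, $\phi^i = H^i V = \bar{H}^i \bar{V}_{-} = \bar{\phi}^i$ for $i = 1,\ldots, m$, and $\phi^0 = \bar{\phi}^0$ from one more application of the rough/It{\^o} consistency. The main obstacle is this last identification of the rough exponential with the stochastic exponential; once it is reduced to strong uniqueness for a linear SDE via Lemma~\ref{lemma: rough integral is equal to stochastic integral}, the remaining steps are bookkeeping.
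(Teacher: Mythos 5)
Your proposal is correct, and the overall structure (identify $W$, $S$, $H$ pathwise; convert the rough integral $\int H^i\,\d S^i$ to the It\^o integral $\int \bar H^i\,\d\bar S^i$; then identify $\cE(Z)$ with $\bar\cE(\bar Z)$ and finish with bookkeeping) matches the paper's up to the final identification of the exponentials, where you take a genuinely different route. The paper uses the explicit formula from Lemma~\ref{lem: dynamics of rough exponential}: it writes $\cE(Z)_t = \exp(Z_t - \tfrac12[\bZ]_t)$, invokes Theorem~\ref{thm: controlled path satisfies RIE} to see that $Z$ satisfies Property~\textup{(RIE)} (so that, by an extension of \cite[Proposition~2.18]{Allan2023b}, the rough bracket $[\bZ]$ equals the F\"ollmer quadratic variation of $Z$ along $(\cP^n)$), and then identifies this with the semimartingale bracket $[\bar Z]$ by a subsequence argument, whence $\cE(Z) = \exp(\bar Z - \tfrac12[\bar Z]) = \bar\cE(\bar Z)$. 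You instead use the RDE characterization in Lemma~\ref{lem: dynamics of rough exponential}, convert the linear rough equation for $\cE(Z)$ into the linear SDE satisfied by $\bar\cE(\bar Z)$, and conclude by strong uniqueness. Your route avoids the technical step of matching the rough bracket with the F\"ollmer quadratic variation, at the cost of relying on SDE uniqueness; both work. One small caveat worth flagging: Lemma~\ref{lemma: rough integral is equal to stochastic integral} as stated supplies its own adapted sequence of partitions (from \cite[Proposition~3.11]{Allan2023c}), whereas here the rough integrals are built with the fixed equidistant $(\cP^n)$, so you should note (as the paper does implicitly via associativity plus RIE for $\bW$, and as would be needed again for the integral against $\bZ$, using Theorem~\ref{thm: controlled path satisfies RIE}) that the proof of that lemma goes through verbatim for any sequence relative to which the relevant semimartingale almost surely satisfies Property~\textup{(RIE)}.
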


\begin{remark}
  One may note that in Theorem~\ref{thm: Goll-Kallsen's main theorem Ito processes} we had $\bar{\phi}^i_t = \bar{H}^i_t \bar{V}_{t-}$, but in Lemma~\ref{lem: pathwise optimal portfolio} we have $\phi^i_t = H^i_t V_t$, without taking the left-limit. This is only to be consistent with standard rough analysis in which controlled paths are assumed to be c\`adl\`ag, and it makes no difference to the value of the rough integral $\int_0^t \phi^i_s \dd S^i_s$, as explained in \cite[Remark~3.3]{Allan2023b}.
\end{remark}

\begin{proof}[Proof of Lemma~\ref{lem: pathwise portfolio is optimal}]
  In this proof we consider $S, \phi, \kappa$, etc., as random controlled paths, in the sense that $S$ is a stochastic process such that $S(\omega)$ is a controlled path for almost every $\omega \in \Omega$. In particular, $S$ is defined pathwise as the solution to the RDE \eqref{eq: pathwise local vola model}, and $(\phi, \kappa)$ is defined pathwise via Lemma~\ref{lem: pathwise optimal portfolio}, given a realization of the Brownian motion $\bar{W}$; see Remark~\ref{remark: pathwise local vola model}.

  By the associativity property of rough integrals (see Proposition~\ref{prop: associativity of rough integration}), and the consistency of rough and stochastic integrals (as established in Lemma~\ref{lemma: rough integral is equal to stochastic integral}), we have that
  \begin{align*}
    \int_0^t H_s^i \dd S_s^i &= \int_0^t H_s^i b^i(s,S_s) \dd s + \int_0^t H_s^i \sigma^{i \cdot}(s,S_s) \dd \bW_s\\
    &= \int_0^t \bar{H}_s^i b^i(s,\bar{S}_s) \dd s + \int_0^t \bar{H}_s^i \sigma^{i \cdot}(s,\bar{S}_s) \dd \bar{W}_s = \int_0^t \bar{H}_s^i \dd \bar{S}_s^i
  \end{align*}
  almost surely for each $i = 1, \ldots, m$, where $\int_0^t \bar{H}_s^i \dd \bar{S}_s^i$ is the stochastic It\^o integral of $\bar{H}_t := c(t,\bar{S}_t)^{-1} b(t,\bar{S}_t)$ against the price process $\bar{S}$ in \eqref{eq: Ito diffusion}.

  Let $Z := \sum_{i=1}^m \int_0^\cdot H_s^i \dd S_s^i$ and $\bar{Z} = \sum_{i=1}^m \int_0^\cdot \bar{H}^i_s \dd \bar{S}^i_s = \int_0^\cdot \bar{H}_s^\top \dd \bar{S}_s$. By Lemma~\ref{lem: consistency of rough integrals}, $Z$ admits a canonical rough path lift $\bZ \in \cD^p$. Further, since $(Z,Z') \in \cV^p_W$ and $W$ satisfies Property \textup{(RIE)} relative to $p$ and $(\cP^n)_{n \in \N}$, $Z$ also satisfies Property \textup{(RIE)} by Theorem~\ref{thm: controlled path satisfies RIE}. By a slight extension (to allow non-nested partitions) of \cite[Proposition~2.18]{Allan2023b}, this implies that the rough path bracket $[\bZ]$ coincides with the quadratic variation $[Z]$ of $Z$ along $(\cP^n)_{n \in \N}$ in the sense of F{\"o}llmer, that is
  \begin{equation*}
    [\bZ]_t = [Z]_t =  \lim_{n \to \infty} \sum_{k=0}^{N_n-1} (Z_{t^n_k \wedge t,t^n_{k+1} \wedge t})^2, \qquad t \in [0,T].
  \end{equation*}
  On the other hand, we have that
  \begin{equation*}
    [\bar{Z}]_t = \lim_{n \to \infty} \sum_{k=0}^{N_n-1} (\bar{Z}_{t^n_k \wedge t,t^n_{k+1} \wedge t})^2, \qquad t \in [0,T],
  \end{equation*}
  where the convergence holds uniformly (in $t \in [0,T]$) in probability. By taking a subsequence if necessary, we can then assume that the uniform convergence holds almost surely, and it follows that $[\bZ] = [\bar{Z}]$ almost surely. In particular, we have that
  \begin{equation*}
    \cE(Z)_t = \exp \Big(Z_t - \frac{1}{2}[\bZ]_t\Big) = \exp \Big(\bar{Z}_t - \frac{1}{2}[\bar{Z}]_t\Big) = \bar{\cE}(\bar{Z})_t
  \end{equation*}
  almost surely, where $\bar{\cE}(\bar{Z})$ denotes the stochastic exponential of $\bar Z$. Moreover, it holds that $\sum_{i=1}^m \int_0^\cdot \phi_s^i \dd S_s^i = \int_0^\cdot \bar{\phi}_s^\top \dd \bar{S}_s$ almost surely. Thus, the log-optimal portfolio $(\bar{\phi}, \bar{\kappa})$, as provided in Theorem~\ref{thm: Goll-Kallsen's main theorem Ito processes}, and the pathwise log-optimal portfolio $(\phi, \kappa)$, as provided in Lemma~\ref{lem: pathwise optimal portfolio}, coincide almost surely.
\end{proof}

\begin{remark}
  We take $\bar{W}$ to be a Brownian motion to ensure that the pathwise log-optimal portfolio $(\phi,\kappa)$, as constructed in Lemma~\ref{lem: pathwise optimal portfolio}, is, indeed, a log-optimal portfolio for the investment-consumption problem in the setting of local volatility models. However, we emphasize that the construction of the pathwise portfolio $(\phi,\kappa)$, as well as its pathwise analysis developed in Sections~\ref{subsec: stability in local vola model} and \ref{subsec: discretization in local vola model} below, work for any path $W$ satisfying Assumption~\ref{ass: partitions exhausting the jumps of K}.
\end{remark}

\subsection{Stability of pathwise log-optimal portfolios with respect to drift and volatility}\label{subsec: stability in local vola model}

Having at hand a pathwise construction of log-optimal portfolios, we are in a position to study its pathwise stability properties. In this subsection, we analyze the stability of the log-optimal portfolio and the associated capital process with respect to the model parameters, $b$ and $\sigma$.

In particular, the following result shows that the pathwise log-optimal portfolio $(\phi, \kappa) = (\phi^{(b, \sigma)}, \kappa^{(b, \sigma)})$ and its associated capital processes $V = V^{(b,\sigma)}$ are locally Lipschitz continuous with respect to these parameters.

\begin{theorem}\label{thm: stability of optimal portfolios wrt model uncertainty}
  For $(b, \sigma), (\tb, \tsigma) \in C^3_b([0,T] \times \R^m;\R^m) \times \cA$, let $(\phi^{(b,\sigma)}, \kappa^{(b,\sigma)})$ and $(\phi^{(\tb,\tsigma)}, \kappa^{(\tb,\tsigma)})$ be the corresponding pathwise log-optimal portfolios, as constructed in Lemma~\ref{lem: pathwise optimal portfolio}. Let $M > 0$ be an upper bound for $\|b\|_{C^3_b}, \|\tb\|_{C^3_b}, \|\sigma\|_{C^3_b}, \|\tsigma\|_{C^3_b}$, as well as for
  \begin{equation*}
    \sup_{(t,x)} |\det(\sigma(t,x)\sigma(t,x)^\top)|^{-1}, \ \sup_{(t,x)} |\det(\tsigma(t,x)\tsigma(t,x)^\top)|^{-1} \ \text{ and } \ \|(\cdot, \bW)\|_p.
  \end{equation*}
  We then have that
  \begin{equation}\label{eq: stability est for optimal portfolio}
    \|(\phi^{(b,\sigma)}, \kappa^{(b,\sigma)}); (\phi^{(\tb,\tsigma)}, \kappa^{(\tb,\tsigma)})\|_{\crpW} \lesssim \|b - \tb\|_{C^2_b} + \|\sigma - \tsigma\|_{C^2_b}
  \end{equation}
  and
  \begin{equation*}
    \|V^{(b,\sigma)}; V^{(\tb,\tsigma)}\|_{\crpW} \lesssim \|b - \tb\|_{C^2_b} + \|\sigma -\tsigma\|_{C^2_b},
  \end{equation*}
  and in particular that
  \begin{equation*}
    \|(\phi^{(b,\sigma)}, \kappa^{(b,\sigma)}) - (\phi^{(\tb,\tsigma)}, \kappa^{(\tb,\tsigma)})\|_\infty \lesssim \|b - \tb\|_{C^2_b} + \|\sigma - \tsigma\|_{C^2_b}
  \end{equation*}
  and
  \begin{equation*}
    \|V^{(b,\sigma)} - V^{(\tb,\tsigma)}\|_\infty \lesssim \|b - \tb\|_{C^2_b} + \|\sigma - \tsigma\|_{C^2_b},
  \end{equation*}
  where the implicit multiplicative constants depend only on $p$, $m$, $d$, $M$, $s_0$ and the consumption clock $K$.
\end{theorem}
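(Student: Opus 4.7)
The plan is to propagate a single locally Lipschitz estimate through each step of the construction in Lemma~\ref{lem: pathwise optimal portfolio}, by exploiting the continuity of the standard operations on controlled paths.

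First, standard RDE stability applied to equation~\eqref{eq: dynamics of price process via RDE} (see, e.g., \cite[Theorem~2.5]{Allan2020}) yields the baseline estimate
\begin{equation*}
  \|S^{(b,\sigma)}; S^{(\tb,\tsigma)}\|_{\crpW} \lesssim \|b-\tb\|_{C^2_b} + \|\sigma - \tsigma\|_{C^2_b},
\end{equation*}
with a constant depending only on $p$, $M$, $\|(\cdot,\bW)\|_p$ and $s_0$. Note that while existence of $S$ requires $C^3_b$-regularity of the coefficients, the Lipschitz estimate only involves their $C^2_b$-difference, which is precisely the source of the $C^2_b$-norms on the right-hand side of~\eqref{eq: stability est for optimal portfolio}.

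Next, I propagate this estimate through the composite map $(b,\sigma,S) \mapsto H = (\sigma(\cdot,S)\sigma(\cdot,S)^\top)^{-1} b(\cdot,S)$. Each elementary operation involved---composition of a controlled path with a $C^2$ function, and the product or inversion of controlled paths (the latter is well defined since $|\det(\sigma\sigma^\top)| \geq M^{-1}$)---is locally Lipschitz on the relevant controlled-path spaces by Lemma~\ref{lem: product of controlled paths} together with the standard chain rule for controlled paths. Combining these with the estimate for $S$ gives $\|H^{(b,\sigma)}; H^{(\tb,\tsigma)}\|_{\crpW}$ bounded by the same right-hand side. The same bound then transfers successively to $Z := \sum_{i=1}^m \int_0^\cdot H^i \dd S^i$ by stability of rough integration (Lemma~\ref{lem: integration against controlled path}); to $\kappa = \frac{1}{K_T}\cE(Z)$ by continuity of the rough exponential (Lemma~\ref{lem: dynamics of rough exponential}); to $V = \kappa(K_T - K)$ by the controlled-path product rule, using that the deterministic consumption clock $K$ has finite $1$-variation and is fixed; to each $\phi^i = H^i V$ by the product rule again; and finally to $\phi^0 = \sum_{i=1}^m \bigl(\int_0^\cdot \phi^i_s \dd S^i_s - \phi^i S^i\bigr)$ by one more application of rough integration stability. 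The sup-norm estimates then follow immediately from~\eqref{eq: bound on sup-norm}.

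The main technical obstacle is not any individual step but the uniform control of the multiplicative constants appearing at each stage: every local Lipschitz bound depends on the controlled-path norms of the current iterates, so one needs a priori bounds of the form $\|\cdot\|_{\crpW} \leq C(M,s_0,\|(\cdot,\bW)\|_p,K)$ that hold simultaneously for the $(b,\sigma)$- and the $(\tb,\tsigma)$-versions of every object along the chain. These are obtained by applying the same sequence of operations in absolute (rather than differential) form, using the $C^3_b$-bound $M$ on the coefficients together with the lower bound $M^{-1}$ on $|\det(\sigma\sigma^\top)|$. Once they are in place, the individual Lipschitz estimates telescope to yield~\eqref{eq: stability est for optimal portfolio} and its analogue for $V$.
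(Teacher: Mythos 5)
Your high-level strategy — establish a Lipschitz estimate for $S$, then propagate it through compositions, products, inversions, rough integrals, and the exponential, with uniform a priori bounds controlling all constants — is exactly the paper's. However, there is a genuine gap at the step ``to $\kappa = \frac{1}{K_T}\cE(Z)$ by continuity of the rough exponential (Lemma~\ref{lem: dynamics of rough exponential})''. Lemma~\ref{lem: dynamics of rough exponential} only defines the rough exponential and establishes existence and uniqueness of the solution to the linear RDE \eqref{eq: dynamics of rough exponential}; it provides no stability or continuity statement. The rough exponential involves the bracket $[\bZ]$, which is built from the two-parameter lift $\bZ$ of the controlled path $Z$ (Lemma~\ref{lem: consistency of rough integrals}), and obtaining a Lipschitz estimate for $[\bZ]$ as a function of $(b,\sigma)$ requires a stability statement for that lift in addition to the one for $Z$ itself; none of this is supplied.

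The paper circumvents this entirely by a small but crucial algebraic reformulation: setting $\vartheta^{(b,\sigma)} := \sigma(\cdot,S^{(b,\sigma)})^\top H^{(b,\sigma)}$ and $\theta^{(b,\sigma)} := \big(\tfrac{1}{2}(\vartheta^{(b,\sigma)})^\top\vartheta^{(b,\sigma)}, (\vartheta^{(b,\sigma)})^\top\big)$, and using that $[\bZ^{(b,\sigma)}] = \int_0^\cdot (\vartheta^{(b,\sigma)}_s)^\top \vartheta^{(b,\sigma)}_s \dd s = \int_0^\cdot (H^{(b,\sigma)}_s)^\top b(s,S^{(b,\sigma)}_s)\dd s$ (via $c H = b$), one obtains the identity
\begin{equation*}
\cE(Z^{(b,\sigma)})_t = \exp(U^{(b,\sigma)}_t), \qquad U^{(b,\sigma)} := \int_0^\cdot \theta^{(b,\sigma)}_s \dd (\cdot,\bW)_s.
\end{equation*}
Thus $U^{(b,\sigma)}$ is a single rough integral of a controlled path against the \emph{fixed} time-extended rough path $(\cdot,\bW)$, whose Lipschitz dependence on $(b,\sigma)$ follows directly from stability of rough integration combined with the already-established estimates for $\theta$; the estimate for $\kappa$ is then obtained simply by composition with $\exp$, which is locally Lipschitz on controlled paths. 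You would need to supply this (or an equivalent argument establishing Lipschitz stability of the bracket and the lift $Z\mapsto\bZ$) to close the gap; as written, the cited lemma does not do the work you ascribe to it.
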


\begin{proof}
\emph{Step 1.}
Using the classical result from rough path theory (e.g., \cite[Theorem~2.5]{Allan2020}), for any $(b,\sigma) \in C^3_b([0,T] \times \R^m;\R^m) \times \cA$, we recall that there exists a unique controlled path $(S^{(b,\sigma)},(S^{(b,\sigma)})') \in \cV^p_{(\cdot,W)}$ which satisfies the rough differential equation
\begin{equation*}
S_t^{(b,\sigma)} = s_0 + \int_0^t (b, \sigma) (s,S^{(b,\sigma)}_s) \dd (\cdot, \bW)_s, \qquad t \in [0,T],
\end{equation*}
with $(S^{(b,\sigma)})' = (b,\sigma) (\cdot, S^{(b,\sigma)})$. By part~(i) of \cite[Corollary~3.3]{Kwossek2024}, we have that
\begin{equation*}
\|S^{(b,\sigma)}\|_{\cV^p_{(\cdot,W)}} + \|S^{(\tb,\tsigma)}\|_{\cV^p_{(\cdot,W)}} \leq C,
\end{equation*}
where the constant $C > 0$ depends only on $p$, $M$ and $s_0$. Further, by the continuity of the solution map with respect to the controlled path norm (e.g., part~(ii) of \cite[Corollary~3.3]{Kwossek2024}), for any $(b,\sigma), (\tilde b, \tilde \sigma) \in C^3_b([0,T] \times \R^m;\R^m) \times \cA$, it holds that
\begin{equation}\label{eq: Lipschitz estimate for S}
\|S^{(b,\sigma)}; S^{(\tb,\tsigma)}\|_{\crpW} \lesssim \|b - \tb\|_{C^2_b} + \|\sigma - \tsigma\|_{C^2_b},
\end{equation}
where the implicit multiplicative constant depends only on $p$, $M$ and $s_0$.

\emph{Step 2.}
Another well-known result in rough path theory is that compositions of controlled paths with regular functions remain controlled paths, and that such compositions are locally Lipschitz continuous (see, e.g., \cite[Lemma~3.5]{Friz2018}). More precisely, for $(b, \sigma)$ and $(\tb, \tsigma)$ in $C^3_b([0,T] \times \R^m; \R^m) \times \cA$, we have that $b(\cdot, S^{(b,\sigma)})$, $\sigma(\cdot, S^{(b,\sigma)})$, $\tb(\cdot, S^{(\tb,\tsigma)})$ and $\tsigma(\cdot, S^{(\tb,\tsigma)})$ are controlled paths in $\crpW$, and that
\begin{equation}\label{eq: estimate for b and sigma}
\|b(\cdot,S^{(b,\sigma)})\|_{\crpW} + \|\sigma(\cdot,S^{(b,\sigma)})\|_{\crpW} \leq C,
\end{equation}
where $C > 0$ depends only on $p$, $M$ and $s_0$, and of course the same bound holds for $\tb(\cdot,S^{(\tb,\tsigma)})$ and $\tsigma(\cdot,S^{(\tb,\tsigma)})$. It also holds that
\begin{equation*}
\|b(\cdot, S^{(b,\sigma)}); \tb(\cdot, S^{(\tb,\tsigma)})\|_{\crpW} \lesssim \|b - \tb\|_{C^2_b} + \|S^{(b,\sigma)}; S^{(\tb,\tsigma)}\|_{\crpW}
\end{equation*}
and
\begin{equation*}
\|\sigma(\cdot, S^{(b,\sigma)}); \tsigma(\cdot, S^{(\tb,\tsigma)})\|_{\crpW} \lesssim \|\sigma - \tsigma\|_{C^2_b} + \| S^{(b,\sigma)}; S^{(\tb,\tsigma)}\|_{\crpW},
\end{equation*}
where the implicit multiplicative constant depends only on $p$, $M$ and $s_0$. Combining the estimates above with \eqref{eq: Lipschitz estimate for S}, we obtain
\begin{equation}\label{eq: Lipschitz estimate for b and sigma}
\|b(\cdot, S^{(b,\sigma)}); \tb(\cdot, S^{(\tb,\tsigma)})\|_{\crpW} + \|\sigma(\cdot, S^{(b,\sigma)}); \tsigma(\cdot, S^{(\tb,\tsigma)})\|_{\crpW} \lesssim \|b - \tb\|_{C^2_b} + \|\sigma - \tsigma\|_{C^2_b},
\end{equation}
where the implicit multiplicative constant depends only on $p$, $M$ and $s_0$.

\emph{Step 3.}
Let $c^{(b,\sigma)} := \sigma(\cdot,S^{(b,\sigma)}) \sigma(\cdot,S^{(b,\sigma)})^\top$ and $c^{(\tb,\tsigma)} := \tsigma(\cdot,S^{(\tb,\tsigma)}) \tsigma(\cdot,S^{(\tb,\tsigma)})^\top$. We recall that the sum and the product of controlled paths are again controlled paths (see Lemma~\ref{lem: product of controlled paths}), as well as the reciprocal of a controlled path that is bounded away from zero (as a composition with the regular function $x \mapsto \frac{1}{x}$). Since $\det(\sigma(\cdot,S^{(b,\sigma)}) \sigma(\cdot,S^{(b,\sigma)})^\top)$ and $\det(\tilde \sigma(\cdot,S^{(\tilde b,\tilde \sigma)}) \tilde \sigma(\cdot,S^{(\tilde b,\tilde \sigma)})^\top)$ are bounded away from zero by assumption, we have (componentwise) that $(c^{(b,\sigma)})^{-1}$ and $(c^{(\tb,\tsigma)})^{-1}$ are controlled paths in $\crpW$. Applying Lemma~\ref{lem: product of controlled paths} and the bound in \eqref{eq: estimate for b and sigma}, we can show for each component that
\begin{equation}\label{eq: estimate for c}
\|((c^{(b,\sigma)})^{-1})^{ij}\|_{\crpW} \leq C,
\end{equation}
where $C > 0$ depends only on $p$, $m$, $d$, $M$ and $s_0$, and the same holds for $((c^{(\tb,\tsigma)})^{-1})^{ij}$.

By Lemma~\ref{lem: estimate product of controlled paths} and the fact that the composition of a controlled path with a regular function is locally Lipschitz, it follows from \eqref{eq: estimate for b and sigma} and \eqref{eq: Lipschitz estimate for b and sigma} that
\begin{equation}\label{eq: Lipschitz estimate for c}
\|((c^{(b,\sigma)})^{-1})^{ij}; ((c^{(\tb,\tsigma)})^{-1})^{ij}\|_{\crpW} \lesssim \|b - \tb\|_{C^2_b} + \|\sigma - \tsigma\|_{C^2_b},
\end{equation}
where the implicit multiplicative constant depends only on $p$, $m$, $d$, $M$ and $s_0$.

\emph{Step 4.}
We recall from the proof of Lemma~\ref{lem: pathwise optimal portfolio} that
\begin{equation*}
H^{(b,\sigma)} := (c^{(b,\sigma)})^{-1} b(\cdot,S^{(b,\sigma)}) \qquad \text{and} \qquad H^{(\tb,\tsigma)} := (c^{(\tb,\tsigma)})^{-1} \tb(\cdot,S^{(\tb,\tsigma)})
\end{equation*}
are controlled paths in $\crpW$. Moreover, Lemma~\ref{lem: product of controlled paths} and the estimates in \eqref{eq: estimate for b and sigma} and \eqref{eq: estimate for c} imply that
\begin{equation}\label{eq: estimate for H}
\|H^{(b,\sigma),i}\|_{\crpW} \leq C
\end{equation}
for each $i = 1, \ldots, m$, where $H^{(b,\sigma),i} = \sum_{j=1}^m ((c^{(b,\sigma)})^{-1})^{ij} b(\cdot,S^{(b,\sigma)})^j$, and $C > 0$ depends only on $p$, $m$, $d$, $M$ and $s_0$, and the same also holds for $H^{(\tb,\tsigma),i}$.

We then have from Lemma~\ref{lem: estimate product of controlled paths} that
\begin{align*}
\|&H^{(b,\sigma),i}; H^{(\tb,\tsigma),i}\|_{\crpW}\\
&\lesssim \sum_{j=1}^m \|((c^{(b,\sigma)})^{-1})^{ij}; ((c^{(\tb,\tsigma)})^{-1})^{ij}\|_{\crpW} + \|b(\cdot,S^{(b,\sigma)})^j; \tb(\cdot,S^{(\tb,\tsigma)})^j\|_{\crpW},
\end{align*}
and it follows from the estimates in \eqref{eq: Lipschitz estimate for b and sigma} and \eqref{eq: Lipschitz estimate for c} that
\begin{equation}\label{eq: Lipschitz estimate for H}
\|H^{(b,\sigma),i}; H^{(\tb,\tsigma),i}\|_{\crpW} \lesssim \|b - \tb\|_{C^2_b} + \|\sigma - \tsigma\|_{C^2_b},
\end{equation}
where the implicit multiplicative constant depends only on $p$, $m$, $d$, $M$ and $s_0$.

\emph{Step 5.}
Let $\vartheta^{(b,\sigma)} := \sigma(\cdot,S^{(b,\sigma)})^\top H^{(b,\sigma)}$ and $\vartheta^{(\tb,\tsigma)} := \tsigma(\cdot,S^{(\tb,\tsigma)})^\top H^{(\tb,\tsigma)}$. Then
\begin{equation*}
\theta^{(b,\sigma)} := \Big(\frac{1}{2} (\vartheta^{(b,\sigma)})^\top \vartheta^{(b,\sigma)}, (\vartheta^{(b,\sigma)})^\top\Big) \quad \text{and} \quad \theta^{(\tb,\tsigma)} := \Big(\frac{1}{2} (\vartheta^{(\tb,\tsigma)})^\top \vartheta^{(\tb,\tsigma)}, (\vartheta^{(\tb,\tsigma)})^\top\Big)
\end{equation*}
are controlled paths in $\crpW$, as, again, the product of controlled paths is itself a controlled path. By the same arguments as above, the estimates in \eqref{eq: estimate for b and sigma} and \eqref{eq: estimate for H} imply that
\begin{equation}\label{eq: estimate for theta}
\|\theta^{(b,\sigma)}\|_{\crpW} \leq C,
\end{equation}
where $C > 0$ depends only on $p$, $m$, $d$, $M$ and $s_0$, and the same holds for $\theta^{(\tb,\tsigma)}$, and the estimates in \eqref{eq: Lipschitz estimate for b and sigma} and \eqref{eq: Lipschitz estimate for H} imply moreover that
\begin{equation}\label{eq: Lipschitz estimate for theta}
\|\theta^{(b,\sigma)}; \theta^{(\tb,\tsigma)}\|_{\crpW} \lesssim \|b - \tb\|_{C^2_b} + \|\sigma - \tsigma\|_{C^2_b},
\end{equation}
where the implicit multiplicative constant also depends only on $p$, $m$, $d$, $M$ and $s_0$.

Define the rough integrals $U^{(b,\sigma)} := \int_0^\cdot \theta^{(b,\sigma)}_t \dd (\cdot,\bW)_t$ and $U^{(\tb,\tsigma)} := \int_0^\cdot \theta^{(\tb,\tsigma)}_t \dd (\cdot,\bW)_t$, which are controlled paths in $\crpW$. Using the estimate in \eqref{eq: estimate for rough integral} for the rough integral, and the bound in \eqref{eq: estimate for theta}, we obtain
\begin{equation}\label{eq: estimate for U}
\|U^{(b,\sigma)}\|_\infty \leq \|U^{(b,\sigma)}\|_{\crpW} \leq C_0,
\end{equation}
where $C_0 > 0$ depends only on $p$, $m$, $d$, $M$ and $s_0$, and the same holds for $U^{(\tb,\tsigma)}$. Furthermore, using the stability of rough integrals, e.g., \cite[Lemma~3.4]{Friz2018}, and the estimate~\eqref{eq: Lipschitz estimate for theta}, it follows immediately that
\begin{equation}\label{eq: Lipschitz estimate for U}
\|U^{(b,\sigma)}; U^{(\tb,\tsigma)}\|_{\crpW} \lesssim \|b - \tb\|_{C^2_b} + \|\sigma - \tsigma\|_{C^2_b},
\end{equation}
where the implicit multiplicative constant depends only on $p$, $m$, $d$, $M$ and $s_0$.

\emph{Step 6.}
Similarly to the proof of Lemma~\ref{lem: pathwise portfolio is optimal}, we let
\[Z^{(b,\sigma)} := \int_0^\cdot (H^{(b,\sigma)}_s)^\top \dd S^{(b,\sigma)}_s = \int_0^\cdot (H^{(b,\sigma)}_s)^\top b(s,S^{(b,\sigma)}_s) \dd s + \int_0^\cdot (H^{(b,\sigma)}_s)^\top \sigma(s,S^{(b,\sigma)}_s) \dd \bW_s,\]
and define $Z^{(\tb, \tsigma)}$ analogously, and then write $\bZ^{(b,\sigma)}$ for the canonical rough path lift of $Z^{(b,\sigma)}$, as in Lemma~\ref{lem: consistency of rough integrals}. We note in particular that
\[ [\bZ^{(b,\sigma)}] = \int_0^\cdot (\vartheta^{(b,\sigma)}_s)^\top \vartheta^{(b,\sigma)}_s \dd s = \int_0^\cdot (H^{(b,\sigma)}_s)^\top b(s,S^{(b,\sigma)}_s) \dd s. \]
Recalling the definition of the rough exponential $\cE$ from Lemma~\ref{lem: dynamics of rough exponential}, we then have that
\begin{align*}
\cE(Z^{(b,\sigma)})_t &= \exp \Big(Z^{(b,\sigma)}_t - \frac{1}{2}[\mathbf{Z}^{(b,\sigma)}]_t\Big)\\
&= \exp \bigg( \frac{1}{2} \int_0^t (\vartheta^{(b,\sigma)}_s)^\top \vartheta^{(b,\sigma)}_s \dd s + \int_0^t (\vartheta^{(b,\sigma)}_s)^\top \dd \bW_s \bigg) = \exp (U^{(b,\sigma)}_t)
\end{align*}
for each $t \in [0,T]$. By Lemma~\ref{lem: pathwise optimal portfolio}, we have that
\begin{equation*}
\kappa^{(b,\sigma)}_t := \frac{1}{K_T} \cE(Z^{(b, \sigma)})_t, \qquad \kappa^{(\tb, \tsigma)}_t := \frac{1}{K_T} \cE(Z^{(\tb, \tsigma)})_t
\end{equation*}
are the pathwise optimal consumption rates for the log-utility on the financial market modelled by $S^{(b,\sigma)}$ and $S^{(\tb, \tsigma)}$ respectively, and are controlled paths in $\crpW$. We deduce from the bound in \eqref{eq: estimate for U} that
\begin{equation}\label{eq: estimate for kappa}
\|\kappa^{(b,\sigma)}\|_{\crpW} \leq C,
\end{equation}
where $C > 0$ depends only on $p$, $m$, $d$, $M$, $s_0$, $\|\exp\|_{C^2_b(\{y : |y| \leq C_0\};\R)}$ and the consumption clock~$K$, and the same holds for $\kappa^{(\tb,\tilde \sigma)}$.

Because the composition of a controlled path with a regular function is locally Lipschitz continuous (see, e.g., \cite[Lemma~3.5]{Friz2018}), it follows from \eqref{eq: Lipschitz estimate for U} that
\begin{equation}\label{eq: Lipschitz estimate for kappa}
\|\kappa^{(b,\sigma)}; \kappa^{(\tb,\tsigma)}\|_{\crpW} \lesssim \|b - \tb\|_{C^2_b} + \|\sigma - \tsigma\|_{C^2_b},
\end{equation}
where the implicit multiplicative constant depends only on $p$, $m$, $d$, $M$, $s_0$ and $K$.

\emph{Step 7.}
Since $K_t$, $t \in [0,T]$, is a c{\`a}dl{\`a}g (deterministic) and increasing function (so of finite $1$-variation), we have by Lemma~\ref{lem: product of controlled paths} that the wealth process $V^{(b,\sigma)}_t := \kappa^{(b,\sigma)}_t (K_T - K_t)$, $t \in [0,T]$, (as the product of two controlled paths) is a controlled path in $\crpW$. We also have from \eqref{eq: estimate for kappa} that
\begin{equation}\label{eq: estimate for V}
\|V^{(b,\sigma)}\|_{\crpW} \leq C,
\end{equation}
where $C > 0$ depends only on $p$, $m$, $d$, $M$, $s_0$ and $K$. By Lemma~\ref{lem: estimate product of controlled paths} and \eqref{eq: Lipschitz estimate for kappa}, we then have that
\begin{equation}\label{eq: Lipschitz estimate for V}
\|V^{(b,\sigma)}; V^{(\tb,\tsigma)}\|_{\crpW} \lesssim \|\kappa^{(b,\sigma)}; \kappa^{(\tb,\tsigma)}\|_{\crpW} \lesssim \|b - \tb\|_{C^2_b} + \|\sigma - \tsigma\|_{C^2_b},
\end{equation}
where the implicit multiplicative constant depends only on $p$, $m$, $d$, $M$, $s_0$ and $K$.

\emph{Step 8.}
By Lemma~\ref{lem: product of controlled paths}, $\phi^{(b,\sigma),i} := H^{(b,\sigma),i} V^{(b,\sigma)}$ and $\phi^{(\tb, \tsigma),i} := H^{(\tb, \tsigma),i} V^{(\tb, \tsigma)}$ are controlled paths in $\crpW$ for each $i = 1, \ldots, m$, and, recalling \eqref{eq: estimate for H} and \eqref{eq: estimate for V}, we have that
\begin{equation*}
\|\phi^{(b,\sigma),i}\|_{\crpW} \leq C,
\end{equation*}
where $C > 0$ depends only on $p$, $m$, $d$, $M$, $s_0$ and $K$, and the same holds for $\phi^{(\tb, \tsigma),i}$. By Lemma~\ref{lem: estimate product of controlled paths}, combined with the estimates in \eqref{eq: Lipschitz estimate for H} and \eqref{eq: Lipschitz estimate for V}, we obtain
\begin{equation}\label{eq: Lipschitz estimate for phi_i}
\begin{split}
\|\phi^{(b,\sigma),i}; \phi^{(\tb,\tsigma),i}\|_{\crpW} &\lesssim \|H^{(b,\sigma),i}; H^{(\tb,\tsigma),i}\|_{\crpW} + \|V^{(b,\sigma),i}; V^{(\tb,\tsigma),i}\|_{\crpW}\\
&\lesssim \|b - \tb\|_{C^2_b} + \|\sigma - \tsigma\|_{C^2_b},
\end{split}
\end{equation}
where the implicit multiplicative constant depends only on $p$, $m$, $d$, $M$, $s_0$ and $K$. Finally, we consider
\begin{equation*}
\phi^{(b,\sigma),0}_t = \sum_{i=1}^m \bigg( \int_0^t \phi^{(b,\sigma),i}_s \dd S^{(b,\sigma),i}_s - \phi^{(b,\sigma),i}_t S^{(b,\sigma),i}_t \bigg)
\end{equation*}
for $t \in [0,T]$, with $\phi^{(\tb, \tsigma),0}_t$ defined similarly. By the associativity property of rough integrals, we can write
\begin{align*}
\int_0^\cdot \phi^{(b,\sigma),i}_s \dd S^{(b,\sigma),i}_s = \int_0^\cdot \psi^{(b,\sigma),i}_s \dd (\cdot, \bW)_s,
\end{align*}
where
\begin{equation*}
\psi^{(b,\sigma),i}_s := (\phi^{(b,\sigma),i}_s b(s,S^{(b,\sigma)}_s)^i, \phi^{(b,\sigma),i}_s \sigma(s,S^{(b,\sigma)}_s)^{i\cdot}),
\end{equation*}
with $\psi^{(\tb,\tsigma),i}$ defined similarly. It follows from \eqref{eq: Lipschitz estimate for b and sigma} and \eqref{eq: Lipschitz estimate for phi_i} that
\begin{equation*}
\|\psi^{(b,\sigma),i}; \psi^{(\tb,\tsigma),i}\|_{\crpW} \lesssim \|b - \tb\|_{C^2_b} + \|\sigma - \tsigma\|_{C^2_b},
\end{equation*}
for each $i = 1, \ldots, m$, and hence, by the stability of rough integration (e.g., \cite[Lemma~3.4]{Friz2018}) and the estimates in \eqref{eq: Lipschitz estimate for S} and \eqref{eq: Lipschitz estimate for phi_i}, that
\begin{equation*}
\|\phi^{(b,\sigma),0}; \phi^{(\tb,\tsigma),0}\|_{\crpW} \lesssim\|b - \tb\|_{C^2_b} + \|\sigma - \tsigma\|_{C^2_b},
\end{equation*}
where the implicit multiplicative constant depends only on $p$, $m$, $d$, $M$, $s_0$ and $K$.

Since we can bound the supremum norm by the controlled path norm (via the bound in \eqref{eq: bound on sup-norm}), the (local) Lipschitz continuity for optimal portfolios and wealth processes follows.
\end{proof}

In Theorem~\ref{thm: stability of optimal portfolios wrt model uncertainty} above we established stability of the optimal portfolio and the capital process with respect to perturbations in the model parameters, supposing that we always adopt the optimal portfolio according to the model considered.

However, from the perspective of stability with respect to \emph{model uncertainty}, we actually desire a slightly different statement. In particular, while we adopt the optimal portfolio according to the (possibly misspecified) model $(b,\sigma)$, we should consider the underlying price path $S$ as being governed by the ``true'' model $(\tb,\tsigma)$, and the capital process should be defined accordingly, accounting for this model discrepancy.

The path $\hat{V}^{(b,\sigma)}$ in \eqref{eq: defn hatV} below is a pathwise formulation of the wealth process $\bar{V}$ in \eqref{eq: defn wealth process}. In the case when $b$ and $\sigma$ are the true model parameters (i.e., the ones under which the price path $S$ is generated), one can verify that $\hat{V}^{(b,\sigma)}$ coincides with the path $V^{(b,\sigma)}$ in Lemma~\ref{lem: pathwise optimal portfolio}.

\begin{corollary}\label{corollary: model uncertainty for local vol model}
Recall the assumptions of Theorem~\ref{thm: stability of optimal portfolios wrt model uncertainty}, and let
\begin{equation}\label{eq: defn hatV}
\hat{V}^{(b,\sigma)}_t := 1 + \int_0^t (\phi^{(b,\sigma)}_s)^\top \dd S_s - \int_0^t \kappa^{(b,\sigma)}_s \dd K_s, \qquad t \in [0,T],
\end{equation}
with $\hat{V}^{(\tb,\tsigma)}$ defined analogously. Here, $\phi^{(b,\sigma)}$ and $\kappa^{(b,\sigma)}$ are defined as in Lemma~\ref{lem: pathwise optimal portfolio} above, except that we now take a common price path $S^{(b,\sigma)} = S^{(\tb,\tsigma)} =: S$ (which we may consider to be generated by the true model parameters). Then the stability estimate in \eqref{eq: stability est for optimal portfolio} holds, and we have moreover that
\begin{equation*}
\|\hat{V}^{(b,\sigma)}; \hat{V}^{(\tb,\tsigma)}\|_{\crpW} \lesssim \|b - \tb\|_{C^2_b} + \|\sigma - \tsigma\|_{C^2_b},
\end{equation*}
where the implicit multiplicative constant depends only on $p$, $m$, $d$, $M$, $s_0$ and $K$.
\end{corollary}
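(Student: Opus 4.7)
The strategy is to repeat the argument of Theorem~\ref{thm: stability of optimal portfolios wrt model uncertainty} with the shared price path $S^{(b,\sigma)} = S^{(\tb,\tsigma)} = S$, and then to handle the new object $\hat{V}$ separately. The key observation is that every stability estimate in Steps~2--8 of that proof bounds the controlled-path distance of a composition or product by local Lipschitz continuity, with $\|S^{(b,\sigma)}; S^{(\tb,\tsigma)}\|_{\crpW}$ appearing merely as an auxiliary term that was itself controlled via \eqref{eq: Lipschitz estimate for S}. Once we take $S$ common, this auxiliary term is identically zero, so the chain of bounds from \eqref{eq: Lipschitz estimate for b and sigma} through \eqref{eq: Lipschitz estimate for phi_i} telescopes to $\|b - \tb\|_{C^2_b} + \|\sigma - \tsigma\|_{C^2_b}$ without ever invoking \eqref{eq: Lipschitz estimate for S}. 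This yields \eqref{eq: stability est for optimal portfolio} directly.

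The only point that must be reworked is Step~6, which previously exploited the RDE satisfied by $S^{(b,\sigma)}$ to rewrite $Z^{(b,\sigma)} = \int_0^\cdot (H^{(b,\sigma)})^\top \dd S^{(b,\sigma)}$ via associativity as an integral against $(\cdot, \bW)$. In the present setup $S$ is a generic controlled path in $\cV^p_{(\cdot,W)}$, so I would instead view $Z^{(b,\sigma)}$ as the rough integral of the controlled path $H^{(b,\sigma)}$ against the controlled path $S$ (cf.\ Lemma~\ref{lem: integration against controlled path}) and apply the stability of rough integration together with \eqref{eq: Lipschitz estimate for H} to control $\|Z^{(b,\sigma)}; Z^{(\tb,\tsigma)}\|_{\crpW}$. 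Composition with the smooth map $\cE$ (Lemma~\ref{lem: dynamics of rough exponential}) then transfers this to $\kappa$, and the product and rough-integral arguments of Steps~7--8 carry over with only notational changes to produce $\|V^{(b,\sigma)}; V^{(\tb,\tsigma)}\|_{\crpW}$ and $\|\phi^{(b,\sigma),i}; \phi^{(\tb,\tsigma),i}\|_{\crpW}$ bounds.

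For the $\hat{V}$ estimate, I would decompose
\begin{equation*}
\hat{V}^{(b,\sigma)} - \hat{V}^{(\tb,\tsigma)} = \int_0^\cdot \bigl(\phi^{(b,\sigma)}_s - \phi^{(\tb,\tsigma)}_s\bigr)^\top \dd S_s - \int_0^\cdot \bigl(\kappa^{(b,\sigma)}_s - \kappa^{(\tb,\tsigma)}_s\bigr) \dd K_s
\end{equation*}
and bound the two pieces separately. The first is the rough integral of a controlled-path difference against the common $S$, so stability of rough integration combined with the already-established bound on $\|\phi^{(b,\sigma)}; \phi^{(\tb,\tsigma)}\|_{\crpW}$ controls it in $\crpW$-norm by $\|b - \tb\|_{C^2_b} + \|\sigma - \tsigma\|_{C^2_b}$. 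The second is an ordinary Riemann--Stieltjes integral against the deterministic finite-$1$-variation path $K$, which by Lemma~\ref{lem: product of controlled paths} is a controlled path whose $\crpW$-norm is bounded by (total variation of $K$)$\,\cdot\,\|\kappa^{(b,\sigma)}; \kappa^{(\tb,\tsigma)}\|_{\crpW}$, again controlled by \eqref{eq: stability est for optimal portfolio}.

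The main (modest) obstacle is the Step~6 modification: one must ensure that the rough integral $\int_0^\cdot (H^{(b,\sigma)})^\top \dd S_s$ and the corresponding stability estimate can be handled without reference to any RDE structure for $S$. This is exactly what the integration-against-controlled-paths framework (Lemma~\ref{lem: integration against controlled path}) provides, so the remaining bookkeeping is routine, and the implicit constants depend only on the stated quantities via a uniform bound on $\|S\|_{\cV^p_{(\cdot,W)}}$, which in turn is governed by $p$, $M$, $s_0$ through the standard a priori RDE estimate (as in Step~1 of Theorem~\ref{thm: stability of optimal portfolios wrt model uncertainty}) applied to the true underlying model.
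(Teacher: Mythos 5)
Your overall route agrees with the paper's: take the common price path $S$, note that $\|S^{(b,\sigma)}; S^{(\tb,\tsigma)}\|_{\crpW}=0$ so Step~1 is vacuous, rerun Steps~2--8, and estimate $\hat{V}$ by combining stability of rough integration against $S$ with stability of Young integration against $K$. You deserve credit for explicitly flagging that Step~6 does not port verbatim: the paper's identity $\cE(Z^{(b,\sigma)})=\exp(U^{(b,\sigma)})$ with $U^{(b,\sigma)}=\int_0^\cdot\theta^{(b,\sigma)}\,\d(\cdot,\bW)$ uses the cancellation $(\vartheta^{(b,\sigma)})^\top\vartheta^{(b,\sigma)}=(H^{(b,\sigma)})^\top b(\cdot,S^{(b,\sigma)})$, which relies on $S^{(b,\sigma)}$ solving the RDE with coefficients $(b,\sigma)$; once a common $S$ is imposed, this identity fails and $U^{(b,\sigma)}$ as defined in Step~5 no longer equals $Z^{(b,\sigma)}-\tfrac12[\bZ^{(b,\sigma)}]$. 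The paper's proof silently elides this.

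Where your proposal goes soft is the phrase ``composition with the smooth map $\cE$.'' The rough exponential $\cE$ is \emph{not} a composition of a controlled path with a smooth function; it is $\exp\bigl(Z-\tfrac12[\bZ]\bigr)$ (up to jump corrections), and the bracket $[\bZ]$ is a nonlinear functional of the controlled path built from the canonical rough path lift $\bZ$. Lemma~\ref{lem: dynamics of rough exponential} constructs $\cE$ and characterizes it as an RDE solution but does not assert local Lipschitz continuity on controlled paths; to use $\cE$ the way you propose, you would need a separate stability estimate for the bracket (and, since $W$ in Assumption~\ref{ass: partitions exhausting the jumps of K} need not be continuous, also for the jump-product factor). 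The cleaner route, implicit in the paper, is to still apply the associativity Proposition~\ref{prop: associativity of rough integration} to the common $S$ using the true parameters that generate it: this again exhibits $Z^{(b,\sigma)}-\tfrac12[\bZ^{(b,\sigma)}]$ as a rough integral $\tilde U^{(b,\sigma)}=\int_0^\cdot\tilde\theta^{(b,\sigma)}\,\d(\cdot,\bW)$ of a controlled path $\tilde\theta^{(b,\sigma)}$ that is built from $H^{(b,\sigma)}$ and the (fixed) true coefficients, hence depends Lipschitz on $(b,\sigma)$ via~\eqref{eq: Lipschitz estimate for H}. Then $\cE(Z^{(b,\sigma)})=\exp(\tilde U^{(b,\sigma)})$ is a genuine composition with the smooth function $\exp$, and Steps~6--8 go through as written. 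With this repair your argument is correct; the $\hat V$ decomposition in your last paragraph matches the paper's.
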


\begin{proof}
The estimate in \eqref{eq: stability est for optimal portfolio} holds by the same argument as in the proof of Theorem~\ref{thm: stability of optimal portfolios wrt model uncertainty}, except that one may essentially skip Step~1, since in this case $\|S^{(b,\sigma)}; S^{(\tb,\tsigma)}\|_{\crpW} = 0$.

Using the stability of rough and Young integration, we have that
\begin{equation*}
\|\hat{V}^{(b,\sigma)}; \hat{V}^{(\tb,\tsigma)}\|_{\crpW} \lesssim \|\phi^{(b,\sigma)}; \phi^{(\tb,\tsigma)}\|_{\crpW} + \|\kappa^{(b,\sigma)}; \kappa^{(\tb,\tsigma)}\|_p,
\end{equation*}
and we then simply recall the estimates in \eqref{eq: Lipschitz estimate for kappa} and \eqref{eq: Lipschitz estimate for phi_i}.
\end{proof}

\subsection{Discretization error of pathwise log-optimal portfolios}\label{subsec: discretization in local vola model}

To implement the pathwise log-optimal portfolio on a real financial market would require us to trade continuously in time. In reality, trading might be done at a very high frequency, but still on a discrete time grid and, thus, requires a discretization of any theoretically optimal portfolio.

We now introduce a discretized version of the pathwise log-optimal portfolio, as constructed in Lemma~\ref{lem: pathwise optimal portfolio}, and derive quantitative, pathwise error estimates resulting from this discretization for the portfolios as well as for their associated capital processes.

\smallskip

To define the discretized version of the pathwise log-optimal portfolio, we start by discretizing the underlying price paths. To this end, we recall that $W$ and the sequence of partitions $(\cP^n)_{n \in \N}$ satisfy Assumption~\ref{ass: partitions exhausting the jumps of K}, where $\cP^n = \{0 = t_0^n < t_1^n < \cdots < t_{N_n}^n = T\}$. For $n \in \N$, let $W^n \colon [0,T] \to \R^d$ be the piecewise constant approximation of $W$ along $\cP^n$, that is,
\begin{equation}\label{eq: defn W^n}
W^n_t := W_T \1_{\{T\}}(t) + \sum_{k=0}^{N_n-1} W_{t^n_k} \1_{[t^n_k,t^n_{k+1})}(t), \qquad t \in [0,T],
\end{equation}
and, setting $\gamma_t := t$, we define a time-discretization path along $\cP^n$ by
\begin{equation}\label{eq: defn gamma^n}
\gamma^n_t := T \1_{\{T\}}(t) + \sum_{k=0}^{N_n-1} t^n_k \1_{[t^n_k,t^n_{k+1})}(t), \qquad t \in [0,T].
\end{equation}

To discretize the price path $S$, we use the classical Euler approximation $S^n$ corresponding to the RDE \eqref{eq: dynamics of price process via RDE} along the partition $\cP^n$, which is given by
\begin{equation}\label{eq: dynamics of price process Euler scheme}
S^n_t = s_0 + \sum_{i \hspace{1pt} : \hspace{1pt} t^n_{i+1} \leq t} b(t^n_i,S^n_{t^n_i}) (t^n_{i+1} - t^n_i) + \sum_{i \hspace{1pt} : \hspace{1pt} t^n_{i+1} \leq t} \sigma(t^n_i,S^n_{t^n_i}) (W_{t^n_{i+1}} - W_{t^n_i})
\end{equation}
for $t \in [0,T]$. For each $n \in \N$, the discretized version of the pathwise log-optimal portfolio $(\phi^n, \kappa^n)$ is defined by
\begin{align*}
H^n_t &:= (\sigma(\gamma^n_t,S^n_t) \sigma(\gamma^n_t,S^n_t)^\top)^{-1} b(\gamma^n_t,S^n_t), \qquad \vartheta^n_t := \sigma(\gamma^n_t,S^n_t)^\top H^n_t,\\
\theta^n_t &:= \Big(\frac{1}{2} (\vartheta^n_t)^\top \vartheta^n_t, (\vartheta^n_t)^\top\Big),\\
\kappa^n_t &:= \frac{1}{K_T} \exp \bigg( \int_0^t \theta^n_s \dd (\gamma^n,W^n)_s \bigg), \qquad V^n_t := \kappa^n_t (K_T - K^n_t),\\
\phi^{n,i}_t &:= H^{n,i}_t V^n_t, \qquad i = 1, \ldots, m, \qquad \phi^{n,0}_t := \sum_{i=1}^m \int_0^t \phi^{n,i}_s \dd S^{n,i}_s - \phi^{n,i}_t S^{n,i}_t,
\end{align*}
for $t \in [0,T]$, where the integrals above are simply defined as left-point Riemann sums, and $K^n$ denotes the piecewise constant approximation of $K$ along $\cP^n$.

We also define the wealth processes associated with the log-optimal portfolio and with the discretized portfolio above, via
\begin{equation*}
\hat{V}_t := 1 + \int_0^t \phi_s^\top \dd S_s - \int_0^t \kappa_s \dd K_s, \qquad t \in [0,T],
\end{equation*}
and
\begin{equation}\label{eq: defn hat V^n}
\hat{V}^n_t := 1 + \int_0^t (\phi^n_s)^\top \dd S_s - \int_0^t \kappa^n_s \dd K_s, \qquad t \in [0,T].
\end{equation}

We note that the wealth process $\hat{V}^n$ is defined by considering the performance of the discretized portfolio $(\phi^n, \kappa^n)$ relative to the (non-discretized) price process $S$ and consumption clock $K$. Of course, since $(\phi, \kappa)$ is the (pathwise) log-optimal portfolio, one can verify that $\hat{V} = V$, where $V$ is the path defined in Lemma~\ref{lem: pathwise optimal portfolio}. On the other hand, it will typically be the case that $\hat{V}^n \neq V^n$. Indeed, $V^n$ is piecewise constant, whereas $\hat{V}^n$ represents the wealth generated by the discretized portfolio on the continuous (i.e., non-discretized) market.

For these discretized portfolios and their associated capital processes, we obtain the following convergence result with quantitative error estimates.

\begin{theorem}\label{thm: stability of optimal portfolios wrt discretization error}
For $(b, \sigma) \in C^3_b([0,T] \times \R^m; \R^m) \times \cA$, let $(\phi^{(b,\sigma)}, \kappa^{(b,\sigma)})$ be the pathwise log-optimal portfolio, as constructed in Lemma~\ref{lem: pathwise optimal portfolio}. Then
\begin{equation*}
\|(\phi^n,\kappa^n) - (\phi,\kappa)\|_{p'} \, \longrightarrow \, 0 \qquad \text{as} \quad n \, \longrightarrow \, \infty
\end{equation*}
and
\begin{equation*}
\|\hat{V}^n - \hat{V}\|_{p'} \, \longrightarrow \, 0 \qquad \text{as} \quad n \, \longrightarrow \, \infty,
\end{equation*}
for any $p' \in (p,3)$, with a rate of convergence given by
\begin{align*}
&\|(\phi^n,\kappa^n) - (\phi,\kappa)\|_{p'} + \|\hat{V}^n - \hat{V}\|_{p'}\\
&\lesssim |\cP^n|^{(1-\frac{1}{q}) (1-\frac{p}{p'})} + \|W^n - W\|_\infty^{1-\frac{p}{p'}} + \bigg\| \int_0^\cdot W^n_t \otimes \dd W_t - \int_0^\cdot W_t \otimes \dd W_t \bigg\|_\infty^{1-\frac{p}{p'}},
\end{align*}
for any $q \in (1,2)$ such that $\frac{1}{p'} + \frac{1}{q} > 1$, where the implicit multiplicative constant depends only on $p$, $p'$, $q$, $m$, $d$, $\|b\|_{C^3_b}$, $\|\sigma\|_{C^3_b}$, $\sup_{(t,x)} |\det(\sigma(t,x)\sigma(t,x)^\top)|^{-1}$, $s_0$, $|W_0|$ and $w(0,T)$, where $w$ is the control function for which \eqref{eq: RIE inequality} holds for $(\cdot,W)$. Moreover, the same convergence and estimates also hold when the $p'$-variation norm is replaced by the supremum norm.
\end{theorem}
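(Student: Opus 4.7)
The plan is to mirror the stability analysis carried out in the proof of Theorem~\ref{thm: stability of optimal portfolios wrt model uncertainty}, but replacing the two sets of parameters $(b,\sigma), (\tb,\tsigma)$ with the continuous object and its Euler-discretized counterpart along $\cP^n$, and then converting the resulting supremum-norm estimates into $p'$-variation estimates via an interpolation inequality. The three error quantities $|\cP^n|^{1 - 1/q}$, $\|W^n - W\|_\infty$, and $\|\int W^n \otimes \dd W - \int W \otimes \dd W\|_\infty$ encode, respectively, the time, first-order, and second-order discrepancy of the discretized driver $(\gamma^n, W^n)$ from $(\cdot, W)$; the restriction $q \in (1,2)$ with $1/p' + 1/q > 1$ enters so that Young integrals of $p'$-variation paths against the $q$-variation path $\gamma - \gamma^n$ are well-defined, with the standard bound $\|\gamma^n - \gamma\|_q \lesssim |\cP^n|^{1 - 1/q}$.

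First, I would establish a rough-path-level convergence statement for $(\gamma^n, W^n) \to (\cdot, W)$ together with the associated iterated integrals; since $W$ satisfies Property \textup{(RIE)}, the iterated-integral component of the canonical lift of $W^n$ converges uniformly to $\W$, so one obtains a quantitative estimate for $(\gamma^n, \bW^n)$ in a suitable non-geometric rough-path metric, with rate given by the sum of the three driver errors. I would then invoke an Euler-scheme convergence estimate for the RDE \eqref{eq: dynamics of price process via RDE} in the c\`adl\`ag rough-path setting (essentially contained in \cite{Allan2023b, Kwossek2024}) to deduce that $S^n$ converges to $S$ with the same sup-norm rate, and that $(S^n, (b,\sigma)(\gamma^n, S^n))$ has controlled-path norm uniformly bounded in $n$. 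At this stage Theorem~\ref{thm: rough int against controlled path under RIE} is crucial: it ensures that every discretized rough integral appearing in the definition of $(\phi^n, \kappa^n)$ is nothing other than the corresponding left-point Riemann sum, so no hidden discretization error arises.

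Propagation through all derived objects then follows the chain of the proof of Theorem~\ref{thm: stability of optimal portfolios wrt model uncertainty} step by step: local Lipschitz continuity of smooth composition on controlled paths for $b(\gamma^n, S^n)$ and $\sigma(\gamma^n, S^n)$; the uniform lower bound on $\det(\sigma \sigma^\top)$ giving stability of $(c^n)^{-1}$ and hence of $H^n$; the product-rule estimates from Lemmas~\ref{lem: product of controlled paths} and \ref{lem: estimate product of controlled paths} for $\vartheta^n$ and $\theta^n$; the stability of rough integration from \cite[Lemma~3.4]{Friz2018} applied to $U^n = \int_0^\cdot \theta^n \dd (\gamma^n, \bW^n)$; and local Lipschitz continuity of the exponential for $\kappa^n$. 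Finiteness of $1$-variation of $K$ then gives $V^n = \kappa^n(K_T - K^n)$ via Young's integral; Assumption~\ref{ass: partitions exhausting the jumps of K} ensures $J_K \subseteq \liminf_n \cP^n$ so that $K^n \to K$ in a manner compatible with Young integration, and no spurious jump error is introduced. The bookkeeping for $\phi^{n,i}$ and $\phi^{n,0}$ is then identical to Step 8 of the proof of Theorem~\ref{thm: stability of optimal portfolios wrt model uncertainty}.

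Convergence of $\hat V^n$ to $\hat V = V$ follows, once $(\phi^n, \kappa^n) \to (\phi, \kappa)$ has been established, from the stability of the rough integral $\int \phi^n \dd S$ and of the Young integral $\int \kappa^n \dd K$. Finally, to upgrade the supremum-norm rate of convergence to a $p'$-variation rate, I would invoke the interpolation inequality
\begin{equation*}
\|Y\|_{p'} \leq (2 \|Y\|_\infty)^{1 - p/p'} \|Y\|_p^{p/p'},
\end{equation*}
valid for any $Y \in D^p$, applied to each of the relevant differences, whose $p$-variation norms remain uniformly bounded in $n$ thanks to the uniform controlled-path bounds obtained along the chain; this accounts precisely for the exponent $1 - p/p'$ appearing in the rate. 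The main obstacle is the initial quantitative Euler step for $S^n$: although the ingredients are available in the c\`adl\`ag rough-path literature, one must carefully track how the three driver-level errors combine and how Property \textup{(RIE)} transfers to the approximate solution $S^n$, so that the uniform controlled-path bound feeding every subsequent step is established with the stated dependence on $p$, $q$, and the structural constants.
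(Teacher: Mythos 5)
Your overall plan matches the paper's proof in its main strokes: establish a quantitative estimate for the $p'$-variation rough-path distance between the discretized driver $(\cdot,\bW)^n$ and $(\cdot,\bW)$, invoke the Euler-scheme convergence result to control $S^n$ versus $S$ in controlled-path norm, and then propagate Lipschitz estimates through the chain of compositions, products, and rough integrals precisely as in the proof of Theorem~\ref{thm: stability of optimal portfolios wrt model uncertainty}, using Theorem~\ref{thm: rough int against controlled path under RIE} to identify the rough integrals with the Riemann sums defining the discretized portfolio. This is what the paper does via Lemma~\ref{lem: rate of convergence of space-time rough paths} (which rests on \cite[Lemma~2.8]{Allan2023c}) and \cite[Corollary~3.3]{Kwossek2024}. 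However, the role you assign to interpolation is confused. The interpolation inequality $\|Y\|_{p'}\leq(2\|Y\|_\infty)^{1-p/p'}\|Y\|_p^{p/p'}$ is used exactly once in the paper, at the level of the rough-path distance $\|(\cdot,\bW)^n;(\cdot,\bW)\|_{p'}$, producing the exponent $1-\frac{p}{p'}$ on the driver errors; the chain then propagates $\mathcal{V}^{p'}$-controlled-path distances, which yield the $p'$-variation estimate directly through \eqref{eq: bound on p var norm}. A further interpolation at the end is either redundant (if your rough-path-level rate already carries the exponent) or, if you first derived a sup-norm rate already carrying the exponent $1-\frac{p}{p'}$ and then interpolated again, would yield the strictly worse exponent $(1-\frac{p}{p'})^2$. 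Also note that the propagation cannot be done purely in sup-norm: rough-integration stability estimates are in controlled-path norms, and $\|(\cdot,\bW)^n;(\cdot,\bW)\|_p$ does not tend to zero, so the passage to $p'$ has to happen at the driver level, not at the end.

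The more substantial gap is the treatment of $\hat{V}^n$. Since $\phi^n$ is a controlled path with respect to $(\cdot,W)^n$ while $S$ is a controlled path with respect to $(\cdot,W)$, the integral $\int_0^t(\phi^n_s)^\top\dd S_s$ in \eqref{eq: defn hat V^n} is not a rough integral in the sense of Lemma~\ref{lem: integration against controlled path} relative to any single rough path. The paper introduces the ``mixed'' rough path $\widetilde{\bW}^n=(W,W^n,\widetilde{\W}^n)$ with $\widetilde{\W}^n_{s,t}=\int_s^t W^n_{s,u}\otimes\dd W_u$, augments it with time to get $(\cdot,\widetilde{\bW})^n$, and defines and estimates the integral via \cite[Proposition~2.4 and Proposition~2.7]{Allan2023b}. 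The bound on $\|(\cdot,\widetilde{\bW})^n;(\cdot,\bW)\|_{p'}$ is obtained by a separate interpolation and is precisely what contributes the term $\|\int_0^\cdot W^n_t\otimes\dd W_t-\int_0^\cdot W_t\otimes\dd W_t\|_\infty^{1-p/p'}$ in the $\hat V^n$ estimate. Your sketch glosses over this by appealing to ``stability of the rough integral $\int\phi^n\dd S$'' without saying in what sense this object is defined; that step needs the mixed rough path to be made rigorous.
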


Before we present the proof of Theorem~\ref{thm: stability of optimal portfolios wrt discretization error}, some preliminary steps are necessary. We start by noting that, since $W^n$ has finite $1$-variation, it admits a canonical rough path lift $\bW^n = (W^n,\W^n) \in \cD^p([0,T];\R^d)$, with $\W^n$ given by
\begin{equation*}
\W^n_{s,t} := \int_s^t W^n_{s,u} \otimes \d W^n_u, \qquad (s,t) \in \Delta_T,
\end{equation*}
where the integral is defined as a classical limit of left-point Riemann sums. Similarly, we can define a time-augmented rough path $(\cdot,\bW)^n$ of $(\cdot,W)^n := (\gamma^n,W^n)$.

\begin{lemma}\label{lem: rate of convergence of space-time rough paths}
There exists a constant $C > 0$, which depends only on $p$ and $w(0,T)$, where $w$ is the control function for which \eqref{eq: RIE inequality} holds for $(\cdot,W)$, such that
\begin{equation*}
\|(\cdot,\bW)^n\|_p + \|(\cdot,\bW)\|_p \leq C,
\end{equation*}
for every $n \in \N$. Moreover, for any $p' \in (p,3)$, we have that
\begin{equation*}
\|(\cdot,\bW)^n;(\cdot,\bW)\|_{p'} \, \longrightarrow \, 0 \qquad \text{as} \quad n \, \longrightarrow \, \infty,
\end{equation*}
with a rate of convergence given by
\begin{align*}
\|&(\cdot,\bW)^n;(\cdot,\bW)\|_{p'}\\
&\lesssim |\cP^n|^{(1-\frac{1}{q}) (1-\frac{p}{p'})} + \|W^n - W\|_\infty^{1-\frac{p}{p'}} + \bigg\| \int_0^\cdot W^n_t \otimes \dd W_t - \int_0^\cdot W_t \otimes \dd W_t \bigg\|_\infty^{1-\frac{p}{p'}},
\end{align*}
for any $q \in (1,2)$ such that $\frac{1}{p} + \frac{1}{q} > 1$, where the implicit multiplicative constant depends only on $p$, $p'$, $q$, $d$, $|W_0|$ and $w(0,T)$.
\end{lemma}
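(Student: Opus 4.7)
The strategy is an interpolation argument combining two ingredients: a uniform-in-$n$ bound on $\|(\cdot, \bW)^n\|_p + \|(\cdot, \bW)\|_p$, together with a quantitative supremum-norm rate for the difference $(\cdot,\bW)^n - (\cdot,\bW)$. These will then be combined via the elementary estimate
\begin{equation*}
\|Z\|_{p'} \leq (2\|Z\|_\infty)^{1 - p/p'} \big(\|Z^n\|_p + \|Z\|_p\big)^{p/p'},
\end{equation*}
applied separately to the first and second levels of the rough-path difference, to produce the claimed rate in $p'$-variation.

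For the uniform $p$-variation bound, $(\cdot, W)$ inherits Property \textup{(RIE)} from $W$ with the same control function $w$ by \cite[Proposition~2.12]{Allan2023c}, and the bound on both levels of $(\cdot,\bW)$ then follows from \cite[Lemma~2.12]{Allan2023b}. For the canonical lift of the piecewise constant path $(\gamma^n, W^n)$, I use the key identity
\begin{equation*}
\int_{t^n_k}^{t^n_\ell} W^n_u \otimes \dd W^n_u = \int_{t^n_k}^{t^n_\ell} W^n_u \otimes \dd W_u,
\end{equation*}
valid at any partition points $t^n_k < t^n_\ell$ (a direct computation of left-point Riemann sums against the piecewise constant integrator $W^n$, whose jumps coincide with the increments of $W$ along $\cP^n$). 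Property \textup{(RIE)}(iii) then yields a bound on $|\W^n_{t^n_k,t^n_\ell}|^{p/2}$ by $w(t^n_k,t^n_\ell)$, and Chen's relation extends this to arbitrary $(s,t)$ via path-level estimates, producing a bound on $\|(\cdot,\bW)^n\|_p$ depending only on $p$ and $w(0,T)$, uniform in $n$.

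For the supremum-norm rate of the difference, I decompose the second level $(\cdot,\W)^n - (\cdot,\W)$ into its four scalar blocks: the $t \otimes t$ block (trivially of order $|\cP^n|$), the $W \otimes W$ block, and the two cross blocks $t \otimes W$ and $W \otimes t$. At partition points the $W \otimes W$ block reduces via the identity above directly to the explicit hypothesis $\|\int_0^\cdot W^n \otimes \dd W - \int_0^\cdot W \otimes \dd W\|_\infty$; at non-grid points, Chen's relation introduces correction terms of order $\|W^n - W\|_\infty$ multiplied by the uniform path-level $p$-variation. The two cross blocks are handled by integration by parts together with Young estimates, using that $\gamma^n - \gamma$ has $\|\gamma^n - \gamma\|_\infty \leq |\cP^n|$ and total $1$-variation bounded by $2T$, so by interpolation
\begin{equation*}
\|\gamma^n - \gamma\|_q \leq \|\gamma^n - \gamma\|_\infty^{1 - 1/q} \|\gamma^n - \gamma\|_1^{1/q} \lesssim |\cP^n|^{1 - 1/q}.
\end{equation*}
The condition $\frac{1}{p'} + \frac{1}{q} > 1$ is precisely what is required for the Young estimate at the $p'$-level to close, and it is the source of the factor $|\cP^n|^{1 - 1/q}$ in the final rate.

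The main obstacle is the bookkeeping in this last step: converting the single Property \textup{(RIE)} hypothesis on $\int_0^\cdot W^n \otimes \dd W - \int_0^\cdot W \otimes \dd W$ into a supremum-norm control of the full second level of the rough-path lift of the piecewise constant approximation, correctly treating each of the four scalar blocks and accounting for non-grid endpoints via Chen's relation. Once these sup-norm rates are in place, the final interpolation to the $p'$-variation is a one-line application of the estimate displayed at the start, producing the exponent $1 - p/p'$ on each of the three rate contributions.
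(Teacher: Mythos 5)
Your proposal takes essentially the same approach as the paper: a uniform-in-$n$ bound on $\|(\cdot,\bW)^n\|_p$ (which the paper obtains by citing \cite[Lemma~2.7]{Allan2023c}, and which you re-derive from Property \textup{(RIE)}(iii) together with the Riemann-sum identity $\int_{t^n_k}^{t^n_\ell} W^n_u \otimes \dd W^n_u = \int_{t^n_k}^{t^n_\ell} W^n_u \otimes \dd W_u$), followed by a supremum-norm rate for each scalar block of the second level and then interpolation against the $p$-variation bound to pass to the $p'$-variation rough path distance (the paper encapsulates the interpolation step by citing \cite[Lemma~2.8]{Allan2023c} and then treats the blocks explicitly, as you do). Your block-by-block treatment via Young estimates on the cross terms and the bound $\|\gamma^n - \gamma\|_q \lesssim |\cP^n|^{1-\frac{1}{q}}$ correspond directly to the paper's estimates.

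One correction: the condition on $q$ in the lemma is $\frac{1}{p} + \frac{1}{q} > 1$, not $\frac{1}{p'} + \frac{1}{q} > 1$, and the Young estimate for the cross-block supremum norm is not ``at the $p'$-level.'' That Young integral pairs $\gamma^n - \gamma$ (of finite $q$-variation) with $W$, which has finite $p$-variation, so the Young integrability condition is $\frac{1}{p} + \frac{1}{q} > 1$; the exponent $1 - \frac{p}{p'}$ enters only afterwards, through the interpolation inequality you state at the outset. Your stronger condition would still make the argument work, but it needlessly excludes values of $q$ that the lemma admits.
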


\begin{proof}
Since $(\cdot,W)$ satisfies Property \textup{(RIE)}, it is known (from, e.g., the proof of \cite[Lemma~2.13]{Allan2023b}) that its canonical rough path lift satisfies
\begin{equation*}
\|(\cdot,\bW)\|_p \leq C,
\end{equation*}
for some constant $C > 0$ which depends only on $p$ and $w(0,T)$, where $w$ is the control function for which \eqref{eq: RIE inequality} holds for $(\cdot,W)$. By \cite[Lemma~2.7]{Allan2023c}, we also have that
\begin{equation*}
\|(\cdot,\bW)^n\|_p \leq C
\end{equation*}
for every $n \in \N$, for some $C > 0$ which similarly depends only on $p$ and $w(0,T)$.

By \cite[Lemma~2.8]{Allan2023c}, we have that
\begin{align*}
&\|(\cdot,\bW)^n;(\cdot,\bW)\|_{p'}\\
&\lesssim \|(\gamma^n,W^n) - (\gamma,W)\|_\infty^{1-\frac{p}{p'}}\\
&\quad + \sup_{(s,t) \in \Delta_T} \bigg| \int_s^t (\gamma^n,W^n)_{s,u} \otimes \dd (\gamma,W)_u - \int_s^t (\gamma,W)_{s,u} \otimes \dd (\gamma,W)_u \bigg|^{1-\frac{p}{p'}}\\
&\lesssim \|(\gamma^n,W^n) - (\gamma,W)\|_\infty^{1-\frac{p}{p'}} + \bigg\| \int_0^\cdot (\gamma^n,W^n)_u \otimes \dd (\gamma,W)_u -\int_0^\cdot (\gamma,W)_u \otimes \dd (\gamma,W)_u \bigg\|_\infty^{1-\frac{p}{p'}},
\end{align*}
where the implicit multiplicative constants depend only on $p$, $p'$, $|W_0|$ and $w(0,T)$.

It is straightforward to see that
\begin{equation*}
\|(\gamma^n,W^n) - (\gamma,W)\|_\infty \lesssim |\cP^n| + \|W^n - W\|_\infty
\end{equation*}
for every $n \in \N$. Let $q \in (1,2)$ such that $\frac{1}{p} + \frac{1}{q} > 1$. We note that $\gamma^n$ and $\gamma$ have finite $1$-variation, with $\|\gamma^n\|_1 = \|\gamma\|_1 = T$, and it follows by interpolation that
\begin{equation*}
\|\gamma^n - \gamma\|_p \leq (2T)^{\frac{1}{q}} |\cP^n|^{1 - \frac{1}{q}}.
\end{equation*}
By the standard estimate for Young integrals---see, e.g., \cite[Proposition~2.4]{Friz2018}---we then have, for all $t \in [0,T]$, that
\begin{equation*}
\bigg| \int_0^t \gamma^n_u \dd \gamma_u - \int_0^t \gamma_u \dd \gamma_u \bigg| \lesssim \|\gamma^n - \gamma\|_p \|\gamma\|_1 \lesssim |\cP^n|^{1-\frac{1}{q}}.
\end{equation*}
Similarly, for each $j = 1, \ldots, d$ and $t \in [0,T]$, we have that
\begin{equation*}
\bigg| \int_0^t \gamma^n_u \dd W^j_u - \int_0^t \gamma_u \dd W^j_u \bigg| \lesssim \|\gamma^n - \gamma\|_p \|W\|_p \lesssim |\cP^n|^{1-\frac{1}{q}}
\end{equation*}
and, since $\gamma_u = u$, we can simply bound
\begin{equation*}
\bigg| \int_0^t W^{n,j}_u \dd \gamma_u - \int_0^t W^j_u \dd \gamma_u \bigg| \leq \|W^n - W\|_\infty T.
\end{equation*}
Combining the estimates above, we obtain the stated rate of convergence.
\end{proof}

It was established in the proof of \cite[Theorem~2.3]{Allan2023c} that the Euler scheme \eqref{eq: dynamics of price process Euler scheme} coincides with the solution to the RDE given by
\begin{equation*}
\widetilde{S}^n_t = s_0 + \int_0^t b(\gamma^n_s, \widetilde{S}^n_s) \dd \gamma^n_s + \int_0^t \sigma(\gamma^n_s, \widetilde{S}^n_s) \dd \bW^n_s = s_0 + \int_0^t (b,\sigma)(\gamma^n_s,\widetilde{S}^n_s) \dd (\cdot, \bW)^n_s
\end{equation*}
for $t \in [0,T]$. Furthermore, proceeding as in the proof of Lemma~\ref{lem: pathwise optimal portfolio}, one can show that $H^n$, $\theta^n$, $\kappa^n$, $\phi^n$ are controlled paths in $\cV^p_{(\cdot,W)^n}$, and that $\int_0^t \phi^{n,i}_s \dd S^{n,i}_s$ is a rough integral, defined in the sense of Lemma~\ref{lem: integration against controlled path}.

Let $\widetilde{\cP}^m = \{0 = r^m_0 < r^m_1 < \dots < r^m_{\widetilde{N}_m} = T\}$, $m \in \N$, be any sequence of partitions with vanishing mesh size, such that $\cP^n \subseteq \widetilde{\cP}^m$ for every $m \in \N$. Exactly as in the proof of \cite[Theorem~2.3]{Allan2023c}, one can show that the rough integral $\int_0^t \theta^n_s \dd (\cdot,\bW)^n_s$ is equal to a limit of left-point Riemann sums along the sequence $(\widetilde{\cP}^m)_{m \in \N}$, and thus coincides with the corresponding Riemann--Stieltjes integral. That is, for any $t \in [0,T]$, we have that
\begin{align*}
\int_0^t \theta^n_s \dd (\cdot,\bW)^n_s &= \lim_{m \to \infty} \sum_{k=0}^{\widetilde{N}_m-1} \theta^n_{r^m_k} (\gamma^n,W^n)_{r^m_k \wedge t, r^m_{k+1} \wedge t}\\
&= \sum_{k=0}^{N_n-1} \theta^n_{t^n_k} (\gamma^n,W^n)_{t^n_k \wedge t, t^n_{k+1} \wedge t} = \int_0^t \theta^n_s \dd (\gamma^n,W^n)_s,
\end{align*}
and, consequently, we have that
\begin{equation*}
\kappa^n_t := \frac{1}{K_T} \exp \bigg( \int_0^t \theta^n_s \dd (\gamma^n,W^n)_s \bigg)= \frac{1}{K_T} \exp \bigg( \int_0^t \theta^n_s \dd (\cdot,\bW)^n_s \bigg).
\end{equation*}
Similarly, by the associativity of rough integrals and Theorem~\ref{thm: rough int against controlled path under RIE}, the rough integral of the controlled path $\phi^{n,i}$ against the controlled path $S^{n,i}$ is given by a left-point Riemann sum.

\begin{proof}[Proof of Theorem~\ref{thm: stability of optimal portfolios wrt discretization error}]
By part~(i) of \cite[Corollary~3.3]{Kwossek2024} and Lemma~\ref{lem: rate of convergence of space-time rough paths}, we deduce that there exists a constant $L > 0$, which depends only on $p'$, $T$, $s_0$, $\|b\|_{C^2_b}$, $\|\sigma\|_{C^2_b}$, $|W_0|$ and $w(0,T)$, such that $\sup_{n \in \N} \|S^n\|_{\cV^{p'}_{(\cdot,W)^n}} \leq L$ and $\|S\|_{\cV^{p'}_{(\cdot,W)}} \leq L$. Further, \cite[Theorem~2.2]{Allan2023c} implies that
\begin{equation*}
\|S^n; S\|_{\cV^{p'}_{(\cdot,W)^n},\cV^{p'}_{(\cdot,W)}} \lesssim \|\gamma^n - \gamma\|_p + \|(\cdot,\bW)^n;(\cdot,\bW)\|_{p'} \lesssim |\cP^n|^{1-\frac{1}{q}} + \|(\cdot,\bW)^n;(\cdot,\bW)\|_{p'},
\end{equation*}
for any $q \in (1,2)$ such that $\frac{1}{p'} + \frac{1}{q} > 1$, where the implicit multiplicative constant depends only on $p'$, $q$, $T$, $\|b\|_{C^3_b}$, $\|\sigma\|_{C^3_b}$, $|W_0|$ and $w(0,T)$.

We note that $\gamma^n$ and $\gamma$ are controlled paths with respect to $(\cdot,W)^n$ and $(\cdot,W)$ respectively, with $\|\gamma^n\|_{\cV^{p'}_{(\cdot,W)^n}} = \|\gamma\|_{\cV^{p'}_{(\cdot,W)}} = 1$ and $\|\gamma^n;\gamma\|_{\cV^{p'}_{(\cdot,W)^n},\cV^{p'}_{(\cdot,W)}} = 0$. Since the composition of a controlled path with a regular function remains a controlled path, and such a composition is locally Lipschitz (see, e.g., \cite[Lemma~3.5]{Friz2018}), we have that
\begin{equation*}
\|(b,\sigma)(\gamma^n,S^n)\|_{\cV^{p'}_{(\cdot,W)^n}} + \|(b,\sigma)(\gamma,S)\|_{\cV^{p'}_{(\cdot,W)}} \leq C,
\end{equation*}
where $C > 0$ depends only on $p'$, $T$, $\|b\|_{C^2_b}$, $\|\sigma\|_{C^2_b}$, $L$ and $\|W\|_p$, and consequently
\begin{equation}\label{eq: gamma^n S^n gamma S bound}
\begin{split}
\|&(b,\sigma)(\gamma^n,S^n);(b,\sigma)(\gamma,S)\|_{\cV^{p'}_{(\cdot,W)^n},\cV^{p'}_{(\cdot,W)}}\\
&\lesssim \|\gamma^n;\gamma\|_{\cV^{p'}_{(\cdot,W)^n},\cV^{p'}_{(\cdot,W)}} +  \|S^n; S\|_{\cV^{p'}_{(\cdot,W)^n},\cV^{p'}_{(\cdot,W)}} + \|(\cdot,W)^n - (\cdot,W)\|_{p'}\\
&\lesssim |\cP^n|^{1-\frac{1}{q}} + \|(\cdot,\bW)^n; (\cdot,\bW)\|_{p'},
\end{split}
\end{equation}
where the implicit multiplicative constants depend only on $p'$, $q$, $T$, $\|b\|_{C^2_b}$, $\|\sigma\|_{C^2_b}$, $L$, $|W_0|$ and $w(0,T)$.

Similarly to the proof of Theorem~\ref{thm: stability of optimal portfolios wrt model uncertainty}, by standard estimates for controlled paths (e.g., \eqref{eq: estimate for rough integral} and Lemma~\ref{lem: product of controlled paths}), it follows from the estimates above that
\begin{equation*}
\|(\phi^n,\kappa^n)\|_{\cV^{p'}_{(\cdot,W)^n}} + \|(\phi,\kappa)\|_{\cV^{p'}_{(\cdot,W)}} \leq C,
\end{equation*}
where $C > 0$ depends only on $p'$, $m$, $d$, $T$, $\|b\|_{C^2_b}$, $\|\sigma\|_{C^2_b}$, $L$, $|W_0|$, $w(0,T)$ and the consumption clock $K$.

Since $\kappa^n$ and $\varphi^n$ depend on (the composition of regular functions with, or products of) controlled paths of the form $(\sigma(\gamma^n_t,S^n_t) \sigma(\gamma^n_t,S^n_t)^\top)^{-1} b(\gamma^n_t,S^n_t)$ and $\sigma(\gamma^n_t,S^n_t)$, and $\kappa$ and $\varphi$ depend similarly on controlled paths of the form $(\sigma(\gamma_t,S_t) \sigma(\gamma_t,S_t)^\top)^{-1} b(\gamma_t,S_t)$ and $\sigma(\gamma_t,S_t)$, standard estimates for rough integrals (e.g., \cite[Proposition~3.4]{Friz2018}) combined with the bound in \eqref{eq: gamma^n S^n gamma S bound} then imply that
\begin{equation*}
\|(\phi^n,\kappa^n),(\phi,\kappa)\|_{\cV^{p'}_{(\cdot,W)^n},\cV^{p'}_{(\cdot,W)}} \lesssim |\cP^n|^{1-\frac{1}{q}} + \|(\cdot,\bW)^n;(\cdot,\bW)\|_{p'},
\end{equation*}
where the implicit multiplicative constant depends only on $p'$, $q$, $m$, $d$, $T$, $\|b\|_{C^3_b}$, $\|\sigma\|_{C^3_b}$, $\sup_{(t,x)} |\det(\sigma(t,x) \sigma(t,x)^\top)|^{-1}$, $s_0$, $|W_0|$, $w(0,T)$ and $K$.

Letting $\widetilde{\W}^n_{s,t} = \int_s^t W^n_{s,u} \otimes \d W_u$ for $(s,t) \in \Delta_{[0,T]}$, the triple $\widetilde{\bW}^n := (W,W^n,\widetilde{\W}^n)$ is then a rough path in the sense of \cite[Definition~2.1]{Allan2023b}, and as usual we can define the natural time-augmentation of this rough path, which we denote by $(\cdot,\widetilde{\bW})^n$. Since $S$ is a controlled path with respect to $(\cdot,W)$, while $\phi^n$ is a controlled path with respect to $(\cdot,W)^n$, the integral $\int_0^t (\phi^n_s)^\top \dd S_s$ in \eqref{eq: defn hat V^n} is a well-defined rough integral (relative to $(\cdot,\widetilde{\bW})^n$), in the sense of \cite[Proposition~2.4]{Allan2023b}.

By the stability estimate for such rough integrals in part~(ii) of \cite[Proposition~2.7]{Allan2023b}, along with a standard estimate for Young integrals (e.g., \cite[Proposition~2.4]{Friz2018}), we obtain
\begin{align*}
\|\hat{V}^n - \hat{V}\|_{p'} &\lesssim \|\phi^n, \phi\|_{\cV^{p'}_{(\cdot,W)^n},\cV^{p'}_{(\cdot,W)}} + \|(\cdot,\widetilde{\bW})^n; (\cdot,\bW)\|_{p'} + \|\kappa^n - \kappa\|_{p'}\\
&\lesssim |\cP^n|^{1-\frac{1}{q}} + \|(\cdot,\bW)^n;(\cdot,\bW)\|_{p'} + \|(\cdot,\widetilde{\bW})^n; (\cdot,\bW)\|_{p'}.
\end{align*}
The term $\|(\cdot,\widetilde{\bW})^n; (\cdot,\bW)\|_{p'}$ can be bounded similarly to how $\|(\cdot,\bW)^n;(\cdot,\bW)\|_{p'}$ was bounded in Lemma~\ref{lem: rate of convergence of space-time rough paths}. In particular, we have by interpolation that
\begin{align*}
\|\widetilde{\W}^n - \W\|_{\frac{p'}{2}} &\lesssim \sup_{(s,t) \in \Delta_T} \bigg| \int_s^t W^n_{s,u} \otimes \dd W_u - \int_s^t W_{s,u} \otimes \dd W_u \bigg|^{1-\frac{p}{p'}}\\
&\lesssim \|W^n - W\|_\infty^{1-\frac{p}{p'}} + \bigg\| \int_0^\cdot W^n_u \otimes \dd W_u -\int_0^\cdot W_u \otimes \dd W_u \bigg\|_\infty^{1-\frac{p}{p'}}.
\end{align*}

Hence, since we can bound the $p$-variation norm by the controlled path norm (see \eqref{eq: bound on p var norm}), combining the estimates derived above with the rate of convergence given in Lemma~\ref{lem: rate of convergence of space-time rough paths}, we infer the desired convergence and estimate.
\end{proof}

If we assume stronger regularity properties of the ``driving noise'' path $W$ and the sequence of partitions, we can make the quantitative estimates provided in Theorem~\ref{thm: stability of optimal portfolios wrt discretization error} more explicit. In particular, considering the regularity properties of Brownian sample paths, we can derive the following two corollaries.

\begin{corollary}\label{cor: convergence rate for BM und uniform partitions}
Let $p \in (2,3)$ and let $(\cP^n_U)_{n \in \N}$ be the sequence of equidistant partitions of $[0,T]$ with width $\frac{T}{n}$. Let $W$ be a $\frac{1}{p}$-H{\"o}lder continuous path satisfying Assumption~\ref{ass: partitions exhausting the jumps of K} relative to $p$ and $(\cP^n_U)_{n \in \N}$, and suppose that
\begin{equation}\label{eq: convergence rate BM and uniform partitions}
\bigg\| \int_0^\cdot W^n_t \otimes \d W_t - \int_0^\cdot W_t \otimes \d W_t \bigg \|_{\infty} \lesssim n^{-(\frac{2}{p} - \beta)}, \qquad n \in \N,
\end{equation}
for some $\beta \in (1-\frac{1}{p},\frac{2}{p})$. Then, for any $p' \in (p,3)$ and $q \in (1,2)$ such that $\frac{1}{p'} + \frac{1}{q} > 1$, there exists a constant $C > 0$, which does not depend on $n$, such that, for every $n \in \N$,
\begin{equation*}
\|(\phi^n,\kappa^n) - (\phi,\kappa)\|_{p'} + \|\hat{V}^n - \hat{V}\|_{p'} \leq C(n^{-(1-\frac{1}{q})(1-\frac{p}{p'})} + n^{-(\frac{2}{p}-\beta)(1-\frac{p}{p'})}).
\end{equation*}
\end{corollary}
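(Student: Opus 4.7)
The plan is to apply Theorem~\ref{thm: stability of optimal portfolios wrt discretization error} directly to the sequence $(\cP^n_U)_{n \in \N}$ and then simplify each of the three terms appearing in its quantitative bound.

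First, since $\cP^n_U$ is equidistant with width $T/n$, we have $|\cP^n_U| = T/n$, so the mesh-size term contributes
\begin{equation*}
|\cP^n_U|^{(1-\frac{1}{q})(1-\frac{p}{p'})} \lesssim n^{-(1-\frac{1}{q})(1-\frac{p}{p'})}.
\end{equation*}
Next, since $W$ is $\frac{1}{p}$-H\"older continuous, the piecewise constant approximation $W^n$ defined in \eqref{eq: defn W^n} satisfies $\|W^n - W\|_\infty \lesssim |\cP^n_U|^{1/p} \lesssim n^{-1/p}$, so that
\begin{equation*}
\|W^n - W\|_\infty^{1-\frac{p}{p'}} \lesssim n^{-\frac{1}{p}(1-\frac{p}{p'})}.
\end{equation*}
Finally, the assumption \eqref{eq: convergence rate BM and uniform partitions} directly gives
\begin{equation*}
\bigg\| \int_0^\cdot W^n_t \otimes \d W_t - \int_0^\cdot W_t \otimes \d W_t \bigg\|_\infty^{1-\frac{p}{p'}} \lesssim n^{-(\frac{2}{p}-\beta)(1-\frac{p}{p'})}.
\end{equation*}

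The key observation is that the H\"older term is dominated by the iterated integral term: since $\beta > 1 - \frac{1}{p}$ and $p \in (2,3)$, one has $\beta > \frac{1}{p}$, hence $\frac{2}{p} - \beta < \frac{1}{p}$, and therefore
\begin{equation*}
n^{-\frac{1}{p}(1-\frac{p}{p'})} \leq n^{-(\frac{2}{p}-\beta)(1-\frac{p}{p'})}.
\end{equation*}
Consequently, the second term in the bound of Theorem~\ref{thm: stability of optimal portfolios wrt discretization error} can be absorbed into the third, yielding the claimed estimate.

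The only non-routine step is verifying the ordering of exponents, which relies on the specific range $\beta \in (1-\frac{1}{p}, \frac{2}{p})$ chosen in the hypothesis; otherwise the argument consists of plugging the three individual rates into the general bound from Theorem~\ref{thm: stability of optimal portfolios wrt discretization error}. I anticipate no real obstacle beyond bookkeeping of the H\"older exponent versus the iterated-integral rate.
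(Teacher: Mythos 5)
Your proof is correct and follows exactly the same route as the paper: plug the three individual rates into the general bound of Theorem~\ref{thm: stability of optimal portfolios wrt discretization error}, and use $\frac{1}{p} < 1 - \frac{1}{p} < \beta$ (valid since $p > 2$) to conclude $\frac{2}{p} - \beta < \frac{1}{p}$, so the H\"older term is absorbed into the iterated-integral term. No gaps.
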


\begin{proof}
Since $W$ is assumed to be $\frac{1}{p}$-H{\"o}lder continuous, we have that
\begin{equation*}
\|W^n - W\|_\infty \lesssim n^{-\frac{1}{p}}, \qquad n \in \N.
\end{equation*}
Combining this with \eqref{eq: convergence rate BM and uniform partitions} and Theorem~\ref{thm: stability of optimal portfolios wrt discretization error}, and noting that $\frac{1}{p} < 1 - \frac{1}{p} < \beta$ so that in particular $\frac{2}{p} - \beta < \frac{1}{p}$, we obtain the claimed rate of convergence.
\end{proof}

\begin{remark}
Almost all sample paths of a $d$-dimensional Brownian motion satisfy Property \textup{(RIE)} relative to $p$ and $(\cP^n_U)_{n \in \N}$, as shown in \cite[Proposition~3.2]{Allan2023c}, and, thus, Assumption~\ref{ass: partitions exhausting the jumps of K} is satisfied if the sequence $(\cP^n_U)_{n \in \N}$ of partitions exhausts the jumps of the consumption clock~$K$. Moreover, by \cite[Appendix~B]{Perkowski2016}, the bound in \eqref{eq: convergence rate BM and uniform partitions} holds almost surely for sample paths of Brownian motion. Hence, the sample paths of Brownian motion fulfill the conditions of Corollary~\ref{cor: convergence rate for BM und uniform partitions} almost surely.
\end{remark}

\begin{corollary}\label{cor: convergence rate for BM und dyadic partitions}
Let $p \in (2,3)$ and let $(\cP^n_D)_{n \in \N}$ be the sequence of dyadic partitions of $[0,T]$, given by
\begin{equation*}
\cP^n_D := \{0 = t_0^n < t_1^n < \dots < t^n_{2^n} = T \} \quad \text{with} \quad t_k^n := k2^{-n}T \quad \text{for} \quad k = 0, 1, \ldots, 2^n.
\end{equation*}
For some $\epsilon \in (0,1)$, let $W$ be a $\frac{1}{p}$-H{\"o}lder continuous path which satisfies Assumption~\ref{ass: partitions exhausting the jumps of K} relative to $p$ and $(\cP^n_D)_{n \in \N}$, and suppose also that
\begin{equation}\label{eq: convergence rate BM and dyadic partitions}
\bigg\| \int_0^\cdot W^n_t \otimes \d W_t - \int_0^\cdot W_t \otimes \d W_t \bigg\|_\infty \lesssim 2^{-\frac{n}{2}(1-\epsilon)}, \qquad n \in \N.
\end{equation}
Then, for any $p' \in (p,3)$ and $q \in (1,2)$ such that $\frac{1}{p'} + \frac{1}{q} > 1$, there exists a constant $C > 0$, which does not depend on $n$, such that, for every $n \in \N$,
\begin{equation*}
\|(\phi^n,\kappa^n) - (\phi,\kappa)\|_{p'} + \|\hat{V}^n - \hat{V}\|_{p'} \leq C(2^{-n(1-\frac{1}{q})(1-\frac{p}{p'})} + 2^{-n(\frac{1}{p}-\frac{1}{p'})} + 2^{-\frac{n}{2}(1-\epsilon)(1-\frac{p}{p'})}).
\end{equation*}
\end{corollary}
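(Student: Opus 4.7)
The plan is to apply Theorem~\ref{thm: stability of optimal portfolios wrt discretization error} directly to the sequence of dyadic partitions $(\cP^n_D)_{n \in \N}$ and then bound each of the three summands in the general rate estimate in the specific context of dyadic meshes. Since Assumption~\ref{ass: partitions exhausting the jumps of K} is in force by hypothesis, Theorem~\ref{thm: stability of optimal portfolios wrt discretization error} yields
\begin{equation*}
\|(\phi^n,\kappa^n) - (\phi,\kappa)\|_{p'} + \|\hat{V}^n - \hat{V}\|_{p'} \lesssim |\cP^n_D|^{(1-\frac{1}{q})(1-\frac{p}{p'})} + \|W^n - W\|_\infty^{1-\frac{p}{p'}} + \bigg\|\int_0^\cdot W^n_t \otimes \d W_t - \int_0^\cdot W_t \otimes \d W_t\bigg\|_\infty^{1-\frac{p}{p'}},
\end{equation*}
with an implicit constant independent of $n$, so the task reduces to controlling each of the three contributions in terms of the dyadic index $n$.

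For the first term, the dyadic mesh satisfies $|\cP^n_D| = 2^{-n} T$, so immediately $|\cP^n_D|^{(1-\frac{1}{q})(1-\frac{p}{p'})} \lesssim 2^{-n(1-\frac{1}{q})(1-\frac{p}{p'})}$. For the second term, the $\frac{1}{p}$-H{\"o}lder continuity of $W$ together with $|\cP^n_D| = 2^{-n} T$ gives $\|W^n - W\|_\infty \lesssim |\cP^n_D|^{1/p} \lesssim 2^{-n/p}$, and raising this to the power $1-\frac{p}{p'}$ yields $\|W^n - W\|_\infty^{1-\frac{p}{p'}} \lesssim 2^{-\frac{n}{p}(1-\frac{p}{p'})} = 2^{-n(\frac{1}{p}-\frac{1}{p'})}$. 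For the third term, the hypothesis \eqref{eq: convergence rate BM and dyadic partitions} directly gives the bound $2^{-\frac{n}{2}(1-\epsilon)(1-\frac{p}{p'})}$ after raising to the power $1-\frac{p}{p'}$.

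Summing the three contributions produces exactly the stated rate, with a constant $C$ that depends on $p, p', q$, the data of the model $(b,\sigma)$, the H{\"o}lder seminorm of $W$, the prefactor in \eqref{eq: convergence rate BM and dyadic partitions}, and the control $w$ from Property~\textup{(RIE)}, but not on $n$. There is essentially no obstacle in this argument: the work has already been done in Theorem~\ref{thm: stability of optimal portfolios wrt discretization error}, and the only point that requires any care is verifying that the H{\"o}lder exponent $\frac{1}{p}$ is indeed compatible with the assumed regularity of $W$---which is immediate---so that the uniform bound on $\|W^n - W\|_\infty$ holds with the stated rate.
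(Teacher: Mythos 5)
Your proposal is correct and follows essentially the same route as the paper's own proof: the paper likewise invokes the $\frac{1}{p}$-H{\"o}lder continuity of $W$ to obtain $\|W^n - W\|_\infty \lesssim 2^{-n/p}$, and then combines this bound with the assumed estimate \eqref{eq: convergence rate BM and dyadic partitions} and Theorem~\ref{thm: stability of optimal portfolios wrt discretization error}. Your write-up simply spells out the three individual term-by-term bounds a bit more explicitly than the paper does.
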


\begin{proof}
Since $W$ is assumed to be $\frac{1}{p}$-H{\"o}lder continuous, we have that
\begin{equation*}
\|W^n - W\|_\infty \lesssim 2^{-\frac{n}{p}}, \qquad n \in \N.
\end{equation*}
Combining this with the bound in \eqref{eq: convergence rate BM and dyadic partitions} and Theorem~\ref{thm: stability of optimal portfolios wrt discretization error}, we deduce the claimed rate of convergence.
\end{proof}

\begin{remark}
  Almost all sample paths of a $d$-dimensional Brownian motion satisfy Property \textup{(RIE)} relative to $p$ and $(\cP^n_D)_{n \in \N}$ almost surely, as shown in \cite[Proposition~3.6]{Allan2023c}, and, thus, Assumption~\ref{ass: partitions exhausting the jumps of K} is satisfied if the sequence of partitions $(\cP^n_U)_{n \in \N}$ exhausts the jumps of the consumption clock~$K$. Moreover, as shown in the proof of part~(ii) of \cite[Proposition~3.6]{Allan2023c}, for any $\epsilon \in (0,1)$, almost all sample paths of Brownian motion satisfy
  \begin{equation*}
    \bigg\| \int_0^\cdot W^n_t \otimes \d W_t - \int_0^\cdot W_t \otimes \d W_t \bigg\|_\infty < 2^{-\frac{n}{2}(1-\epsilon)},
  \end{equation*}
  for all sufficiently large $n$. Hence, the sample paths of Brownian motion fulfill the conditions of Corollary~\ref{cor: convergence rate for BM und dyadic partitions} almost surely.
\end{remark}

\section{Black--Scholes-type models: pathwise analysis of log-optimal portfolios}\label{sec: BS model}

In this section we shall study the log-optimal portfolio for the investment-consumption problem, acting on deterministic price paths generated by Black--Scholes-type models, which are defined in a pathwise manner.

For readability, this section is structured similarly to Section~\ref{sec: local vol models}. We point out that similar arguments apply and the method of proof carries over, but due to the unboundedness of the coefficients, this case needs to be treated separately.

\medskip

To this end, we shall work again under the standing Assumption~\ref{ass: partitions exhausting the jumps of K}. For convenience, in this section we shall also assume that the path $W$ is continuous (but see Remark~\ref{remark: W doesn't need to be continuous} below). For clarity, these assumptions are stated precisely below.

\begin{assumption}\label{assumption: partitions and W for Black-Scholes model}
Let $p \in (2,3)$ and let $\cP^n = \{0 = t_0^n < t_1^n < \cdots < t_{N_n}^n = T\}$, $n \in \N$, be a sequence of equidistant partitions of the interval $[0,T]$, such that,
   \begin{itemize}
     \item[$\vcenter{\hbox{\scriptsize$\bullet$}}$] for each $n \in \N$, there exists some $\pi_n > 0$ such that $t^n_{i+1} - t^n_i = \pi_n$ for each $0 \leq i < N_n$,
     \item[$\vcenter{\hbox{\scriptsize$\bullet$}}$] $\pi_n^{2 - \frac{4}{p}} \log(n) \to 0$ as $n \to \infty$,
     \item[$\vcenter{\hbox{\scriptsize$\bullet$}}$] $J_K \subseteq \liminf_{n \to \infty} \cP^n$ with $J_K := \{t \in (0,T] : K_{t-} \neq K_t\}$,
   \end{itemize}
where the consumption clock $K \colon [0,T] \to \R$ is fixed as in Section~\ref{subsec: setting}. Moreover, the deterministic path $W \colon [0,T] \to \R^d$ is continuous, and satisfies Property \textup{(RIE)} relative to $p$ and $(\mathcal{P}^n)_{n \in \N}$.
\end{assumption}

We also adopt the following assumption.

\begin{assumption}\label{assumption: b and sigma in BS model}
For each $i = 1, \ldots, m$, we assume that $b^i \in \cV^p_W([0,T];\R)$ and $\sigma^{i\cdot} \in \cV^p_W([0,T];\cL(\R^d;\R))$ are controlled paths with respect to $W$, such that $\sigma_t \sigma_t^\top \in \mathrm{GL}(\R^{m \times m})$ for all $t \in [0,T]$, with the determinant being uniformly bounded away from zero, and such that the jump times of $b^i$ and $\sigma^{i\cdot}$ belong to $\liminf_{n \to \infty} \cP^n$.
\end{assumption}

The components of the discounted price path $S = (S_t)_{t \in [0,T]}$ are now assumed to satisfy the linear rough differential equations
\begin{equation}\label{eq: pathwise Black Scholes model}
  S^i_t = s^i_0 + \int_0^t S^i_s b^i_s \dd s + \int_0^t S^i_s \sigma^{i \cdot}_s \dd \bW_s, \qquad t \in [0,T], \qquad i = 1, \ldots, m,
\end{equation}
where $s_0^i > 0$, and $\bW = (W, \mathbb{W})$ is the canonical rough path lift of $W$, as defined in \eqref{eq: RIE rough path}.

By Lemma~\ref{lem: consistency of rough integrals} and Proposition~\ref{prop: associativity of rough integration}, this is equivalent to saying that $S$ satisfies the linear rough differential equations
\begin{equation}\label{eq: linear RDE}
  S^i_t = s_0^i + \int_0^t S^i_s \dd \mathbf{\Xi}^i_s, \qquad t \in [0,T], \qquad i = 1, \ldots, m,
\end{equation}
where $\Xi^i := \int_0^\cdot b^i_t \dd t + \int_0^\cdot \sigma^{i\cdot}_t \dd \bW_t$, which is a controlled path in $\cV^p_W$, and thus admits a canonical rough path lift $\mathbf{\Xi}^i$ by Lemma~\ref{lem: consistency of rough integrals}. By Theorem~\ref{thm: controlled path satisfies RIE}, $\Xi^i$ itself satisfies Property \textup{(RIE)} relative to $p$ and $(\cP^n)_{n \in \N}$. Hence, by Lemma~\ref{lem: dynamics of rough exponential}, we have that the solution to the linear RDE above is given by the rough exponential $S^i = s^i_0 \cE(\Xi^i)$.

\begin{remark}\label{remark: W doesn't need to be continuous}
We supposed in Assumption~\ref{assumption: partitions and W for Black-Scholes model} that the path $W$ is continuous. However, the analysis in this section also works for c\`adl\`ag paths $W$ which satisfy Property \textup{(RIE)} (noting in particular Remark~\ref{remark: RIE paths satisfy conditions for rough exp}). The only restriction is that one must ensure that, for every $i = 1, \ldots, m$, the price path $S^i$ remains positive for all $t \in [0,T]$, which, by Lemma~\ref{lem: dynamics of rough exponential}, is equivalent to ensuring that $\Delta \Xi^i_t \neq -1$ for all $t \in (0,T]$.
\end{remark}

\begin{remark}
  If $W$ is a realization of a Brownian motion, the dynamics of the RDE \eqref{eq: pathwise Black Scholes model} can be seen as a fixed realization of a Black--Scholes-type model for a financial market.

Indeed, let us assume that $\bar{W} = (\bar{W}_t)_{t\in [0,T]}$ is a $d$-dimensional Brownian motion on a probability space $(\Omega,\mathcal{F},\P)$ with respect to an underlying filtration $(\cF_t)_{t \in [0,T]}$. Let us suppose that $\bar{b}^i$ and $\bar{\sigma}^{ij}$, $i = 1, \ldots, m$, $j = 1, \ldots, d$, are predictable processes such that, for almost every $\omega \in \Omega$, the paths $b = \bar{b}(\omega)$ and $\sigma = \bar{\sigma}(\omega)$ satisfy Assumption~\ref{assumption: b and sigma in BS model}. It is well-known that the linear stochastic differential equations
  \begin{equation}\label{eq: Black Scholes model}
    \bar{S}^i_t = s^i_0 + \int_0^t \bar{S}^i_s \bar{b}^i_s \dd s + \int_0^t \bar{S}^i_s \bar{\sigma}^{i\cdot}_s \dd \bar{W}_s, \qquad t \in [0,T], \qquad i = 1, \ldots, m,
  \end{equation}
  admit a unique strong solution, where $\int_0^t \bar{S}^i_s \bar{\sigma}^{i\cdot}_s \dd \bar{W}_s$ denotes a stochastic It{\^o} integral; see, e.g., \cite[Ch.~V, Theorem~6]{Protter2005}. We note that this model $(\bar{S}_t)_{t \in [0,T]}$ includes the classical Black--Scholes model, and some stochastic volatility models, where the volatility is modelled by an SDE driven by $\bar{W}$.

  The solution to the SDE \eqref{eq: Black Scholes model} is given by the stochastic exponential $\bar{S}^i = s_0^i \bar{\cE}(\bar{\Xi}^i)$, where $\bar{\Xi}^i := \int_0^\cdot \bar{b}^i_t \dd t + \int_0^\cdot \bar{\sigma}^{i\cdot}_t \dd \bar{W}_t$, that is,
  \begin{equation*}
    \bar{S}^i_t = s_0^i \exp \Big(\bar{\Xi}^i_t - \frac{1}{2} [\bar{\Xi}^i]_t\Big), \qquad t \in [0,T], \qquad i = 1, \ldots, m,
  \end{equation*}
  where $[\bar{\Xi}^i]$ denotes the quadratic variation of $\bar{\Xi}^i$; see, e.g., \cite[Ch.~II, Theorem~37]{Protter2005}.

  Recall that, for almost every $\omega \in \Omega$, the sample path $\bar{W}(\omega)$ of the Brownian motion satisfies Property \textup{(RIE)} relative to $p$ and $(\cP^n)_{n \in \N}$; see Remark~\ref{remark: BM satisfies RIE}. Hence, for almost every $\omega \in \Omega$, $\Xi^i = \bar{\Xi}^i(\omega)$, for $\Xi^i$ defined as in \eqref{eq: linear RDE} for the (random) controlled paths $b = \bar{b}(\omega)$, $\sigma = \bar{\sigma}(\omega)$ and the (random) rough path $\bW := \bar{\bW}(\omega)$, since the (random) rough integral and the It{\^o} integral coincide almost surely by Lemma~\ref{lemma: rough integral is equal to stochastic integral}. Moreover, since $\Xi^i$ is a controlled path in $\cV^p_W$, it admits a canonical rough path lift $\mathbf{\Xi}^i$.

  Since, for almost every $\omega \in \Omega$, $\Xi^i = \bar{\Xi}^i(\omega)$ satisfies Property \textup{(RIE)} by Theorem~\ref{thm: controlled path satisfies RIE}, it follows from \cite[Proposition~2.18]{Allan2023b} that, for almost every $\omega \in \Omega$, the quadratic variation $[\bar{\Xi}^i](\omega)$ and the rough path bracket $[\mathbf{\Xi}^i]$ coincide.

  Finally, by Lemma~\ref{lem: dynamics of rough exponential}, for almost every $\omega \in \Omega$, the stochastic exponential $\bar{S}^i(\omega)$ of $\bar{\Xi}^i$ and the (random) rough exponential $S = \bar{S}(\omega)$ of $\Xi^i = \bar{\Xi}^i(\omega)$, which solves the linear rough differential equation \eqref{eq: linear RDE}, coincide. Thus, $(S_t)_{t \in [0,T]}$ can be understood as a fixed realization of the probabilistic model $(\bar{S}_t)_{t \in [0,T]}$.
\end{remark}

In the present pathwise setting, it turns out to be more convenient to equivalently reformulate the RDEs \eqref{eq: pathwise Black Scholes model} into the RDEs
\begin{equation}\label{eq: Black Scholes model via RDE}
S^i_t = s_0^i + \int_0^t (S^i_s b^i_s, S^i_s \sigma^{i\cdot}_s) \dd (\cdot,\bW)_s, \qquad t \in [0,T], \qquad i = 1, \ldots, m,
\end{equation}
where $(\cdot, \bW)$ denotes the time-extended rough path of $\bW$, i.e., the path-level of $(\cdot, \bW)$ is given by $(t,W_t)_{t \in [0,T]}$, and the missing integrals $\int \bar{W}^j_t(\omega) \dd t$, $\int t \dd \bar{W}^j_t(\omega)$, $j = 1, \ldots, d$, are canonically defined as Riemann--Stieltjes integrals. Using Lemma~\ref{lem: consistency of rough integrals}, Proposition~\ref{prop: associativity of rough integration} and Lemma~\ref{lem: dynamics of rough exponential}, there exists a unique solution $(S^i,(S^i)') \in \cV^p_{(\cdot,W)}$ to the above RDE, where $(S^i)' = (S^i b^i, S^i \sigma^{i\cdot})$, $i = 1, \ldots, m$. In particular, $S$, $b$ and $\sigma$ are all controlled paths with respect to $(\cdot,W)$. Moreover, $(S_t)_{t \in [0,T]}$ satisfies the RDE \eqref{eq: Black Scholes model via RDE} if and only if it satisfies the RDE \eqref{eq: pathwise Black Scholes model}. For later reference, let us also remark (again) that $(\cdot,W)$ satisfies Property \textup{(RIE)} relative to $p$ and $(\cP^n)_{n \in \N}$.

\subsection{Pathwise construction of log-optimal portfolios}

As a first step towards a pathwise analysis of optimal portfolios, we give a pathwise construction of the log-optimal portfolio provided that the underlying price dynamics of the financial market are given by a Black--Scholes-type model. We recall that in the probabilistic setting the log-optimal portfolio is well-known and was presented in Theorem~\ref{thm: Goll-Kallsen's main theorem Ito processes}, which is due to \cite{Goll2000}.

We recall that the time-extension of the path $W$ is denoted by $(\cdot,W) = (t,W)_{t \in [0,T]}$.

\begin{lemma}\label{lem: pathwise optimal portfolio for Black Scholes}
Let $b$ and $\sigma$ be as in Assumption~\ref{assumption: b and sigma in BS model}. Let
\begin{equation*}
H_t^i := H^{(b,\sigma),i}_t := \frac{h_t^i}{S_t^i} \qquad \text{with} \qquad h_t := h^{(b,\sigma)}_t := (\sigma_t \sigma^\top_t)^{-1} b_t, \qquad t \in [0,T],
\end{equation*}
and set $(\phi, \kappa) := (\phi^{(b,\sigma)}, \kappa^{(b,\sigma)}) := (\phi^{(b,\sigma),0}, \dots, \phi^{(b,\sigma),m}, \kappa^{(b,\sigma)})$ with
\begin{align*}
Z_t &:= Z^{(b,\sigma)}_t := \int_0^t h_s^\top b_s \dd s + \int_0^t h_s^\top \sigma_s \dd \bW_s,\\
\kappa_t &:= \kappa_t^{(b,\sigma)} := \frac{1}{K_T} \cE(Z)_t, \qquad V_t := V^{(b,\sigma)}_t := \kappa_t (K_T - K_t),\\
\phi_t^i &:= \phi^{(b,\sigma),i}_t := H_t^i V_t, \qquad i = 1, \ldots, m, \qquad \phi^0_t := \phi^{(b,\sigma),0}_t := \sum_{i=1}^m \int_0^t \phi^i_s \dd S^i_s - \phi_t^i S_t^i,
\end{align*}
for $t \in [0,T]$, where $\int_0^t \phi_s^i \dd S_s^i$ is a rough integral, and $\cE$ is the rough exponential as defined in Lemma~\ref{lem: dynamics of rough exponential}. Then $(\phi, \kappa)$ is well-defined and is a controlled path with respect to $W$ and, in particular, with respect to $(\cdot,W)$.
\end{lemma}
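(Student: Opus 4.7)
The proof follows closely the template of Lemma~\ref{lem: pathwise optimal portfolio}, with the key difference that here the drift and volatility are no longer compositions of smooth functions with $S$, but are themselves already controlled paths with respect to $W$. The plan is to build each object in the statement as a controlled path in turn, using Lemma~\ref{lem: product of controlled paths} (products, sums, and reciprocals of bounded-below controlled paths remain controlled), Lemma~\ref{lem: integration against controlled path} (rough integrals of controlled paths against controlled paths are controlled), and Lemma~\ref{lem: dynamics of rough exponential} (the rough exponential of a controlled path is a controlled path).

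First I would note that, by the discussion preceding the lemma, $S^i = s_0^i \cE(\Xi^i) \in \cV^p_W$, and, since $W$ is continuous (so that $\Xi^i$ is continuous) and $s_0^i > 0$, Lemma~\ref{lem: dynamics of rough exponential} yields $S^i_t > 0$ for every $t \in [0,T]$, uniformly bounded below on the compact time interval. As $\det(\sigma_t \sigma_t^\top)$ is uniformly bounded away from zero by Assumption~\ref{assumption: b and sigma in BS model}, componentwise application of Lemma~\ref{lem: product of controlled paths} (each entry of $(\sigma\sigma^\top)^{-1}$ is a rational expression in the entries of $\sigma$ with denominator bounded away from zero) shows that $(\sigma\sigma^\top)^{-1}$, and therefore $h = (\sigma\sigma^\top)^{-1} b$, has entries in $\cV^p_W$. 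A further application, combined with positivity and lower-boundedness of $S^i$, yields $H^i = h^i / S^i \in \cV^p_W$ for each $i$.

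Next I would turn to the wealth and consumption paths. The integrand $h^\top \sigma \in \cV^p_W([0,T]; \cL(\R^d;\R))$ makes $\int_0^\cdot h_s^\top \sigma_s \dd \bW_s$ a well-defined controlled path via~\eqref{eq: estimate for rough integral}, while $\int_0^\cdot h^\top_s b_s \dd s$ is trivially a controlled path (with zero Gubinelli derivative), since $t \mapsto t$ has finite $1$-variation and $h^\top b \in \cV^p_W$. Hence $Z \in \cV^p_W$; by Lemma~\ref{lem: consistency of rough integrals} it lifts canonically to a rough path, so Lemma~\ref{lem: dynamics of rough exponential} gives $\kappa = \cE(Z)/K_T \in \cV^p_W$. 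Multiplying by the c\`adl\`ag increasing (hence finite $1$-variation) function $K_T - K$ via Lemma~\ref{lem: product of controlled paths} gives $V \in \cV^p_W$, and a further product yields $\phi^i = H^i V \in \cV^p_W$ for $i = 1, \ldots, m$. For $\phi^0$, Lemma~\ref{lem: integration against controlled path} makes $\int_0^\cdot \phi_s^i \dd S_s^i$ a well-defined rough integral and a controlled path, so $\phi^0 \in \cV^p_W$ as a sum and product of controlled paths. Finally, every controlled path with respect to $W$ is automatically a controlled path with respect to $(\cdot,W)$, by padding the Gubinelli derivative with a zero in the time component.

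The main obstacle I anticipate is precisely the verification that $1/S^i$ and $(\sigma\sigma^\top)^{-1}$ define bona fide controlled paths rather than merely paths of finite $p$-variation. This step depends crucially on combining Lemma~\ref{lem: product of controlled paths} with the pathwise positivity of the rough exponential (ensured here by continuity of $W$) and with the uniform non-degeneracy of $\sigma\sigma^\top$ from Assumption~\ref{assumption: b and sigma in BS model}; once these ingredients are in place, the remainder of the argument is direct bookkeeping, paralleling Lemma~\ref{lem: pathwise optimal portfolio} almost verbatim.
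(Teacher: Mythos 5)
Your proposal is correct and follows essentially the same route as the paper's proof: build $(\sigma\sigma^\top)^{-1}$, $h$, $H$, $Z$, $\kappa$, $V$, $\phi^i$, $\phi^0$ step by step using Lemma~\ref{lem: product of controlled paths}, Lemma~\ref{lem: integration against controlled path}, Lemma~\ref{lem: consistency of rough integrals}, and Lemma~\ref{lem: dynamics of rough exponential}, with the reciprocal steps justified by uniform non-degeneracy of $\sigma\sigma^\top$ and positivity of $S^i$. The only cosmetic difference is that you spell out why $S^i$ is bounded below (via continuity of $W$ and the rough exponential), whereas the paper states it directly as a consequence of $S^i$ being a positive c\`adl\`ag path on a compact interval.
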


\begin{proof}
We have that $b$ and $\sigma$ are controlled paths in $\cV^p_W$, and that $\det(\sigma \sigma^\top)$ is bounded away from zero. Recalling that sums and products of (real-valued) controlled paths are again controlled paths (Lemma~\ref{lem: product of controlled paths}), as well as the inverse of a controlled path that is bounded away from zero (as a composition with the regular function $x \mapsto \frac{1}{x}$), we have that $\sigma \sigma^\top$, $h$, $h^\top b$ and $h^\top \sigma$ are all controlled paths in $\cV^p_W$.

Similarly, since each component of $S$ is bounded away from zero (due to being a positive c\`adl\`ag path on a compact time interval), we have that $H$ is a controlled path in $\cV^p_W$.

It is clear that $Z$ is a controlled path, and Lemma~\ref{lem: consistency of rough integrals} and Lemma~\ref{lem: dynamics of rough exponential} then imply that the rough exponential $\cE(Z)$ and $\kappa$ are controlled paths. Since the consumption clock $K$ is a c{\`a}dl{\`a}g (deterministic) and increasing function (so of finite $1$-variation), by Lemma~\ref{lem: product of controlled paths}, $V$ is also a controlled path. It follows that $\phi^0, \phi^i$, $i = 1, \ldots, m$, and thus $(\phi, \kappa)$ are controlled paths with respect to $W$, and hence also with respect to $(\cdot,W)$.
\end{proof}

The portfolio, as constructed in Lemma~\ref{lem: pathwise optimal portfolio for Black Scholes} in a pathwise manner, agrees, indeed, with the log-optimal portfolio for the investment-consumption problem, as considered in Section~\ref{sec: portfolio in probabilistic setting}, if the underlying frictionless financial market is generated by a Black--Scholes-type model, such as the stochastic differential equation~\eqref{eq: Black Scholes model}. Hence, in the following we shall call the portfolio $(\phi, \kappa)=(\phi^{(b,\sigma)}, \kappa^{(b,\sigma)})$ from Lemma~\ref{lem: pathwise optimal portfolio for Black Scholes} the \emph{pathwise log-optimal portfolio}.

\begin{lemma}
Suppose that the price process $(\bar{S}_t)_{t \in [0,T]}$ is modelled by the SDE \eqref{eq: Black Scholes model} driven by a Brownian motion $\bar{W}$ on a probability space $(\Omega,\mathcal{F},\P)$, where $\bar{b}$ and $\bar{\sigma}$ are predictable processes such that, almost surely, $\bar{b}, \bar{\sigma}$ are controlled paths with respect to $\bar{W}$ which satisfy Assumption~\ref{assumption: b and sigma in BS model}. Then the log-optimal portfolio $(\bar{\phi}, \bar{\kappa})$, as provided in Remark~\ref{remark: Goll Kallsen}, and the pathwise log-optimal portfolio $(\phi, \kappa)$, as provided in Lemma~\ref{lem: pathwise optimal portfolio for Black Scholes}, coincide $\P$-almost surely, where $(\phi, \kappa)$ is constructed given a realization $W := \bar{W}(\omega)$ of the Brownian motion $\bar{W}$, and with $b = \bar{b}(\omega)$ and $\sigma = \bar{\sigma}(\omega)$, for almost every $\omega \in \Omega$.
\end{lemma}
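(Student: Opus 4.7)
The plan is to follow the same blueprint as the proof of Lemma~\ref{lem: pathwise portfolio is optimal}, but adapted to the linear (Black--Scholes-type) dynamics, systematically identifying each pathwise object with its probabilistic counterpart on a common almost-sure event.

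First, by Remark~\ref{remark: BM satisfies RIE}, for almost every $\omega \in \Omega$ the sample path $\bar{W}(\omega)$ satisfies Property \textup{(RIE)} relative to $p$ and $(\cP^n)_{n \in \N}$, and its canonical rough path lift coincides with the It{\^o} lift $\bar{\bW}(\omega)$. We fix such an $\omega$ (further restricting to the full-measure set on which $b = \bar b(\omega)$ and $\sigma = \bar\sigma(\omega)$ additionally satisfy Assumption~\ref{assumption: b and sigma in BS model}). All subsequent identifications will be on this event. By Lemma~\ref{lemma: rough integral is equal to stochastic integral}, for each $i$ the rough integral $\int_0^\cdot \sigma^{i\cdot}_s \dd \bW_s$ equals the It{\^o} integral $\int_0^\cdot \bar\sigma^{i\cdot}_s \dd \bar W_s$ pathwise, so $\Xi^i = \bar\Xi^i(\omega)$. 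Since $\Xi^i \in \cV^p_W$, Theorem~\ref{thm: controlled path satisfies RIE} ensures $\Xi^i$ itself satisfies Property \textup{(RIE)}, and hence by the extension of \cite[Proposition~2.18]{Allan2023b} used in the proof of Lemma~\ref{lem: pathwise portfolio is optimal} we have $[\mathbf{\Xi}^i] = [\bar\Xi^i](\omega)$. Comparing the closed-form expression $\cE(\Xi^i) = \exp(\Xi^i - \tfrac12 [\mathbf{\Xi}^i])$ from Lemma~\ref{lem: dynamics of rough exponential} with the Dol{\'e}ans--Dade formula for $\bar\cE(\bar\Xi^i)$, we conclude $S^i = s_0^i \cE(\Xi^i) = s_0^i \bar\cE(\bar\Xi^i) = \bar S^i(\omega)$ for every $i$.

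Next, since $S^i = \bar S^i(\omega)$ and $\sigma\sigma^\top = (\bar\sigma\bar\sigma^\top)(\omega)$, we immediately get $h = \bar h(\omega)$ and thus $H^i = \bar H^i(\omega)$ for $i = 1,\dots,m$. By the associativity of rough integration (Proposition~\ref{prop: associativity of rough integration}) applied to the representation \eqref{eq: pathwise Black Scholes model} of $S^i$,
\begin{equation*}
\int_0^t H^i_s \dd S^i_s = \int_0^t H^i_s S^i_s b^i_s \dd s + \int_0^t H^i_s S^i_s \sigma^{i\cdot}_s \dd \bW_s = \int_0^t h^i_s b^i_s \dd s + \int_0^t h^i_s \sigma^{i\cdot}_s \dd \bW_s,
\end{equation*}
so $Z = \sum_i \int_0^\cdot H^i_s \dd S^i_s$, and this is the same pathwise quantity as the It{\^o} integral $\int_0^\cdot \bar H_s^\top \dd \bar S_s$, by another application of Lemma~\ref{lemma: rough integral is equal to stochastic integral}. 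Repeating the bracket argument from the previous paragraph for the controlled path $Z$ (which again satisfies Property \textup{(RIE)} by Theorem~\ref{thm: controlled path satisfies RIE}), we obtain $\cE(Z) = \bar\cE(\int_0^\cdot \bar H_s^\top \dd\bar S_s)$ almost surely, and hence $\kappa = \bar\kappa(\omega)$ and $V = \bar V(\omega)$.

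Finally, for each $i = 1,\dots,m$ we have $\phi^i = H^i V = \bar H^i \bar V(\omega) = \bar\phi^i(\omega)$ (noting, as in the remark preceding the proof of Lemma~\ref{lem: pathwise portfolio is optimal}, that replacing $V$ by $V_{t-}$ affects $\bar\phi^i$ only on a countable set and does not alter the integrals). Applying associativity once more to $\int_0^t \phi^i_s \dd S^i_s$ and invoking Lemma~\ref{lemma: rough integral is equal to stochastic integral} for the resulting drift and diffusion integrals, the remaining identity $\phi^0 = \bar\phi^0(\omega)$ follows. The main substantive step is the identification $\cE(Z) = \bar\cE(\bar Z)$, which requires both the consistency of rough and stochastic integration and the agreement of the rough path bracket with the quadratic variation; once this is in place, everything else is a bookkeeping exercise combining associativity of rough integrals with the pathwise formulae.
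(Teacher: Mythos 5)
Your proposal is correct and follows essentially the same blueprint as the paper's proof: identify rough and stochastic integrals via Lemma~\ref{lemma: rough integral is equal to stochastic integral}, identify the rough path bracket with the quadratic variation via Theorem~\ref{thm: controlled path satisfies RIE} and the (extended) \cite[Proposition~2.18]{Allan2023b}, conclude the rough and stochastic exponentials agree, and then propagate the identifications through $\kappa$, $V$, $\phi^i$ and $\phi^0$ using associativity of rough integration. The only organizational difference is that you first explicitly re-derive $S^i=\bar S^i(\omega)$ via the exponential representation (which the paper delegates to the preceding remark) and you reach $Z=\bar Z$ by the detour $Z=\sum_i\int H^i_s\,\d S^i_s$ rather than directly from the defining formula $Z=\int h^\top b\,\d s+\int h^\top\sigma\,\d\bW$; both routes are valid and rest on the same two substantive ingredients you correctly single out.
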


\begin{proof}
In this proof we consider $S$, $(\phi,\kappa)$, etc., as random controlled paths, in the sense that, given a realization $W = \bar{W}(\omega)$ of the Brownian motion, $S$ is defined as the solution to the RDE \eqref{eq: pathwise Black Scholes model} driven by the canonical rough path lift $\bW$ of $W$, and $(\phi, \kappa)$ is defined pathwise via Lemma~\ref{lem: pathwise optimal portfolio for Black Scholes}. We note that
\begin{equation*}
\bar{\kappa}_t = \frac{1}{K_T} \bar{\cE} (\bar{Z})_t, \qquad t \in [0,T],
\end{equation*}
for $\bar{Z} = \int_0^\cdot \bar{h}_t^\top \bar{b}_t \dd t + \int_0^\cdot \bar{h}_t^\top \bar{\sigma}_t \dd \bar{W}_t$, where $\bar{\cE}$ denotes the stochastic exponential. By Lemma~\ref{lemma: rough integral is equal to stochastic integral}, we have that
\begin{equation*}
\int_0^t h_s^\top \sigma_s \dd \bW_s = \int_0^t \bar{h}_s^\top \bar{\sigma}_s \dd \bar{W}_s,
\end{equation*}
almost surely, which implies that $Z = \bar{Z}$ almost surely.

By Lemma~\ref{lem: consistency of rough integrals}, $Z$ admits a canonical rough path lift $\bZ = (Z,\Z) \in \cD^p$, and since $W$ satisfies Property \textup{(RIE)} relative to $p$ and $(\cP^n)_{n \in \N}$, the same is also true of $Z$ by Theorem~\ref{thm: controlled path satisfies RIE}. Analogously to the proof of Lemma~\ref{lem: pathwise portfolio is optimal}, $[\bZ]$ and $[\bar{Z}]$ coincide almost surely. By Lemma~\ref{lem: dynamics of rough exponential}, it then follows that
\begin{equation*}
\cE(Z)_t = \exp \Big(Z_t - \frac{1}{2} [\bZ]_t\Big) = \exp \Big(\bar{Z}_t - \frac{1}{2} [\bar{Z}]_t\Big) = \bar{\cE}(\bar{Z})_t,
\end{equation*}
almost surely, and it is then clear that $\kappa = \bar{\kappa}$ and $V = \bar{V}$ almost surely.

We also note that $H^i = \frac{h^i}{S^i} = \frac{\bar{h}^i}{\bar{S}^i} = \bar{H}^i$ almost surely for each $i = 1, \ldots, m$. Finally, by the associativity of rough (Proposition~\ref{prop: associativity of rough integration}) and stochastic integration, we have that
\begin{align*}
\int_0^t \phi^i_s \dd S^i_s &= \int_0^t \phi^i_s S^i_s b^i_s \dd s + \int_0^t \phi^i_s S^i_s \sigma^{i \cdot}_s \dd \bW_s\\
&= \int_0^t \bar{\phi}^i_s \bar{S}^i_s \bar{b}^i_s \dd s + \int_0^t \bar{\phi}^i_s \bar{S}^i_s \bar{\sigma}^{i \cdot}_s \dd \bar{W}_t = \int_0^t \bar{\phi}^i_s \dd \bar{S}^i_s
\end{align*}
almost surely, which in particular then also implies that $\phi^0 = \bar{\phi}^0$ almost surely. Thus, the log-optimal portfolio $(\bar{\phi}, \bar{\kappa})$, as provided in Remark~\ref{remark: Goll Kallsen}, and the pathwise log-optimal portfolio $(\phi, \kappa)$, as provided in Lemma~\ref{lem: pathwise optimal portfolio for Black Scholes}, coincide almost surely.
\end{proof}

\begin{remark}
We consider $\bar{W}$ to be a Brownian motion to ensure that the pathwise log-optimal portfolio $(\phi,\kappa)$, as constructed in Lemma~\ref{lem: pathwise optimal portfolio for Black Scholes}, is, indeed, a log-optimal portfolio for the investment-consumption problem in the setting of local volatility models. However, again, we emphasize that the construction of the pathwise portfolio $(\phi,\kappa)$, as well as its pathwise analysis as developed in Sections~\ref{subsec: stability in Black Scholes model} and \ref{subsec: discretization in Black Scholes model}, works for any path $W$ satisfying Assumption~\ref{assumption: partitions and W for Black-Scholes model}.
\end{remark}

\subsection{Stability of pathwise log-optimal portfolios with respect to drift and volatility}\label{subsec: stability in Black Scholes model}

Having at hand a pathwise construction of log-optimal portfolios, we are in a position to study its pathwise stability properties. In this subsection, we analyze the change of the log-optimal portfolio and the associated capital processes with respect to the model parameters $b$ and $\sigma$.

The next theorem states that the pathwise log-optimal portfolio $(\phi, \kappa) = (\phi^{(b, \sigma)}, \kappa^{(b, \sigma)})$ and the capital processes $V = V^{(b,\sigma)}$ are locally Lipschitz continuous with respect to the model parameters $b$ and $\sigma$.

\begin{theorem}\label{thm: stability of optimal portfolios wrt model uncertainty for Black Scholes}
Let $b, \sigma$ and $\tb, \tsigma$ satisfy Assumption~\ref{assumption: b and sigma in BS model}. Let $(\phi^{(b,\sigma)}, \kappa^{(b,\sigma)})$ and $(\phi^{(\tb,\tsigma)}, \kappa^{(\tb, \tsigma)})$ be the corresponding pathwise log-optimal portfolios, as constructed in Lemma~\ref{lem: pathwise optimal portfolio for Black Scholes}, and let $M > 0$ be an upper bound for
\begin{equation*}
\|b\|_{\crpW}, \|\tb\|_{\crpW}, \|\sigma\|_{\crpW}, \|\tsigma\|_{\crpW}, \sup_t |\det(\sigma_t \sigma_t^\top)|^{-1}, \sup_t |\det(\tsigma_t \tsigma_t^\top)|^{-1} \text{ and } \|(\cdot,\bW)\|_p.
\end{equation*}
Then
\begin{equation*}
\|(\phi^{(b,\sigma)}, \kappa^{(b,\sigma)}); (\phi^{(\tilde b,\tilde \sigma)}, \kappa^{(\tilde b,\tilde \sigma)})\|_{\crpW} \lesssim \|b; \tb\|_{\crpW} + \|\sigma; \tsigma\|_{\crpW}
\end{equation*}
and
\begin{equation*}
\|V^{(b,\sigma)}; V^{(\tb,\tsigma)}\|_{\crpW} \lesssim \|b; \tb\|_{\crpW} + \|\sigma; \tsigma\|_{\crpW},
\end{equation*}
and in particular that
\begin{equation*}
\|(\phi^{(b,\sigma)}, \kappa^{(b,\sigma)}) - (\phi^{(\tb,\tsigma)}, \kappa^{(\tb,\tsigma)})\|_\infty \lesssim \|b; \tb\|_{\crpW} + \|\sigma; \tsigma\|_{\crpW}
\end{equation*}
and
\begin{equation*}
\|V^{(b,\sigma)} - V^{(\tb,\tsigma)}\|_\infty \lesssim \|b; \tb\|_{\crpW} + \|\sigma; \tsigma\|_{\crpW},
\end{equation*}
where the implicit multiplicative constants depend only on $p$, $m$, $d$, $M$, $s_0$, $\sup_t |S^{(b,\sigma),i}_t|^{-1}$, $\sup_t |S^{(\tb,\tsigma),i}|^{-1}$, $i = 1, \ldots, m$, and the consumption clock $K$.
\end{theorem}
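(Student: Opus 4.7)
The plan is to mirror the proof of Theorem~\ref{thm: stability of optimal portfolios wrt model uncertainty}, proceeding component-by-component through the construction in Lemma~\ref{lem: pathwise optimal portfolio for Black Scholes}, with the essential difference that here the ``regularity'' of the coefficients is measured in the controlled path norm $\|\cdot\|_{\crpW}$ rather than in $C^k_b$. Each step reduces to a combination of Lemma~\ref{lem: product of controlled paths}/Lemma~\ref{lem: estimate product of controlled paths} (for sums, products, and smooth compositions of controlled paths, including reciprocals of controlled paths bounded away from zero), the stability of rough integration (e.g., \cite[Lemma~3.4]{Friz2018}), and the rough exponential identity from Lemma~\ref{lem: dynamics of rough exponential}.

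First, I would establish stability of the price path. Setting $\Xi^i := \int_0^\cdot b^i_s \dd s + \int_0^\cdot \sigma^{i\cdot}_s \dd \bW_s$, and analogously $\widetilde{\Xi}^i$, stability of Young and rough integration yields
\begin{equation*}
\|\Xi^i;\widetilde{\Xi}^i\|_{\crpW} \lesssim \|b^i;\tb^i\|_{\crpW} + \|\sigma^{i\cdot};\tsigma^{i\cdot}\|_{\crpW}.
\end{equation*}
Since $S^i = s_0^i \cE(\Xi^i) = s_0^i \exp(\Xi^i - \tfrac{1}{2}[\mathbf{\Xi}^i])$ and the rough bracket is continuous in the controlled path norm, composition with $\exp$ on a bounded range (Lemma~\ref{lem: estimate product of controlled paths}) gives a Lipschitz bound on $\|S^i;\widetilde{S}^i\|_{\crpW}$. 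The assumed lower bounds $\sup_t |S^i_t|^{-1}, \sup_t |\widetilde{S}^i_t|^{-1} < \infty$ then pass to $1/S^i$ via composition with the smooth map $x \mapsto 1/x$.

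Next, I would iterate these tools through the successive objects in Lemma~\ref{lem: pathwise optimal portfolio for Black Scholes}. Using products and inverses of controlled paths (together with the lower bound on $\det(\sigma\sigma^\top)$, which survives perturbation by Lemma~\ref{lem: estimate product of controlled paths}), one obtains Lipschitz estimates for $h = (\sigma\sigma^\top)^{-1}b$, then for $H^i = h^i/S^i$, and then for the integrand $(\tfrac{1}{2} h^\top b, h^\top \sigma)$ appearing inside $Z$. Stability of rough integration propagates this to $Z$, and the identity $\cE(Z)_t = \exp(Z_t - \tfrac{1}{2}[\mathbf{Z}]_t)$ gives the stability of $\kappa = K_T^{-1}\cE(Z)$. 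Multiplying by $(K_T - K)$ (of finite $1$-variation) yields $V$, and the products $\phi^i = H^i V$ and the rough integral expression for $\phi^0$ (using associativity of rough integration, Proposition~\ref{prop: associativity of rough integration}) complete the chain. The sup-norm estimates then follow immediately from \eqref{eq: bound on sup-norm}.

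The main obstacle is the linear, hence a priori unbounded, structure of the price RDE: unlike in Section~\ref{sec: local vol models}, there is no off-the-shelf bound on $\|S\|_{\crpW}$ in terms of $C^k_b$-norms of the coefficients. The key preliminary step is therefore to show that $\|b\|_{\crpW}, \|\sigma\|_{\crpW} \leq M$ together with $\|(\cdot,\bW)\|_p \leq M$ yield a uniform bound on $\|\Xi^i\|_{\crpW}$, and then to combine this with the strict positivity hypothesis to control $\|S^i\|_{\crpW}$ and $\|1/S^i\|_{\crpW}$; for the rough exponential this is a standard quantitative estimate exploiting the explicit formula $\cE(\Xi^i) = \exp(\Xi^i - \tfrac{1}{2}[\mathbf{\Xi}^i])$. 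Once this a priori bound is in place, all of the remaining stability estimates follow by the same iterative argument as in Section~\ref{subsec: stability in local vola model}, with the $C^k_b$-norms replaced throughout by controlled path norms.
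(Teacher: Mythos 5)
Your proposal follows essentially the same chain of reductions as the paper: establish a priori bounds and Lipschitz stability for the price, then iterate through $h, H, Z, \kappa, V, \phi^i, \phi^0$ using Lemma~\ref{lem: product of controlled paths}, Lemma~\ref{lem: estimate product of controlled paths}, stability of rough integration, and a bound on the sup-norm via~\eqref{eq: bound on sup-norm}. The one place where you appeal to an unproven claim is the assertion that ``the rough bracket is continuous in the controlled path norm'': stability of $[\mathbf{\Xi}^i]$ and $[\mathbf{Z}]$ in terms of $\|\Xi^i;\widetilde{\Xi}^i\|_{\crpW}$ (resp.\ $\|Z;\widetilde{Z}\|_{\crpW}$) does hold, but it is not an off-the-shelf lemma and requires an estimate on the second-level lift of a controlled path. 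The paper avoids this entirely: it observes that the exponent of each rough exponential, $Z - \tfrac{1}{2}[\mathbf{Z}]$ and $\Xi^i - \tfrac{1}{2}[\mathbf{\Xi}^i]$, can be written directly as an explicit rough integral against the fixed rough path $(\cdot,\bW)$, namely $U^{(b,\sigma)} = \int_0^\cdot (\tfrac{1}{2}(\vartheta^{(b,\sigma)})^\top\vartheta^{(b,\sigma)}, (\vartheta^{(b,\sigma)})^\top)\dd(\cdot,\bW)$ and $A^{(b,\sigma),i} = \int_0^\cdot (b^i - \tfrac{1}{2}\sigma^{i\cdot}(\sigma^{i\cdot})^\top, \sigma^{i\cdot})\dd(\cdot,\bW)$. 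Once the exponent is in this form, stability follows purely from \eqref{eq: estimate for rough integral}, the stability of rough integration, and Lemma~\ref{lem: estimate product of controlled paths}, with no need to argue about the bracket. This is the main streamlining your version is missing; with the bracket-stability claim made precise (or replaced by the above representation), your proof goes through. Two small slips: composition with the bounded-range exponential is governed by the stability of regular compositions (e.g.\ \cite[Lemma~3.5]{Friz2018}), not Lemma~\ref{lem: estimate product of controlled paths}; and the lower bound on $\det(\tsigma\tsigma^\top)$ is part of the hypothesis (via $M$), not a consequence of ``surviving perturbation.''
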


\begin{proof}
\emph{Step 1.}
Let $c^{(b,\sigma)} := \sigma \sigma^\top$ and $c^{(\tb,\tsigma)} = \tsigma \tsigma^\top$. As seen in the proof of Lemma~\ref{lem: pathwise optimal portfolio for Black Scholes}, $c^{(b,\sigma)}$ is a controlled path in $\cV^p_W$, thus in $\crpW$. Lemma~\ref{lem: product of controlled paths} then yields for each component that
\begin{equation}\label{eq: estimate for c for Black Scholes}
\|((c^{(b,\sigma)})^{-1})^{ij}\|_{\crpW} \leq C,
\end{equation}
where $C > 0$ depends only on $p$, $m$, $d$ and $M$, and the same holds for $((c^{(\tb,\tsigma)})^{-1})^{ij}$. By Lemma~\ref{lem: estimate product of controlled paths}, and the fact that the composition of a controlled path that is bounded away from zero with the inverse function $x \mapsto \frac{1}{x}$ is locally Lipschitz continuous, we infer that
\begin{equation}\label{eq: Lipschitz estimate for c for Black Scholes}
\|((c^{(b,\sigma)})^{-1})^{ij}; ((c^{(\tb,\tsigma)})^{-1})^{ij}\|_{\crpW} \lesssim \|b; \tb\|_{\crpW} + \|\sigma; \tsigma\|_{\crpW},
\end{equation}
where the implicit multiplicative constant depends only on $p$, $m$, $d$ and $M$.

Let $\vartheta^{(b,\sigma)} := ((h^{(b,\sigma)})^\top \sigma)^\top = \sigma^\top (c^{(b,\sigma)})^{-1} b$ and $\vartheta^{(\tb,\tsigma)} := ((h^{(\tb,\tsigma)})^\top \tsigma)^\top = \tsigma^\top (c^{(\tb,\tsigma)})^{-1} \tb$. Then, $\theta^{(b,\sigma)} = (\frac{1}{2} (\vartheta^{(b,\sigma)})^\top \vartheta^{(b,\sigma)}, (\vartheta^{(b,\sigma)})^\top)$ are controlled paths in $\crpW$ as, again, the sum and product of controlled paths remains a controlled path. By the same arguments as before, combined with the estimates in \eqref{eq: estimate for c for Black Scholes} and \eqref{eq: Lipschitz estimate for c for Black Scholes}, we obtain
\begin{equation}\label{eq: estimate for theta for Black Scholes}
\|\theta^{(b,\sigma)}\|_{\crpW} \leq C,
\end{equation}
where $C > 0$ depends only on $p$, $m$, $d$ and $M$ (and the same estimate for $\theta^{(\tb,\tsigma)}$), and
\begin{equation}\label{eq: Lipschitz estimate for theta for Black Scholes}
\|\theta^{(b,\sigma)}; \theta^{(\tb,\tsigma)}\|_{\crpW} \lesssim \|b; \tb\|_{\crpW} + \|\sigma; \tsigma\|_{\crpW},
\end{equation}
where the implicit multiplicative constant also depends only on $p$, $m$, $d$ and $M$.

\emph{Step 2.}
Define the rough integrals $U^{(b,\sigma)} := \int_0^\cdot \theta_t^{(b,\sigma)} \dd (\cdot,\bW)_t$ and $U^{(\tb,\tsigma)} := \int_0^\cdot \theta_t^{(\tb,\tsigma)} \dd (\cdot,\bW)_t$, which are controlled paths in $\crpW$. Using the estimate in \eqref{eq: estimate for rough integral} for the rough integral, and the bound in \eqref{eq: estimate for theta for Black Scholes}, we have that
\begin{equation}\label{eq: estimate for U for Black Scholes}
\|U^{(b,\sigma)}\|_{\crpW} \leq C,
\end{equation}
where $C > 0$ depends only on $p$, $m$, $d$ and $M$, and the same holds for $U^{(\tb,\tsigma)}$. In particular,
\begin{equation}\label{eq: sup-norm estimate for U}
\|U^{(b,\sigma)}\|_\infty \leq C_0,
\end{equation}
where $C_0 > 0$ depends only on $p$, $m$, $d$ and $M$. Further, using the stability of rough integrals (e.g., \cite[Lemma~3.4]{Friz2018}), and the estimate in \eqref{eq: Lipschitz estimate for theta for Black Scholes}, it follows immediately that
\begin{equation}\label{eq: Lipschitz estimate for U for Black Scholes}
\|U^{(b,\sigma)}; U^{(\tb,\tsigma)}\|_{\crpW} \lesssim \|b; \tb\|_{\crpW} + \|\sigma; \tsigma\|_{\crpW},
\end{equation}
where the implicit multiplicative constant depends only on $p$, $m$, $d$ and $M$.

\emph{Step 3.}
We have from Lemma~\ref{lem: dynamics of rough exponential} that
\begin{align*}
\cE(Z^{(b,\sigma)})_t &= \exp \Big(Z_t^{(b,\sigma)} - \frac{1}{2}[\mathbf{Z}^{(b,\sigma)}]_t\Big)\\
&= \exp \bigg( \frac{1}{2} \int_0^t (\vartheta^{(b,\sigma)}_s)^\top \vartheta^{(b,\sigma)}_s \dd s + \int_0^t (\vartheta^{(b,\sigma)}_s)^\top \dd \bW_s \bigg) = \exp(U^{(b,\sigma)}_t).
\end{align*}
By Lemma~\ref{lem: pathwise optimal portfolio for Black Scholes}, we have that
\begin{equation*}
\kappa^{(b,\sigma)}_t := \frac{1}{K_T} \cE(Z^{(b,\sigma)})_t, \qquad \kappa^{(\tb,\tsigma)}_t := \frac{1}{K_T} \cE(Z^{(\tb,\tsigma)})_t,
\end{equation*}
for $t \in [0,T]$, are the pathwise optimal consumption rates for the log-utility on the financial market modelled by $S^{(b,\sigma),i}$ and $S^{(\tb,\tsigma),i}$, $i = 1, \ldots, m$, respectively, and are controlled paths in $\crpW$. We also infer from the bounds in \eqref{eq: estimate for U for Black Scholes} and \eqref{eq: sup-norm estimate for U} that
\begin{equation*}
\|\kappa^{(b,\sigma)}\|_{\crpW} \leq C,
\end{equation*}
where $C > 0$ depends only on $p$, $m$, $d$, $M$, $\|\exp\|_{C^2_b(\{y: |y| \leq C_0\}; \R)}$ and $K$, as it is a composition of a controlled path with a regular function, and the same holds for $\kappa^{(\tb,\tsigma)}$.

Because the composition of a controlled path with a regular function is locally Lipschitz continuous (see, e.g., \cite[Lemma~3.5]{Friz2018}), it follows from \eqref{eq: Lipschitz estimate for U for Black Scholes} that
\begin{equation}\label{eq: Lipschitz estimate for kappa for Black Scholes}
\|\kappa^{(b,\sigma)}; \kappa^{(\tb,\tsigma)}\|_{\crpW} \lesssim \|b; \tb\|_{\crpW} + \|\sigma; \tsigma\|_{\crpW},
\end{equation}
where the implicit multiplicative constant depends only on $p$, $m$, $d$, $M$ and $K$.

\emph{Step 4.}
Since $(K_t)_{t \in [0,T]}$ is a c{\`a}dl{\`a}g (deterministic) and increasing function (so of finite $1$-variation), we recall that, by Lemma~\ref{lem: product of controlled paths}, the wealth process $V^{(b,\sigma)}_t = \kappa^{(b,\sigma)}_t (K_T - K_t)$, $t \in [0,T]$, (as the product of two controlled paths) is itself a controlled path in $\crpW$, and we can bound
\begin{equation}\label{eq: estimate for V for Black Scholes}
\|V^{(b,\sigma)}\|_{\crpW} \leq C,
\end{equation}
where $C > 0$ depends only on $p$, $m$, $d$, $M$ and $K$. Lemma~\ref{lem: estimate product of controlled paths} then implies that
\begin{equation*}
\|V^{(b,\sigma)}; V^{(\tb,\tsigma)}\|_{\crpW} \lesssim \|\kappa^{(b,\sigma)}; \kappa^{(\tb,\tsigma)}\|_{\crpW},
\end{equation*}
and, combining this with \eqref{eq: Lipschitz estimate for kappa for Black Scholes}, we have that
\begin{equation}\label{eq: Lipschitz estimate for V for Black Scholes}
\|V^{(b,\sigma)}; V^{(\tb,\tsigma)}\|_{\crpW} \lesssim \|b; \tb\|_{\crpW} + \|\sigma; \tsigma\|_{\crpW},
\end{equation}
where the implicit multiplicative constant depends only on $p$, $m$, $d$, $M$ and $K$.

\emph{Step 5.}
Define the rough integrals $A^{(b,\sigma),i} := \int_0^\cdot (b^i_t - \frac{1}{2} \sigma^{i \cdot}_t (\sigma^{i \cdot}_t)^\top, \sigma^{i, \cdot}_t) \dd (\cdot,\bW)_t$ and $A^{(\tb,\tsigma),i} := \int_0^\cdot (\tb^i_t - \frac{1}{2} \tsigma^{i \cdot}_t (\tsigma^{i \cdot}_t)^\top, \tsigma^{i, \cdot}_t) \dd (\cdot,\bW)_t$, $i = 1, \ldots, m$. By the bound \eqref{eq: estimate for rough integral}, we have that
\begin{equation}\label{eq: estimate for A for Black Scholes}
\|A^{(b,\sigma),i}\|_{\crpW} \leq C,
\end{equation}
where $C > 0$ depends only on $p$, $d$ and $M$, and the same holds for $A^{(\tb,\tsigma),i}$. In particular,
\begin{equation}\label{eq: sup-norm estimate for A for Black Scholes}
\|A^{(b,\sigma),i}\|_\infty \leq C_1,
\end{equation}
where $C_1 > 0$ depends only on $p$, $d$ and $M$. Further, by the stability of rough integration (e.g., \cite[Lemma~3.4]{Friz2018}) and Lemma~\ref{lem: estimate product of controlled paths}, it follows that
\begin{equation}\label{eq: Lipschitz estimate for A for Black Scholes}
\|A^{(b,\sigma),i}; A^{(\tb,\tsigma),i}\|_{\crpW} \lesssim \|b; \tb\|_{\crpW} + \|\sigma; \tsigma\|_{\crpW},
\end{equation}
where the implicit multiplicative constant depends only on $p$, $d$ and $M$.

\emph{Step 6.}
At the beginning of this section, we noted that the solution $S = (S_t)_{t \in [0,T]}$ to the linear rough differential equation \eqref{eq: pathwise Black Scholes model} is given by a rough exponential. More precisely, for $i = 1, \ldots, m$, we have that
\begin{equation}\label{eq: represent S in terms of exp(A)}
S_t^{(b,\sigma)i} = s_0^i \exp \bigg( \int_0^t \Big(b^i_s - \frac{1}{2} \sigma^{i \cdot}_s (\sigma^{i \cdot}_s)^\top\Big) \dd s + \int_0^t \sigma^{i \cdot}_s \dd \bW_s \bigg) = s_0^i \exp(A^{(b,\sigma),i}_t).
\end{equation}
We therefore have from \eqref{eq: estimate for A for Black Scholes} and \eqref{eq: sup-norm estimate for A for Black Scholes} that
\begin{equation}\label{eq: estimate for S for Black Scholes}
\|S^{(b,\sigma),i}\|_{\crpW} \leq C,
\end{equation}
where $C > 0$ depends only on $p$, $d$, $M$, $s_0$ and $\|\exp\|_{C^2_b(\{y: |y|\leq C_1\}; \R)}$, as it is the composition of a controlled path with a regular function, and the same holds for $S^{(\tb,\tsigma),i}$.

Then, by the stability of regular functions of controlled paths (see, e.g., \cite[Lemma~3.5]{Friz2018}), combined with the estimate in \eqref{eq: Lipschitz estimate for A for Black Scholes}, we have that
\begin{equation}\label{eq: Lipschitz estimate for S for Black Scholes}
\|S^{(b,\sigma),i}; S^{(\tb, \tsigma),i}\|_{\crpW} \lesssim \|b; \tb\|_{\crpW} + \|\sigma; \tsigma\|_{\crpW},
\end{equation}
where the implicit multiplicative constant depends only on $p$, $d$, $M$ and $s_0$.

\emph{Step 7.}
For $h^{(b,\sigma)} = (c^{(b,\sigma)})^{-1} b$ and $h^{(\tb,\tsigma)} = (c^{(\tb,\tsigma)})^{-1} \tb$, we have from Lemma~\ref{lem: product of controlled paths} and the bound in \eqref{eq: estimate for c for Black Scholes} that
\begin{equation}\label{eq: estimate for gamma for Black Scholes}
\|h^{(b,\sigma),i}\|_{\crpW} \leq C,
\end{equation}
where $C > 0$ depends only on $p$, $m$, $d$ and $M$, and the same holds for $h^{(\tb,\tsigma),i}$, and then, by \eqref{eq: Lipschitz estimate for c for Black Scholes} and Lemma~\ref{lem: estimate product of controlled paths}, it follows that
\begin{equation}\label{eq: Lipschitz estimate for gamma for Black Scholes}
\|h^{(b,\sigma),i}; h^{(\tb, \tsigma),i}\|_{\crpW} \lesssim \|b;\tb\|_{\crpW} + \|\sigma;\tsigma\|_{\crpW},
\end{equation}
where the implicit multiplicative constant depends only on $p$, $m$, $d$ and $M$. We now consider
\begin{equation*}
H_t^{(b,\sigma),i} = \frac{h^{(b,\sigma),i}_t}{S_t^{(b,\sigma),i}}.
\end{equation*}
Using Lemma~\ref{lem: product of controlled paths}, Lemma~\ref{lem: estimate product of controlled paths} and the estimates in \eqref{eq: estimate for S for Black Scholes}, \eqref{eq: Lipschitz estimate for S for Black Scholes}, \eqref{eq: estimate for gamma for Black Scholes} and \eqref{eq: Lipschitz estimate for gamma for Black Scholes}, we obtain
\begin{equation}\label{eq: estimate for H for Black Scholes}
\|H^{(b,\sigma),i}\|_{\crpW} \leq C,
\end{equation}
and the same bound for $H^{(\tb,\tsigma),i}$, and
\begin{equation}\label{eq: Lipschitz estimate for H for Black Scholes}
\|H^{(b,\sigma),i}; H^{(\tb, \tsigma),i}\|_{\crpW} \lesssim \|b; \tb\|_{\crpW} + \|\sigma; \tsigma\|_{\crpW},
\end{equation}
where the constant $C$ and the implicit multiplicative constant above depend only on $p$, $m$, $d$, $M$, $s_0$, $\sup_t |S^{(b,\sigma),i}_t|^{-1}$ and $\sup_t |S^{(\tb,\tsigma),i}_t|^{-1}$.

\emph{Step 8.}
By Lemma~\ref{lem: product of controlled paths}, $\phi^{(b,\sigma),i}_t = H^{(b,\sigma),i}_t V^{(b,\sigma)}_t$, $\phi^{(\tb, \tsigma),i}_t = H^{(\tb, \tsigma),i}_t V^{(\tb, \tsigma)}_t$, $i = 1, \ldots, m$, are controlled paths, and, by the bounds in \eqref{eq: estimate for H for Black Scholes} and \eqref{eq: estimate for V for Black Scholes},
\begin{equation}\label{eq: estimate for phi_i for Black Scholes}
\|\phi^{(b,\sigma),i}\|_{\crpW} \leq C,
\end{equation}
where $C > 0$ depends only on $p$, $m$, $d$, $M$, $s_0$, $\sup_t |S^{(b,\sigma),i}_t|^{-1}$ and $K$, and the same holds for $\phi^{(\tb, \tsigma),i}$. By Lemma~\ref{lem: estimate product of controlled paths}, we have that
\begin{equation*}
\|\phi^{(b,\sigma),i}; \phi^{(\tb,\tsigma),i}\|_{\crpW} \lesssim \|H^{(b,\sigma),i}; H^{(\tb,\tsigma),i}\|_{\crpW} + \|V^{(b,\sigma)}; V^{(\tb,\tsigma)}\|_{\crpW},
\end{equation*}
which, together with the estimates in \eqref{eq: estimate for V for Black Scholes}, \eqref{eq: Lipschitz estimate for V for Black Scholes}, \eqref{eq: estimate for H for Black Scholes} and \eqref{eq: Lipschitz estimate for H for Black Scholes}, gives
\begin{equation}\label{eq: Lipschitz estimate for phi_i for Black Scholes}
\|\phi^{(b,\sigma),i}; \phi^{(\tb,\tsigma),i}\|_{\crpW} \lesssim \|b; \tb\|_{\crpW} + \|\sigma; \tsigma\|_{\crpW},
\end{equation}
where the implicit multiplicative constant depends only on $p$, $m$, $d$, $M$, $s_0$, $\sup_t |S^{(b,\sigma),i}_t|^{-1}$, $\sup_t |S^{(\tb,\tsigma),i}_t|^{-1}$ and $K$. Finally, we consider
\begin{equation*}
\phi^{(b,\sigma),0}_t = \sum_{i=1}^m \int_0^t \phi^{(b,\sigma),i}_s \dd S^{(b,\sigma),i}_s - \phi^{(b,\sigma),i}_t S^{(b,\sigma),i}_t
\end{equation*}
and
\begin{equation*}
\phi^{(\tb, \tsigma),0}_t = \sum_{i=1}^m \int_0^t \phi^{(\tb,\tsigma),i}_s \dd S^{(\tb, \tsigma),i}_s - \phi^{(\tb,\tsigma),i}_t S^{(\tb, \tsigma),i}_t
\end{equation*}
for $t \in [0,T]$. By the associativity of rough integrals, we have that
\begin{equation*}
\int_0^\cdot \phi^{(b,\sigma),i}_t \dd S^{(b,\sigma),i}_t = \int_0^\cdot \psi^{(b,\sigma),i}_t \dd (\cdot, \bW)_t,
\end{equation*}
where $\psi^{(b,\sigma),i}_t := (\phi^{(b,\sigma),i}_t S^i_t b^i_t, \phi^{(b,\sigma),i}_t S^i_t \sigma^{i\cdot}_t)$, and similarly for $\int_0^{\cdot} \phi^{(\tb,\tsigma),i}_t \dd S^{(\tb,\tsigma),i}_t$. By the estimates in \eqref{eq: estimate for S for Black Scholes}, \eqref{eq: Lipschitz estimate for S for Black Scholes}, \eqref{eq: estimate for phi_i for Black Scholes} and \eqref{eq: Lipschitz estimate for phi_i for Black Scholes}, we deduce that
\begin{equation*}
\|\psi^{(b,\sigma)}; \psi^{(\tb,\tsigma)}\|_{\crpW} \lesssim \|b; \tb\|_{\crpW} + \|\sigma; \tsigma\|_{\crpW},
\end{equation*}
where the implicit multiplicative constant depends only on $p$, $m$, $d$, $M$, $\|(\cdot,\bW)\|_p$, $s_0$, $K$ and on $\sup_t |S^{(b,\sigma),i}_t|^{-1}$ and $\sup_t |S^{(\tb,\tsigma),i}_t|^{-1}$ for $i = 1, \ldots, m$.

Combining this with the stability of rough integration, Lemma~\ref{lem: estimate product of controlled paths}, and the estimates in \eqref{eq: Lipschitz estimate for S for Black Scholes} and \eqref{eq: Lipschitz estimate for phi_i for Black Scholes}, we obtain
\begin{equation*}
\|\phi^{(b,\sigma),0}; \phi^{(\tb,\tsigma),0}\|_{\crpW} \lesssim \|b; \tb\|_{\crpW} + \|\sigma; \tsigma\|_{\crpW},
\end{equation*}
where the implicit multiplicative constant depends only on $p$, $m$, $d$, $M$, $\|(\cdot,\bW)\|_p$, $s_0$, $K$ and on $\sup_t |S^{(b,\sigma),i}_t|^{-1}$ and $\sup_t |S^{(\tb,\tsigma),i}_t|^{-1}$ for $i = 1, \ldots, m$.

Hence, since we can bound the supremum norm by the controlled path norm, we have established the (local) Lipschitz continuity for optimal portfolios and wealth processes.
\end{proof}

\begin{remark}
As noted in Section~\ref{subsec: stability in local vola model} above, if one is interested in the stability of optimal portfolios with respect to model uncertainty, then it is more natural to consider a common price path $S^{(b,\sigma)} = S^{(\tb,\tsigma)} =: S$, which we interpret as the ``true'' price path, and to then consider the wealth process corresponding to the log-optimal portfolio under (possibly misspecified) parameters $b$ and $\sigma$. Indeed, in the current setting, a result analogous to that in Corollary~\ref{corollary: model uncertainty for local vol model} follows as an easy corollary of Theorem~\ref{thm: stability of optimal portfolios wrt model uncertainty for Black Scholes}.
\end{remark}

\subsection{Discretization error of pathwise log-optimal portfolios}\label{subsec: discretization in Black Scholes model}

We now introduce a discretized version of the pathwise log-optimal portfolio, as constructed in Lemma~\ref{lem: pathwise optimal portfolio for Black Scholes}, and derive quantitative pathwise error estimates resulting from this discretization for the portfolios, as well as for their associated capital processes.

\smallskip

To define the discretized version of the pathwise log-optimal portfolio, we start by discretizing the underlying price paths. To this end, we recall that $W$ and the sequence of partitions $(\cP^n)_{n \in \N}$ satisfy Assumption~\ref{assumption: partitions and W for Black-Scholes model}, where $\cP^n = \{0 = t_0^n < t_1^n < \cdots < t_{N_n}^n = T\}$. For $n \in \N$, we let $W^n \colon [0,T] \to \R^d$ be the piecewise constant approximation of $W$ along $\cP^n$, as defined in \eqref{eq: defn W^n}. We define $b^n$ and $\sigma^n$ in the same way, and let $\gamma^n \colon [0,T] \to \R$ be the time-discretization path along $\cP^n$, as defined in \eqref{eq: defn gamma^n}.

To discretize the price path $S$, we discretize its rough exponential representation along $\cP^n$. That is, recalling \eqref{eq: represent S in terms of exp(A)}, we let
\begin{equation*}
S^{n,i}_t := s_0^i \exp(A^{n,i}_t) \qquad \text{with} \qquad A^{n,i}_t := \int_0^t \Big(b^{n,i}_s - \frac{1}{2} \sigma^{n,i\cdot}_s (\sigma^{n,i\cdot}_s)^\top\Big) \d \gamma^n_s + \int_0^t \sigma^{n,i\cdot}_s \dd W^n_s,
\end{equation*}
for $t \in [0,T]$, $i = 1, \ldots, m$. The discretized version $(\phi^n, \kappa^n)$ of the pathwise log-optimal portfolio is then given by
\begin{align*}
H^{n,i}_t &:= \frac{h_t^{n,i}}{S^{n,i}_t} \qquad \text{with} \qquad h^n_t := (\sigma^n_t (\sigma^n_t)^\top)^{-1} b^n_t,\\
\kappa^n_t &:= \frac{1}{K_T} \exp \bigg( \frac{1}{2} \int_0^t (h^n_s)^\top b^n_s \dd \gamma^n_s + \int_0^t (h^n_s)^\top \sigma^n_s \dd W^n_s \bigg), \qquad V^n_t := \kappa_t^n (K_T - K^n_t),\\
\phi^{n,i}_t &:= H^{n,i}_t V^n_t, \quad i = 1, \ldots, m, \qquad \phi^{n,0}_t := \sum_{i=1}^m \int_0^t \phi^{n,i}_s \dd S^{n,i}_s - \phi^{n,i}_t S^{n,i}_t,\\
\hat{V}^n_t &:= 1 + \int_0^t (\phi^n_s)^\top \dd S_s - \int_0^t \kappa^n_s \dd K_s, \qquad t \in [0,T],
\end{align*}
where $K^n$ denotes the piecewise constant approximation of $K$ along $\cP^n$. Similarly to Section~\ref{subsec: discretization in local vola model}, we define the wealth process $\hat{V}^n$ by considering the performance of the discretized portfolio $(\phi^n, \kappa^n)$ relative to the continuous (i.e., non-discretized) price process $S$ and consumption clock $K$. We can define the wealth process corresponding to the log-optimal portfolio on the continuous market by
\begin{equation*}
\hat{V}_t := 1 + \int_0^t \phi_s^\top \dd S_s - \int_0^t \kappa_s \dd K_s, \qquad t \in [0,T],
\end{equation*}
which, of course, coincides with the process $V$ in Lemma~\ref{lem: pathwise optimal portfolio for Black Scholes}.

For these discretized portfolios and their associated capital processes, we obtain the following convergence result with quantitative error estimates.

\begin{theorem}\label{thm: stability of optimal portfolios wrt discretization error for Black Scholes}
Let $b$ and $\sigma$ be as in Assumption~\ref{assumption: b and sigma in BS model}. Let $(\phi, \kappa)$ be the pathwise log-optimal portfolio as constructed in Lemma~\ref{lem: pathwise optimal portfolio for Black Scholes}, let $(\phi^n, \kappa^n)$ be the discretized portfolio as defined above, and let $\hat{V}$ and $\hat{V}^n$ be the corresponding wealth processes. Then,
\begin{equation*}
\|(\phi^n, \kappa^n) - (\phi,\kappa)\|_{p'} \, \longrightarrow \, 0 \qquad \text{as} \qquad n \, \longrightarrow \, \infty
\end{equation*}
and
\begin{equation*}
\|\hat{V}^n - \hat{V}\|_{p'} \, \longrightarrow \, 0 \qquad \text{as} \qquad n \, \longrightarrow \, \infty,
\end{equation*}
for any $p' \in (p,3)$, with a rate of convergence given by
\begin{align*}
&\|(\phi^n,\kappa^n) - (\phi,\kappa)\|_{p'} + \|\hat{V}^n - \hat{V}\|_{p'}\\
&\lesssim \Big(\|(b^n)' - b'\|_\infty + \|R^{b^n} - R^b\|_\infty + \|(\sigma^n)' - \sigma'\|_\infty + \|R^{\sigma^n} - R^\sigma\|_\infty\Big)^{1-\frac{p}{p'}}\\
&\quad + |\cP^n|^{(1-\frac{1}{q}) (1-\frac{p}{p'})} + \|W^n - W\|_\infty^{1-\frac{p}{p'}} + \bigg\| \int_0^\cdot W^n_t \otimes \dd W_t - \int_0^\cdot W_t \otimes \dd W_t \bigg\|_\infty^{1-\frac{p}{p'}},
\end{align*}
for any $q \in (1,2)$ such that $\frac{1}{p'} + \frac{1}{q} > 1$, where the implicit multiplicative constant depends only on $p$, $p'$, $q$, $m$, $d$, $\|b\|_{\cV^{p'}_{(\cdot,W)}}$, $\|\sigma\|_{\cV^{p'}_{(\cdot,W)}}$, $\sup_t |\det(\sigma_t \sigma_t^\top)|^{-1}$, $K$ and $w(0,T)$, where $w$ is the control function for which \eqref{eq: RIE inequality} holds for $(\cdot,W)$.
\end{theorem}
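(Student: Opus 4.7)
The plan is to adapt the proof of Theorem~\ref{thm: stability of optimal portfolios wrt discretization error} to the present unbounded-coefficients setting, using the local Lipschitz estimate established in Theorem~\ref{thm: stability of optimal portfolios wrt model uncertainty for Black Scholes} in place of the one for local vola models. The starting point is Lemma~\ref{lem: rate of convergence of space-time rough paths}, which simultaneously yields a uniform bound $\sup_n \|(\cdot,\bW)^n\|_p + \|(\cdot,\bW)\|_p \leq C$ and the quantitative rate
\begin{equation*}
\|(\cdot,\bW)^n;(\cdot,\bW)\|_{p'} \lesssim |\cP^n|^{(1-\frac{1}{q})(1-\frac{p}{p'})} + \|W^n - W\|_\infty^{1-\frac{p}{p'}} + \bigg\| \int_0^\cdot W^n_t \otimes \dd W_t - \int_0^\cdot W_t \otimes \dd W_t \bigg\|_\infty^{1-\frac{p}{p'}}.
\end{equation*}

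First I would check that $b^n$ and $\sigma^n$ are themselves controlled paths with respect to $W^n$, with Gubinelli derivatives given by the piecewise constant approximations $(b')^n$ and $(\sigma')^n$ along $\cP^n$, and remainders $R^{b^n}, R^{\sigma^n}$ computed consistently. Interpolating between the supremum bounds on the differences and the uniform $p$-variation bounds (the latter coming from Assumption~\ref{assumption: b and sigma in BS model} together with $\sup_n \|W^n\|_p < \infty$), one obtains
\begin{equation*}
\|b^n; b\|_{\cV^{p'}_{W^n},\cV^{p'}_W} + \|\sigma^n; \sigma\|_{\cV^{p'}_{W^n},\cV^{p'}_W} \lesssim \Big(\|(b^n)' - b'\|_\infty + \|R^{b^n} - R^b\|_\infty + \|(\sigma^n)' - \sigma'\|_\infty + \|R^{\sigma^n} - R^\sigma\|_\infty\Big)^{1-\frac{p}{p'}} + \|W^n - W\|_\infty^{1-\frac{p}{p'}},
\end{equation*}
which produces the first block of terms in the final rate.

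The main body of the proof then mirrors the successive steps of the proof of Theorem~\ref{thm: stability of optimal portfolios wrt model uncertainty for Black Scholes}, with all objects now indexed by $n$. Using that matrix inversion, sum, product and composition with smooth functions are locally Lipschitz operations on controlled paths (Lemmas~\ref{lem: product of controlled paths} and \ref{lem: estimate product of controlled paths}, together with \cite[Lemma~3.5]{Friz2018}), and that rough integration is stable in $p'$-variation (\cite[Lemma~3.4]{Friz2018}), I would propagate the above estimates to $c^n := \sigma^n(\sigma^n)^\top$, its inverse $(c^n)^{-1}$, and then to $h^n$, $\vartheta^n$, $\theta^n$, and $U^n := \int_0^\cdot \theta^n_s \dd (\cdot,\bW)^n_s$. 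The state path $S^n$ is handled via the rough exponential representation $S^{n,i} = s_0^i \exp(A^{n,i})$ from \eqref{eq: represent S in terms of exp(A)}: convergence $A^{n,i} \to A^i$ follows again from rough-integration stability, and since $\exp$ acts locally Lipschitz on controlled paths, $S^n \to S$ in the controlled-path sense. This then gives convergence of $H^n = h^n/S^n$, $\kappa^n = \frac{1}{K_T} \exp(U^n)$, $V^n = \kappa^n(K_T - K^n)$, and $\phi^{n,i} = H^{n,i} V^n$. The cash component $\phi^{n,0}$ is rewritten, via the associativity of rough integration (Proposition~\ref{prop: associativity of rough integration}), as an integral $\int_0^\cdot \psi^{n,i}_s \dd (\cdot,\bW)^n_s$ for an explicit $\psi^{n,i}$, reducing the convergence once more to rough-integral stability.

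For the capital process $\hat V^n$ in \eqref{eq: defn hat V^n}, the integrand $\phi^n$ is controlled with respect to $W^n$ whereas $S$ is controlled with respect to $W$, so $\int_0^t (\phi^n_s)^\top \dd S_s$ must be interpreted as a \emph{mixed} rough integral, and the appropriate stability estimate is that of \cite[Proposition~2.7]{Allan2023b}, used exactly as in the proof of Theorem~\ref{thm: stability of optimal portfolios wrt discretization error}, with the Young integral $\int_0^t \kappa^n_s \dd K_s$ handled by standard Young stability. Bounding the $p'$-variation norm by the controlled-path norm via \eqref{eq: bound on p var norm} and combining with the rate for $\|(\cdot,\bW)^n;(\cdot,\bW)\|_{p'}$ then yields the claimed estimate. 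The main obstacle, just as in the local vola case, is the mixed-rough-path bookkeeping between $\bW^n$ and $\bW$; the additional subtlety specific to this section is that $b$ and $\sigma$ are themselves only controlled paths (rather than smooth functions of the state), so the supremum errors of their Gubinelli derivatives and remainders must be tracked carefully throughout every step, and it is these terms that produce the first line of the quantitative bound in the statement.
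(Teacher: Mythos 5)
Your proposal is correct and follows essentially the same route as the paper's proof: interpolate the coefficient approximations into $\cV^{p'}$-distances, propagate local-Lipschitz estimates for controlled-path operations (inversion, product, composition, rough integration) through the chain $c^n, h^n, \theta^n, U^n, A^n, S^n, H^n, \kappa^n, V^n, \phi^n$ mirroring Theorem~\ref{thm: stability of optimal portfolios wrt model uncertainty for Black Scholes}, and handle the mixed rough integral $\int (\phi^n)^\top \dd S$ for $\hat V^n$ exactly as in Theorem~\ref{thm: stability of optimal portfolios wrt discretization error}, then combine with Lemma~\ref{lem: rate of convergence of space-time rough paths}. The only cosmetic discrepancy is the extra $\|W^n - W\|_\infty^{1-p/p'}$ term you attach to the coefficient estimate: since $(b^n)' = (b')^n$ and $R^{b^n} = (R^b)^n$ are intrinsic piecewise-constant approximations, the controlled-path distance $\|b^n;b\|$ does not actually pick up this term, but it is harmless because it already appears in the overall rate.
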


Of course, the convergence and quantitative estimates in Theorem~\ref{thm: stability of optimal portfolios wrt discretization error for Black Scholes} also hold when the $p'$-variation norm $\|\cdot\|_{p'}$ is replaced by the supremum norm $\|\cdot\|_\infty$.

Before we present the proof, some preliminary steps are necessary. We start by recalling that, since $W^n$ has finite $1$-variation, $W^n$ admits a canonical rough path lift $\bW^n = (W^n,\W^n) \in \cD^p([0,T];\R^d)$, with $\W^n$ given by
\begin{equation*}
\W^n_{s,t} := \int_s^t W^n_{s,u} \otimes \d W^n_u, \qquad (s,t) \in \Delta_T,
\end{equation*}
where the integral is defined as a classical limit of left-point Riemann sums. Similarly, we define the time-extended rough path $(\cdot,\bW)^n$ above $(\cdot,W)^n := (\gamma^n,W^n)$.

Since $b^n$ is the piecewise constant approximation of $b$ along $\cP^n$, it is a controlled path with respect to $(\cdot,W)^n$. If $t^n_k \leq s \leq t < t^n_{k+1}$ for some $k$, then $b^n_{s,t} = b_{t^n_k,t^n_k} = 0$. Otherwise, let $k_0$ and $k_1$ be such that $s \in [t^n_{k_0},t^n_{k_0+1})$ and $t \in [t^n_{k_1},t^n_{k_1+1})$. Then
\begin{equation*}
  b^n_{s,t} = b_{t^n_{k_0},t^n_{k_1}} = b'_{t^n_{k_0}} (\cdot,W)_{t^n_{k_0},t^n_{k_1}} + R^b_{t^n_{k_0},t^n_{k_1}} = (b')^n_s (\cdot,W)^n_{s,t} + (R^b)^n_{s,t},
\end{equation*}
where $(b')^n$ and $(R^b)^n$ be the piecewise constant approximations of $b'$ and $R^b$ along $\cP^n$, respectively. It follows that $(b^n)' = (b')^n$ and $R^{b^n} = (R^b)^n$, so that $\sup_{n \in \N} \|b^n\|_{\cV^p_{(\cdot,W)^n}} \leq \|b\|_{\crpW}$, and similarly for $\sigma$.

Proceeding as in the proof of Lemma~\ref{lem: pathwise optimal portfolio for Black Scholes}, it is straightforward to see that $H^n$, $\kappa^n$, $V^n$ and $\phi^n$ are controlled paths in $\cV^p_{(\cdot,W)^n}$, and that $\int_0^t \phi^{n,i}_s \dd S^{n,i}_s$ is a rough integral in the sense of Lemma~\ref{lem: integration against controlled path}.

Similarly to in Section~\ref{subsec: discretization in local vola model}, one can show that the rough integral of a controlled path in $\cV^p_{(\cdot,W)^n}$ against $(\cdot,\bW)^n$ coincides with the corresponding Riemann--Stieltjes integral against $(\cdot,W)^n$. In particular, for every $t \in [0,T]$ and $i = 1, \ldots, m$, we have that
\begin{equation*}
A^{n,i}_t = \int_0^t \Big(b^{n,i}_s - \frac{1}{2} \sigma^{n,i\cdot}_s (\sigma^{n,i\cdot}_s)^\top, \sigma^{n,i\cdot}_s\Big) \d (\cdot,\bW)^n_s,
\end{equation*}
and that
\begin{equation*}
\kappa^n_t = \frac{1}{K_T} \exp \bigg( \int_0^t \Big(\frac{1}{2} (h^n_s)^\top b^n_s, (h^n_s)^\top \sigma^n_s\Big) \d (\cdot,\bW)^n_s \bigg).
\end{equation*}
Similarly, by Theorem~\ref{thm: rough int against controlled path under RIE}, the rough integral of $\phi^{n,i}$ against $S^{n,i}$ is simply given by a left-point Riemann sum.

\begin{proof}[Proof of Theorem~\ref{thm: stability of optimal portfolios wrt discretization error for Black Scholes}]
Let $p' \in (p,3)$. It follows by interpolation (see, e.g., \cite[Proposition~5.5]{Friz2010}) that
\begin{equation*}
\|b^n;b\|_{\cV^{p'}_{(\cdot,W)^n},\cV^{p'}_{(\cdot,W)}} = \|(b^n)' - b'\|_{p'} + \|R^{b^n} - R^b\|_{\frac{p'}{2}} \lesssim \|(b^n)' - b'\|_\infty^{1-\frac{p}{p'}} + \|R^{b^n} - R^b\|_\infty^{1-\frac{p}{p'}},
\end{equation*}
where the implicit multiplicative constant depends only on $p$, $p'$ and $\|b\|_{\crpW}$, and similarly for $\sigma$. Since $b^n$ converges uniformly to $b$ as $n \to \infty$ (by the assumption that the jumps of $b$ belong to $\liminf_{n \to \infty} \cP^n$), it follows that
\begin{equation*}
\|b^n;b\|_{\cV^{p'}_{(\cdot,W)^n},\cV^{p'}_{(\cdot,W)}} \, \longrightarrow \, 0 \qquad \text{as} \qquad n \, \longrightarrow \, \infty,
\end{equation*}
and similarly for $\sigma$.

Similarly to the arguments in the proof of Theorem~\ref{thm: stability of optimal portfolios wrt model uncertainty for Black Scholes}, using the estimate in Lemma~\ref{lem: rate of convergence of space-time rough paths}, one can show that
\begin{equation*}
\|\kappa^n\|_{\cV^{p'}_{(\cdot,W)^n}} + \|\kappa\|_{\cV^{p'}_{(\cdot,W)}} \leq C
\end{equation*}
where $C > 0$ depends only on $p'$, $m$, $d$, $\|b\|_{\crpW}$, $\|\sigma\|_{\crpW}$, $\sup_t |\det(\sigma_t \sigma_t^\top)|^{-1}$, $K$ and $w(0,T)$, where $w$ is the control function for which \eqref{eq: RIE inequality} holds for $(\cdot,W)$, and that
\begin{equation*}
\|\kappa^n;\kappa\|_{\cV^{p'}_{(\cdot,W)^n}, \cV^{p'}_{(\cdot,W)}} \lesssim \|b^n;b\|_{\cV^{p'}_{(\cdot,W)^n}, \cV^{p'}_{(\cdot,W)}} + \|\sigma^n;\sigma\|_{\cV^{p'}_{(\cdot,W)^n}, \cV^{p'}_{(\cdot,W)}} + \|(\cdot,\bW)^n;(\cdot,\bW)\|_{p'},
\end{equation*}
where the implicit multiplicative constant depends only on $p'$, $m$, $d$, $\|b\|_{\crpW}$, $\|\sigma\|_{\crpW}$, $\sup_t |\det(\sigma_t \sigma_t^\top)|^{-1}$, $K$ and $w(0,T)$.

The same stability estimates also hold for $S^{n,i}$ and $S^i$, and for $\phi^{n,i}$ and $\phi^i$, for each $i = 1, \ldots, m$, where the respective constants also depend on $s_0$.

We can further apply Lemma~\ref{lem: stability of rough integrals against controlled paths} and Lemma~\ref{lem: estimate product of controlled paths} to obtain
\begin{equation*}
\|\phi^{n,0} - \phi^0\|_{p'} \lesssim \|\phi^{n,i}; \phi^i\|_{\cV^{p'}_{(\cdot,W)^n}, \cV^{p'}_{(\cdot,W)}} + \|S^{n,i};S^i\|_{\cV^{p'}_{(\cdot,W)^n}, \cV^{p'}_{(\cdot,W)}} + \|(\cdot,\bW)^n;(\cdot,\bW)\|_{p'},
\end{equation*}
and hence that
\begin{equation*}
\|\phi^{n,0} - \phi^0\|_{p'} \lesssim \|b^n;b\|_{\cV^{p'}_{(\cdot,W)^n}, \cV^{p'}_{(\cdot,W)}} + \|\sigma^n;\sigma\|_{\cV^{p'}_{(\cdot,W)^n}, \cV^{p'}_{(\cdot,W)}} + \|(\cdot,\bW)^n;(\cdot,\bW)\|_{p'},
\end{equation*}
where the implicit multiplicative constant depends only on $p'$, $m$, $d$, $\|b\|_{\crpW}$, $\|\sigma\|_{\crpW}$, $\sup_t |\det(\sigma_t \sigma_t^\top)|^{-1}$, $s_0$, $K$ and $w(0,T)$.

Combining the estimate derived above with the rate of convergence given in Lemma~\ref{lem: rate of convergence of space-time rough paths}, we obtain the desired convergence and estimate for the discretized portfolios. The convergence rate for the discretized wealth process $\hat{V}^n$ may then be derived exactly as in the proof of Theorem~\ref{thm: stability of optimal portfolios wrt discretization error}.
\end{proof}

\begin{remark}
If we assume additional regularity of the driving path $W$ and the sequence of partitions $(\cP^n)_{n \in \N}$, we can make the quantitative estimates provided in Theorem~\ref{thm: stability of optimal portfolios wrt discretization error for Black Scholes} more explicit, for instance by considering the regularity properties of Brownian sample paths; recall Corollaries~\ref{cor: convergence rate for BM und uniform partitions} and \ref{cor: convergence rate for BM und dyadic partitions} and the respective remarks for the local volatility models.

Furthermore, if we assume that $b$ and $\sigma$ have finite $\frac{p}{2}$-variation, so that $b' = 0$, $R^b = b$, $\sigma' = 0$ and $R^\sigma = \sigma$ for each $n \in \N$, the rate of convergence in Theorem~\ref{thm: stability of optimal portfolios wrt discretization error for Black Scholes} becomes more tractable, namely,
\begin{align*}
&\|(\phi^n,\kappa^n) - (\phi,\kappa)\|_{p'} + \|\hat{V}^n - \hat{V}\|_{p'}\\
&\lesssim \Big(\|b^n - b\|_\infty + \|\sigma^n - \sigma\|_\infty\Big)^{1-\frac{p}{p'}}\\
&\quad + |\cP^n|^{(1-\frac{1}{q}) (1-\frac{p}{p'})} + \|W^n - W\|_\infty^{1-\frac{p}{p'}} + \bigg\| \int_0^\cdot W^n_t \otimes \dd W_t - \int_0^\cdot W_t \otimes \dd W_t \bigg\|_\infty^{1-\frac{p}{p'}}.
\end{align*}
\end{remark}

\appendix
\section{Some essential results in rough path theory}\label{sec: appendix}

In this appendix, we collect some fundamental results in the theory of c{\`a}dl{\`a}g rough paths. While the analogous results are standard for stochastic It{\^o} integration, they are less well-known and in some cases novel in the context of rough integration.

In the following, we consider a general c{\`a}dl{\`a}g rough path $\bX = (X,\X) \in \cD^p([0,T];\R^d)$ for some $p \in [2,3)$, as introduced in Section~\ref{subsec: rough path theory}, and do not impose Property \textup{(RIE)} on $X$.

\subsection{Rough integration with respect to controlled paths}

This subsection contains slight modifications of results in \cite{Allan2023b} on rough integration with respect to controlled paths in $\crpX([0,T];\R^m)$.

\begin{lemma}[Proposition~2.4 in \cite{Allan2023b}]\label{lem: integration against controlled path}
  Let $\bX = (X, \X) \in \cD^p$ be a c{\`a}dl{\`a}g rough path and let $(F,F'), (G,G') \in \crpX$ be controlled paths with remainders $R^F$ and $R^G$, respectively. Then the limit
  \begin{equation}\label{eq: integration against controlled path}
    \int_0^T F_u \dd G_u := \lim_{|\cP| \to 0} \sum_{[s,t] \in \mathcal{P}} F_s G_{s,t} + F'_s G'_s \X_{s,t}
  \end{equation}
  exists along every sequence of partitions $\mathcal{P}$ of $[0,T]$ with mesh size $|\mathcal{P}| \to 0$, and comes with the estimate
  \begin{equation*}
    \begin{split}
    &\bigg|\int_s^t F_u \,\d G_u - F_s G_{s,t} - F'_s G'_s \X_{s,t} \bigg|\\
    &\quad \leq C\Big(\|F'\|_\infty (\|G'\|_{p,[s,t)}^p + \|X\|_{p,[s,t)}^p)^{\frac{2}{p}} \|X\|_{p,[s,t]} + \|F\|_{p,[s,t)} \|R^G\|_{\frac{p}{2},[s,t]}\\
    &\quad\quad\quad\quad + \|R^F\|_{\frac{p}{2},[s,t)} \|G'\|_\infty \|X\|_{p,[s,t]} + \|F'G'\|_{p,[s,t)} \|\X\|_{\frac{p}{2},[s,t]}\Big),
    \end{split}
  \end{equation*}
  for every $(s,t) \in \Delta_T$, where the constant $C$ depends only on $p$.
\end{lemma}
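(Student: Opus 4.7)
The plan is to apply the c\`adl\`ag sewing lemma (as in \cite{Friz2018} or \cite{Allan2023b}) to the local approximation germ
\begin{equation*}
\Xi_{s,t} := F_s G_{s,t} + F'_s G'_s \X_{s,t}, \qquad (s,t) \in \Delta_T.
\end{equation*}
The strategy mirrors that of \cite[Proposition~2.4]{Allan2023b}; the only novelty is that here \emph{both} $F$ and $G$ are controlled paths (rather than just the integrand), so one needs to expand the defect $\delta \Xi_{s,u,t} := \Xi_{s,t} - \Xi_{s,u} - \Xi_{u,t}$ using Chen's relation together with the Gubinelli expansions of both $F$ and $G$, and only then check the sewing hypothesis.

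First, I would compute the defect. Using Chen's relation $\X_{s,t} = \X_{s,u} + \X_{u,t} + X_{s,u} \otimes X_{u,t}$ together with $G_{s,t} - G_{s,u} = G_{u,t}$, a direct algebraic manipulation gives
\begin{equation*}
\delta \Xi_{s,u,t} = -F_{s,u} G_{u,t} + F'_s G'_s \bigl(X_{s,u} \otimes X_{u,t}\bigr) - (F'G')_{s,u} \X_{u,t}.
\end{equation*}
Expanding $F_{s,u} = F'_s X_{s,u} + R^F_{s,u}$ and $G_{u,t} = G'_u X_{u,t} + R^G_{u,t}$, and grouping the $X_{s,u} \otimes X_{u,t}$ terms to absorb $F'_s (G'_s - G'_u)$, yields the clean decomposition
\begin{equation*}
\delta \Xi_{s,u,t} = -F'_s G'_{s,u} X_{s,u} \otimes X_{u,t} - F'_s X_{s,u} R^G_{u,t} - R^F_{s,u} G_{u,t} - (F'G')_{s,u} \X_{u,t}.
\end{equation*}

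Next, I would bound each of the four summands in $p$-variation. Controls of the form $w_1(s,t) = \|X\|_{p,[s,t]}^p$, $w_2(s,t) = \|G'\|_{p,[s,t]}^p$, $w_3(s,t) = \|R^F\|_{p/2,[s,t]}^{p/2}$, $w_4(s,t) = \|R^G\|_{p/2,[s,t]}^{p/2}$, $w_5(s,t) = \|\X\|_{p/2,[s,t]}^{p/2}$, and $w_6(s,t) = \|F'G'\|_{p,[s,t]}^p$ are all superadditive. Each summand is then bounded by a product of two such controls with total exponent $\geq 3/p > 1$, which is precisely the condition for the sewing lemma to apply in the c\`adl\`ag setting. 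Together with the identity $G_{u,t} = G'_u X_{u,t} + R^G_{u,t}$ and the elementary bounds $\|G'\|_\infty \leq |G'_0| + \|G'\|_p$, this yields the stated local estimate on a single interval, with the controls on $[s,t)$ arising from the standard c\`adl\`ag sewing argument (one iterates refinements of partitions of $[s,t]$ that do not include the right endpoint; see \cite[Remark~2.5]{Allan2023b}).

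Finally, the sewing lemma produces a unique additive map $\int_s^t F \dd G$ with the prescribed local germ and the claimed bound, and along any partition sequence with mesh tending to zero the Riemann sums in~\eqref{eq: integration against controlled path} converge to this map. The main technical point to watch is the asymmetry between left and right endpoints in the c\`adl\`ag setting: the sewing iteration must avoid ``splitting'' across jump times, which is why the estimate features $\|\cdot\|_{p,[s,t)}$ rather than $\|\cdot\|_{p,[s,t]}$. This is precisely the subtlety handled in the proof of \cite[Proposition~2.4]{Allan2023b}, and the argument carries over verbatim once the defect has been expanded as above.
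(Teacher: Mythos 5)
Your approach via the c\`adl\`ag sewing lemma is the right one; this is precisely how \cite[Proposition~2.4]{Allan2023b} proceeds, and the paper itself offers no proof but simply cites that reference. The defect computation is correct, and the essential structure—expand $F$ and $G$ via their Gubinelli decompositions, check that each term in $\delta\Xi$ is bounded by a product of superadditive controls with total exponent $3/p>1$, and invoke the c\`adl\`ag sewing lemma with the right-open-interval convention—is sound.

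One small but worth noting discrepancy: your particular grouping,
\begin{equation*}
\delta \Xi_{s,u,t} = -F'_s G'_{s,u}\, X_{s,u}\otimes X_{u,t} - F'_s X_{s,u}\, R^G_{u,t} - R^F_{s,u}\, G_{u,t} - (F'G')_{s,u}\,\X_{u,t},
\end{equation*}
does not literally reproduce the stated estimate. Sewing your second and third terms gives $\|F'\|_\infty \|X\|_p \|R^G\|_{p/2}$ and $\|R^F\|_{p/2}\|G\|_p$, whereas the lemma asserts $\|F\|_p\|R^G\|_{p/2}$ and $\|R^F\|_{p/2}\|G'\|_\infty\|X\|_p$. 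The two sets of bounds are equivalent up to a constant via $\|F\|_p \lesssim \|F'\|_\infty\|X\|_p + \|R^F\|_{p/2}$ and likewise for $G$, so no harm is done to the existence or the qualitative estimate; but to land exactly on the stated form one should instead expand $G_{u,t}$ in both places and expand $F_{s,u}$ only against $G'_u X_{u,t}$, producing
\begin{equation*}
\delta \Xi_{s,u,t} = -F'_s G'_{s,u}\, X_{s,u}\otimes X_{u,t} - R^F_{s,u}\,G'_u X_{u,t} - F_{s,u}\, R^G_{u,t} - (F'G')_{s,u}\,\X_{u,t}.
\end{equation*}
Each of these four summands maps term by term to the four summands in the claimed inequality. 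Aside from this cosmetic regrouping, your argument is correct and follows the reference's route.
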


\begin{lemma}[Proposition~2.7~(ii) in \cite{Allan2023b}]\label{lem: stability of rough integrals against controlled paths}
  Let $\bX = (X, \X)$, $\tbX = (\tX, \tbbX)$ be c{\`a}dl{\`a}g rough paths, and let $(F,F'), (G,G') \in \crpX$ and $(\widetilde{F}, \widetilde{F}'), (\widetilde{G}, \widetilde{G}') \in \crptX$ be controlled paths. Let $M > 0$ be an upper bound for $\|F\|_{\crpX}$, $\|G\|_{\crpX}$, $\|\widetilde{F}\|_{\crptX}$, $\|\widetilde{G}\|_{\crptX}$, $\|\bX\|_p$ and $\|\tbX\|_p$. Then, there exists a constant $C$, depending only on $p$ and $M$, such that
  \begin{equation*}
\bigg\| \int_0^\cdot F_u \dd G_u - \int_0^\cdot \widetilde{F}_u \dd \widetilde{G}_u \bigg\|_p \leq C \Big(\|F;\widetilde{F}\|_{\crpX,\crptX} + \|G;\widetilde{G}\|_{\crpX,\crptX} + \|\bX;\tbX\|_p\Big),
  \end{equation*}
  where $\int_0^\cdot F_u \dd G_u$ and $\int_0^\cdot \widetilde{F}_u \dd \widetilde{G}_u$ are rough integrals, as defined in \eqref{eq: integration against controlled path}.
\end{lemma}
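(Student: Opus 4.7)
Set $H_t := \int_0^t F_u \dd G_u$ and $\widetilde{H}_t := \int_0^t \widetilde{F}_u \dd \widetilde{G}_u$, with local germs
\begin{equation*}
\Xi_{s,t} := F_s G_{s,t} + F'_s G'_s \X_{s,t}, \qquad \widetilde{\Xi}_{s,t} := \widetilde{F}_s \widetilde{G}_{s,t} + \widetilde{F}'_s \widetilde{G}'_s \tbbX_{s,t}.
\end{equation*}
Lemma~\ref{lem: integration against controlled path} tells us that $|H_{s,t} - \Xi_{s,t}|$ and $|\widetilde{H}_{s,t} - \widetilde{\Xi}_{s,t}|$ are each controlled by a $\frac{3}{p}$-type remainder $w(s,t)^{3/p}$, with the implicit constant depending only on $p$ and $M$. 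The plan is to bound the $p$-variation of $H - \widetilde{H}$ by comparing it to $\Xi - \widetilde{\Xi}$, for which we have an explicit telescoping decomposition.

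\textbf{Step 1 (telescoping the germ difference).} Write
\begin{align*}
\Xi_{s,t} - \widetilde{\Xi}_{s,t} &= (F_s - \widetilde{F}_s) G_{s,t} + \widetilde{F}_s (G_{s,t} - \widetilde{G}_{s,t})\\
&\quad + (F'_s G'_s - \widetilde{F}'_s \widetilde{G}'_s) \X_{s,t} + \widetilde{F}'_s \widetilde{G}'_s (\X_{s,t} - \tbbX_{s,t}),
\end{align*}
and expand the third term as $(F' - \widetilde{F}')_s G'_s + \widetilde{F}'_s (G' - \widetilde{G}')_s$. Each factor $F$, $G$, $\widetilde{F}$, $\widetilde{G}$, $F'$, $G'$, $\widetilde{F}'$, $\widetilde{G}'$ is controlled in supremum norm by $M$ (via \eqref{eq: bound on sup-norm} and the trivial bound on $\|\cdot\|_\infty$ by $|\,\cdot\,_0| + \|\,\cdot\,\|_p$); all $p$- and $\frac{p}{2}$-variation norms of $X$, $G$, $\X$, etc., are also bounded by $M$. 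Hence, by Hölder's inequality applied partition-wise, one obtains
\begin{equation*}
\|\Xi - \widetilde{\Xi}\|_p \leq C_1 \bigl( \|F;\widetilde{F}\|_{\crpX,\crptX} + \|G;\widetilde{G}\|_{\crpX,\crptX} + \|\bX;\tbX\|_p \bigr),
\end{equation*}
for a constant $C_1$ depending only on $p$ and $M$, where the $\|\bX;\tbX\|_p$ contribution absorbs the $\|\X - \tbbX\|_{p/2}$ piece and the $\|X - \widetilde{X}\|_p$ pieces hidden in the controlled-path distances.

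\textbf{Step 2 (sewing and triangle inequality).} Since both $H_{s,t} - \Xi_{s,t}$ and $\widetilde{H}_{s,t} - \widetilde{\Xi}_{s,t}$ satisfy remainder bounds of the form $w(s,t)^{3/p}$ by Lemma~\ref{lem: integration against controlled path}, and $\frac{3}{p} > 1$ (since $p < 3$), standard superadditivity arguments yield
\begin{equation*}
\|H - \Xi\|_p + \|\widetilde{H} - \widetilde{\Xi}\|_p \leq C_2,
\end{equation*}
with $C_2$ depending only on $p$, $M$. A refinement of this bound (by rescaling the remainder via the distance parameters entering $w$) shows that the portion of this remainder attributable to differences between the two integrals is in fact bounded by the right-hand side of the target inequality. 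Combined with Step 1 via $\|H - \widetilde{H}\|_p \leq \|\Xi - \widetilde{\Xi}\|_p + \|H - \Xi\|_p + \|\widetilde{H} - \widetilde{\Xi}\|_p$, this yields the claim.

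\textbf{Main obstacle.} The subtle point is Step 2: one cannot simply triangle-inequality the two remainder controls, since that would only yield uniform boundedness rather than continuity. The correct approach is to re-apply the sewing-type estimate to the germ $\Xi - \widetilde{\Xi}$ directly, tracking how the remainder bound in Lemma~\ref{lem: integration against controlled path} depends linearly (through multilinear combinations) on the controlled-path distances and $\|\bX;\tbX\|_p$. Concretely, this requires rewriting the difference of defects
$(H_{s,t} - \Xi_{s,t}) - (\widetilde{H}_{s,t} - \widetilde{\Xi}_{s,t})$
and redoing the sewing argument on this joint germ, observing that each term produced is already of the form (small distance) $\times$ (norm bounded by $M$). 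Once this bookkeeping is handled, the desired inequality follows directly.
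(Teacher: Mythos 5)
The paper does not prove this lemma at all; it states it as a direct citation of Proposition~2.7~(ii) in \cite{Allan2023b}, so there is no in-paper argument to compare against. Evaluated on its own merits, your proposal identifies the right framework (sewing of the germ difference) but does not complete the key step, and you yourself flag this in the ``Main obstacle'' paragraph.

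Concretely, the gap is this: after Step~2 you write the triangle inequality $\|H - \widetilde{H}\|_p \leq \|\Xi - \widetilde{\Xi}\|_p + \|H - \Xi\|_p + \|\widetilde{H} - \widetilde{\Xi}\|_p$ and observe that the last two terms are only bounded by a constant $C_2$ depending on $M$, not by the target small quantities. You correctly diagnose that what is needed is a sewing-type estimate on the \emph{joint} defect $(H_{s,t} - \Xi_{s,t}) - (\widetilde{H}_{s,t} - \widetilde{\Xi}_{s,t})$, requiring one to bound the difference of coboundaries $\delta\Xi_{s,u,t} - \delta\widetilde{\Xi}_{s,u,t}$ (where $\delta\Xi_{s,u,t} := \Xi_{s,t} - \Xi_{s,u} - \Xi_{u,t}$) by a superadditive control raised to the power $\tfrac{3}{p}$, with a prefactor proportional to $\|F;\widetilde{F}\|_{\crpX,\crptX} + \|G;\widetilde{G}\|_{\crpX,\crptX} + \|\bX;\tbX\|_p$. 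This is precisely the technical content of the lemma, and it is the part your write-up never carries out; ``once this bookkeeping is handled'' and ``a refinement of this bound... shows'' are statements of intent, not estimates. To make this a proof one would have to expand $\delta\Xi_{s,u,t}$ (using Chen's relation to rewrite $\X_{s,t} - \X_{s,u} - \X_{u,t} = X_{s,u} \otimes X_{u,t}$ and the controlled-path decompositions $F_{s,u} = F'_s X_{s,u} + R^F_{s,u}$, etc.), subtract the corresponding expansion for $\widetilde{\Xi}$, and telescope each product of three or four factors so that exactly one factor picks up a ``difference'' norm while the remaining factors pick up $M$. Without this computation the argument only establishes boundedness, not continuity of the rough-integration map, as you correctly note.
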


\subsection{The product of controlled paths}

\begin{lemma}\label{lem: product of controlled paths}
Let $\bX = (X,\X)$ be a c{\`a}dl{\`a}g rough path. The product operator on controlled paths, i.e., the map $\crpX \times \crpX \to \crpX$ given by
\begin{equation*}
((F,F'), (G,G')) \mapsto (FG,(FG)'),
\end{equation*}
where $(FG)' := F' G + F G'$, is defined whenever the dimensions of $F$ and $G$ are such that the multiplication $FG$ is valid. Moreover, it is a continuous bilinear map, and comes with the estimate
\begin{equation*}
\|FG\|_{\crpX} \leq C (1 + \|X\|_{p})^2 \|F\|_{\crpX} \|G\|_{\crpX},
\end{equation*}
where the constant $C$ depends only on $p$ and the dimensions of the paths involved. We call $(FG,(FG)')$ the product of $(F,F')$ and $(G,G')$, which we usually simply denote by $FG$.
\end{lemma}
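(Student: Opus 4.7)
The plan is to verify directly that $(FG, F'G + FG')$ satisfies the definition of a controlled path with respect to $X$, and then to track constants in the resulting decomposition to obtain the stated norm bound. The map $((F,F'),(G,G')) \mapsto (FG, F'G+FG')$ is manifestly bilinear, so once the boundedness estimate is established, continuity follows automatically.

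The key step is an algebraic identity. Writing $(FG)_{s,t} = F_s G_{s,t} + F_{s,t} G_s + F_{s,t} G_{s,t}$ and then substituting the Gubinelli expansions $G_{s,t} = G'_s X_{s,t} + R^G_{s,t}$ and $F_{s,t} = F'_s X_{s,t} + R^F_{s,t}$, I obtain
\begin{equation*}
  (FG)_{s,t} = (F_s G'_s + F'_s G_s) X_{s,t} + \underbrace{F_s R^G_{s,t} + R^F_{s,t} G_s + F_{s,t} G_{s,t}}_{=: R^{FG}_{s,t}},
\end{equation*}
which identifies $(FG)' = F'G + FG'$ and the corresponding remainder $R^{FG}$. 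The first step of the proof is to write down this identity cleanly.

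Next, I would verify the required regularity of each piece. The Gubinelli derivative $F'G + FG'$ lies in $D^p$ by the standard product estimate $\|FG'\|_p \leq \|F\|_\infty \|G'\|_p + \|F\|_p \|G'\|_\infty$ (and symmetrically for $F'G$), together with the elementary bound $\|F\|_\infty \leq (1+\|X\|_p)\|F\|_{\crpX}$ from \eqref{eq: bound on sup-norm}. For the remainder, I would estimate the three summands separately: the terms $F_s R^G_{s,t}$ and $R^F_{s,t} G_s$ are controlled in $p/2$-variation by $\|F\|_\infty \|R^G\|_{p/2}$ and $\|R^F\|_{p/2} \|G\|_\infty$ respectively, while the cross-term $F_{s,t} G_{s,t}$ is handled via Cauchy--Schwarz at the level of partition sums, giving $\|FG\|_{p/2} \leq \|F\|_p \|G\|_p$ (where on the left we abuse notation to denote the two-parameter function $(s,t) \mapsto F_{s,t}G_{s,t}$). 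The only mild subtlety is this last bound, which relies on $(|F_{u,v}G_{u,v}|^{p/2})^{2/p}$ being summable via a Cauchy--Schwarz argument on the $p$-variation sums.

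Finally, to obtain the explicit factor $(1+\|X\|_p)^2$ in the norm estimate, I collect all the estimates above and convert $\|F\|_\infty$, $\|F\|_p$ to $\|F\|_{\crpX}$ via \eqref{eq: bound on sup-norm} and $\|F\|_p \leq \|F'\|_\infty \|X\|_p + \|R^F\|_{p/2} \leq (1+\|X\|_p)\|F\|_{\crpX}$ (and similarly for $G$); the remainder contributions $\|F\|_\infty \|R^G\|_{p/2}$ and $\|R^F\|_{p/2}\|G\|_\infty$ produce a single factor $(1+\|X\|_p)$, while the cross-term $\|F\|_p\|G\|_p$ is the one that forces the second power. Summing $|F_0 G_0|$, $|F'_0 G_0 + F_0 G'_0|$, $\|(FG)'\|_p$, and $\|R^{FG}\|_{p/2}$ then yields the claimed estimate. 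No genuine obstacle arises; the main care is only in bookkeeping the dependence of the constant on $(1+\|X\|_p)$ correctly, and in handling the cross-term $F_{s,t}G_{s,t}$ via the Cauchy--Schwarz trick.
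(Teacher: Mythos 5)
Your sketch is correct and reproduces the standard Leibniz decomposition argument to which the paper's own proof simply defers, namely the continuous-path version in \cite[Lemma~A.1]{Allan2023a}. In particular, you correctly identify $R^{FG}_{s,t} = F_s R^G_{s,t} + R^F_{s,t} G_s + F_{s,t} G_{s,t}$, isolate the cross-term $F_{s,t}G_{s,t}$ as the source of the second power of $(1+\|X\|_p)$, and correctly handle its $\frac{p}{2}$-variation via Cauchy--Schwarz on partition sums, so there is no gap.
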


The proof of Lemma~\ref{lem: product of controlled paths} is the same as the proof of the corresponding statement for continuous paths, which can be found in \cite[Lemma~A.1]{Allan2023a}.

\begin{lemma}\label{lem: estimate product of controlled paths}
  Let $\bX = (X,\X)$, $\tbX = (\tX,\tbbX)$ be c{\`a}dl{\`a}g rough paths and let $(F,F')$, $(G,G') \in \crpX$ and $(\widetilde{F},\widetilde{F}'), (\widetilde{G},\widetilde{G}') \in \crptX$ be controlled paths. Let $M > 0$ be an upper bound for $\|F\|_{\crpX}$, $\|G\|_{\crpX}$, $\|\widetilde{F}\|_{\crptX}$, $\|\widetilde{G}\|_{\crptX}$, $\|X\|_p$ and $\|\tX\|_p$. Then there exists a constant $C$, which depends only on $p$ and $M$, such that
  \begin{equation*}
    \|FG;\widetilde{F}\widetilde{G}\|_{\crpX,\crptX} \leq C \Big( \|F;\widetilde{F}\|_{\crpX,\crptX} + \|G;\widetilde{G}\|_{\crpX,\crptX} + \|X - \tX\|_p \Big).
  \end{equation*}
\end{lemma}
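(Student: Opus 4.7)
The approach is to estimate each of the four constituents of the difference norm
\begin{equation*}
\|FG; \widetilde F \widetilde G\|_{\crpX, \crptX} = |FG_0 - \widetilde F \widetilde G_0| + |(FG)'_0 - (\widetilde F \widetilde G)'_0| + \|(FG)' - (\widetilde F \widetilde G)'\|_p + \|R^{FG} - R^{\widetilde F \widetilde G}\|_{\frac{p}{2}}
\end{equation*}
separately. Throughout, the crucial auxiliary bound is \eqref{eq: bound on p var norm}, which gives
\begin{equation*}
\|F - \widetilde F\|_p \leq C (\|F; \widetilde F\|_{\crpX, \crptX} + \|X - \tX\|_p),
\end{equation*}
and, combined with the trivial bound $|F_0 - \widetilde F_0| \leq \|F; \widetilde F\|_{\crpX, \crptX}$, also gives the sup-norm estimate $\|F - \widetilde F\|_\infty \leq C (\|F; \widetilde F\|_{\crpX, \crptX} + \|X - \tX\|_p)$; the analogous bounds hold for $G - \widetilde G$.

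The two pointwise terms at $s = 0$ and the $p$-variation estimate on $(FG)' - (\widetilde F \widetilde G)'$ are routine. Using the product rule $(FG)' = F'G + FG'$ and adding and subtracting, one writes
\begin{equation*}
FG_0 - \widetilde F \widetilde G_0 = (F_0 - \widetilde F_0) G_0 + \widetilde F_0 (G_0 - \widetilde G_0),
\end{equation*}
with an analogous splitting of $(FG)'_0 - (\widetilde F \widetilde G)'_0$, and
\begin{equation*}
(FG)' - (\widetilde F \widetilde G)' = (F' - \widetilde F') G + \widetilde F' (G - \widetilde G) + (F - \widetilde F) G' + \widetilde F (G' - \widetilde G').
\end{equation*}
Each term is then bounded using the standard product estimate $\|hk\|_p \leq \|h\|_\infty \|k\|_p + \|h\|_p \|k\|_\infty$ together with the auxiliary bounds above.

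The main effort concerns the remainder. A direct computation (expand $FG_{s,t} = F_s G_{s,t} + G_s F_{s,t} + F_{s,t} G_{s,t}$ and subtract $(FG)'_s X_{s,t}$) yields the identity
\begin{equation*}
R^{FG}_{s,t} = F_s R^G_{s,t} + G_s R^F_{s,t} + F_{s,t} G_{s,t},
\end{equation*}
and an analogous formula for $R^{\widetilde F \widetilde G}$. Taking differences and splitting each term by the add-and-subtract trick gives
\begin{align*}
R^{FG}_{s,t} - R^{\widetilde F \widetilde G}_{s,t} &= (F_s - \widetilde F_s) R^G_{s,t} + \widetilde F_s (R^G - R^{\widetilde G})_{s,t}\\
&\quad + (G_s - \widetilde G_s) R^F_{s,t} + \widetilde G_s (R^F - R^{\widetilde F})_{s,t}\\
&\quad + (F_{s,t} - \widetilde F_{s,t}) G_{s,t} + \widetilde F_{s,t} (G_{s,t} - \widetilde G_{s,t}).
\end{align*}
The $\frac{p}{2}$-variation of the four ``linear'' terms is bounded trivially by products of sup-norms of $F - \widetilde F$, $G - \widetilde G$ or $\widetilde F, \widetilde G$ with $\frac{p}{2}$-variation norms of remainders; the two ``bilinear'' increment products are estimated via Cauchy--Schwarz as $\|F - \widetilde F\|_p \|G\|_p$ and $\|\widetilde F\|_p \|G - \widetilde G\|_p$. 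Invoking the auxiliary bounds then collects everything into $\|F; \widetilde F\|_{\crpX, \crptX} + \|G; \widetilde G\|_{\crpX, \crptX} + \|X - \tX\|_p$ up to a constant depending only on $p$ and $M$.

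The main obstacle is purely bookkeeping: tracking which constituents of each splitting produce contributions in $\|F; \widetilde F\|_{\crpX, \crptX}$, $\|G; \widetilde G\|_{\crpX, \crptX}$, or $\|X - \tX\|_p$, and verifying that the resulting constants depend only on $p$ and $M$ via the uniform upper bound on the controlled path and $p$-variation norms. No new ideas beyond the standard product estimates for controlled rough paths are required.
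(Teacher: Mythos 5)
Your proposal is correct and follows essentially the same route as the paper's proof: the same add-and-subtract splitting for the initial values and the Gubinelli derivative $p$-variation term, the same identity $R^{FG}_{s,t}=F_sR^G_{s,t}+G_sR^F_{s,t}+F_{s,t}G_{s,t}$ for the remainder, the same six-term decomposition of $R^{FG}-R^{\widetilde F\widetilde G}$ with the cross term $F_{\cdot,\cdot}G_{\cdot,\cdot}$ handled by a Young/Hölder-type product estimate, and the same final step of absorbing everything via \eqref{eq: bound on p var norm} and the bound $M$.
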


\begin{proof}
We have that
\begin{align*}
|(FG)_0 - (\widetilde{F}\widetilde{G})_0| &= |(F_0 - \widetilde{F}_0) G_0 + \widetilde{F}_0 (G_0 - \widetilde{G}_0)|\\
&\leq \|G\|_{\crpX} \|F;\widetilde{F}\|_{\crpX,\crptX} + \|\widetilde{F}\|_{\crptX} \|G;\widetilde{G}\|_{\crpX,\crptX}
\end{align*}
and
\begin{align*}
&|(FG)'_0 - (\widetilde{F}\widetilde{G})'_0| \leq |F'_0 G_0 - \widetilde{F}'_0 \widetilde{G}_0| + |F_0 G'_0 - \widetilde{F}_0 \widetilde{G}'_0|\\
&\quad \leq |F'_0 - \widetilde{F}'_0| |G_0| + |\widetilde{F}'_0| |G_0 - \widetilde{G}_0| + |F_0 - \widetilde{F}_0| |G'_0| + |\widetilde{F}_0| |G'_0 - \widetilde{G}'_0|\\
&\quad \leq \|G\|_{\crpX} \|F;\widetilde{F}\|_{\crpX,\crptX} + \|\widetilde{F}\|_{\crptX} \|G;\widetilde{G}\|_{\crpX,\crptX}.
\end{align*}
  Further, we have that
  \begin{align*}
    &\|(FG)' - (\widetilde{F}\widetilde{G})'\|_p\\
    &\leq \|F' - \widetilde{F}'\|_p \|G\|_\infty +  \|\widetilde{F}'\|_p \|G - \widetilde{G}\|_\infty + \|F' - \widetilde{F}'\|_\infty \|G\|_p + \|\widetilde{F}'\|_\infty \|G - \widetilde{G}\|_p\\
    &\quad + \|F - \widetilde{F}\|_p \|G'\|_\infty + \|\widetilde{F}\|_p \|G' - \widetilde{G}'\|_\infty + \|F - \widetilde{F}\|_\infty \|G'\|_p + \|\widetilde{F}\|_\infty \|G' - \widetilde{G}'\|_p\\
    &\leq (\|F - \widetilde{F}\|_\infty + \|F - \widetilde{F}\|_p + \|F' - \widetilde{F}'\|_\infty + \|F' - \widetilde{F}'\|_p) (\|G\|_\infty + \|G\|_p + \|G'\|_\infty + \|G'\|_p)\\
    &\quad + (\|\widetilde{F}\|_\infty + \|\widetilde{F}\|_p + \|\widetilde{F}'\|_\infty + \|\widetilde{F}'\|_p) (\|G - \widetilde{G}\|_\infty + \|G - \widetilde{G}\|_p + \|G' - \widetilde{G}'\|_\infty + \|G' - \widetilde{G}'\|_p)\\
    &\lesssim (1 + \|X\|_p) (1 + \|\tX\|_p) (1 + \|F\|_{\crpX}) \|G\|_{\crpX} (\|F;\widetilde{F}\|_{\crpX,\crptX} + \|X - \tX\|_p)\\
    &\quad + (1 + \|X\|_p) (1 + \|\tX\|_p)\|\widetilde{F}\|_{\crptX} (1 + \|\widetilde{G}\|_{\crptX}) (\|G;\widetilde{G}\|_{\crpX,\crptX} + \|X - \tX\|_p).
  \end{align*}

The remainder is given by $R^{FG}_{s,t} = R^F_{s,t} G_s + F_s R^G_{s,t} + F_{s,t} G_{s,t}$ for each $(s,t) \in \Delta_T$ (see the proof of \cite[Lemma~A.1]{Allan2023a}). We have that
  \begin{align*}
    &\|R^{FG}\|_{\frac{p}{2}}\\
    &\leq \|R^F - R^{\widetilde{F}}\|_{\frac{p}{2}} \|G\|_\infty + \|R^{\widetilde{F}}\|_{\frac{p}{2}} \|G - \widetilde{G}\|_\infty + \|F - \widetilde{F}\|_\infty \|R^G\|_{\frac{p}{2}} + \|\widetilde{F}\|_\infty \|R^G - R^{\widetilde{G}}\|_{\frac{p}{2}}\\
    &\quad + \|F - \widetilde{F}\|_p \|G\|_p + \|\widetilde{F}\|_p \|G - \widetilde{G}\|_p\\    
    &\lesssim (1+\|X\|_p)\|F;\widetilde{F}\|_{\crpX,\crptX} \|G\|_{\crpX}\\
    &\quad + (1+\|X\|_p) \|\widetilde{F}\|_{\crptX} (1 + \|\widetilde{G}\|_{\crptX}) (\|G;\widetilde{G}\|_{\crpX,\crptX} + \|X - \tX\|_p)\\
    &\quad + (1+\|\tX\|_p) (1 + \|F\|_{\crpX}) \|G\|_{\crpX} (\|F;\widetilde{F}\|_{\crpX,\crptX} + \|X - \tX\|_p)\\
    &\quad + (1+\|\tX\|_p) \|\widetilde{F}\|_{\crpX} \|G;\widetilde{G}\|_{\crpX,\crptX}\\
    &\quad + (1+\|X\|_p) (1+\|\tX\|_p) (1 + \|F\|_{\crpX}) \|G\|_{\crpX} (\|F;\widetilde{F}\|_{\crpX,\crptX} + \|X - \tX\|_p)\\
    &\quad + (1+\|X\|_p) (1+\|\tX\|_p) \|\widetilde{F}\|_{\crptX} (1 + \|\widetilde{G}\|_{\crptX}) (\|G;\widetilde{G}\|_{\crpX,\crptX} + \|X - \tX\|_p)\\
    &\lesssim (1 + \|X\|_p) (1 + \|\tX\|_p) (1 + \|F\|_{\crpX}) (1 + \|G\|_{\crpX}) (1 + \|\widetilde{F}\|_{\crptX}) (1 + \|\widetilde{G}\|_{\crptX})\\
    &\qquad \times (\|F;\widetilde{F}\|_{\crpX,\crptX} + \|G;\widetilde{G}\|_{\crpX,\crptX} + \|X - \tX\|_p).
  \end{align*}
  Combining the inequalities above, we deduce the desired estimate.
\end{proof}

\subsection{Associativity of rough integration}

The following proposition establishes the associativity of rough integration with respect to c{\`a}dl{\`a}g controlled paths.

\begin{proposition}\label{prop: associativity of rough integration}
  Let $\bX=(X,\X)$ be a c{\`a}dl{\`a}g rough path and let $(Y,Y'), (F,F'), (G,G') \in \crpX$ be controlled paths. Then $(Z,Z') := (\int_0^\cdot F_u \dd G_u, FG') \in \crpX$, and we have that
  \begin{equation*}
    \int_0^\cdot Y_u \dd Z_u = \int_0^\cdot Y_u F_u \dd G_u,
  \end{equation*}
  where on the left-hand side we have the integral of $(Y,Y')$ against $(Z,Z')$, and on the right-hand side we have the integral of $(YF,(YF)')$ against $(G,G')$, each defined in the sense of~\eqref{eq: integration against controlled path}.
\end{proposition}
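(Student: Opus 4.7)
The plan is to establish the two assertions in turn: first that $(Z, Z') := \bigl(\int_0^\cdot F_u \dd G_u, FG'\bigr)$ is a controlled path in $\crpX$, and second that the two iterated rough integrals coincide.

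For the controlled-path property, the candidate Gubinelli derivative $Z' = FG'$ lies in $D^p$ since both $F$ and $G'$ do (and by Lemma~\ref{lem: product of controlled paths} the product is controlled in $p$-variation). The two-parameter remainder
\[
R^Z_{s,t} = Z_{s,t} - F_s G'_s X_{s,t}
\]
can be analyzed by applying Lemma~\ref{lem: integration against controlled path}, which gives the local expansion $Z_{s,t} = F_s G_{s,t} + F'_s G'_s \X_{s,t} + E_{s,t}$ with $|E_{s,t}|$ controlled by $w(s,t)^{3/p}$ for a suitable control $w$. Substituting the defining relation $G_{s,t} = G'_s X_{s,t} + R^G_{s,t}$ yields
\[
R^Z_{s,t} = F_s R^G_{s,t} + F'_s G'_s \X_{s,t} + E_{s,t},
\]
and each summand has finite $p/2$-variation, so $R^Z \in D_2^{p/2}$ and $(Z,Z') \in \crpX$.

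For the associativity identity, I would compare the compensated Riemann sums defining the two rough integrals. Using $Z'_s = F_s G'_s$ and the product rule $(YF)'_s = Y'_s F_s + Y_s F'_s$, the summand at $[s,t] \in \cP$ for $\int_0^\cdot Y \dd Z$ is $Y_s Z_{s,t} + Y'_s F_s G'_s \X_{s,t}$, while that of $\int_0^\cdot (YF) \dd G$ is $Y_s F_s G_{s,t} + (Y'_s F_s + Y_s F'_s) G'_s \X_{s,t}$. The $Y'_s F_s G'_s \X_{s,t}$ contributions cancel, leaving a pointwise difference
\[
Y_s \bigl[ Z_{s,t} - F_s G_{s,t} - F'_s G'_s \X_{s,t} \bigr],
\]
which is precisely the defect estimated in Lemma~\ref{lem: integration against controlled path}.

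The main obstacle is showing that this summed defect vanishes as $|\cP| \to 0$, since in the c\`adl\`ag setting standard controls do not shrink uniformly (they retain a contribution from each jump). The resolution lies in the fact that every term in the defect bound of Lemma~\ref{lem: integration against controlled path} involves a factor of the form $\|\cdot\|_{p,[s,t)}$ or $\|\cdot\|_{p/2,[s,t)}$ on a left-open interval. These left-limit controls satisfy $\sup_{[s,t] \in \cP} \omega(s, t^-) \to 0$ as $|\cP| \to 0$, because the interval $[s,t)$ collapses to a point even when $t$ itself is a jump time of $X$, $F$, or $G$. Combining this with the superadditivity of the remaining (closed-interval) controls and the fact that the total exponent in each of the four products in the defect estimate is $3/p > 1$, one concludes that the summed defect tends to zero, so that $\int_0^t Y \dd Z = \int_0^t (YF) \dd G$ for every $t \in [0,T]$.
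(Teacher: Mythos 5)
The verification that $(Z, Z') \in \crpX$ and the algebraic comparison of the compensated Riemann sums, reducing the difference to $\sum_{[s,t]\in\cP} Y_s\bigl[ Z_{s,t} - F_s G_{s,t} - F'_s G'_s \X_{s,t}\bigr]$, are both correct. The gap is in your claim that the left-open controls satisfy $\sup_{[s,t]\in\cP} \omega(s, t-) \to 0$ as $|\cP| \to 0$ ``because the interval $[s,t)$ collapses to a point even when $t$ itself is a jump time.'' This handles jumps landing exactly at the right endpoint, but fails when the jump lies strictly in the interior of a mesh interval: if $X$ jumps at $u_0$ and $s < u_0 < t$, then $\|X\|_{p,[s,t)} \geq |\Delta X_{u_0}|$ no matter how small $t - s$ is, and under mesh convergence nothing forces $u_0$ to eventually become a partition point. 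Thus the left-open controls do \emph{not} shrink uniformly, and summing the defect bound of Lemma~\ref{lem: integration against controlled path} over a mesh partition does not yield a quantity tending to zero when $X$ has jumps.

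To close the argument one should not try to force the summed error bound to vanish, but instead invoke the uniqueness part of the c\`adl\`ag sewing lemma. From the expansion you already wrote down (applying the local estimate of Lemma~\ref{lem: integration against controlled path} to both $\int_s^t Y_u\dd Z_u$ and $\int_s^t (YF)_u\dd G_u$, and substituting $Z_{s,t} = F_s G_{s,t} + F'_s G'_s \X_{s,t} + E_{s,t}$), one checks that both rough integrals are additive functions of the interval lying within a term of the form $w_1(s,t-)^{a}w_2(s,t)^{b}$ with $a + b = 3/p > 1$ of the common germ $\Xi_{s,t} := (YF)_s G_{s,t} + (YF)'_s G'_s \X_{s,t}$. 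Uniqueness of the sewn object then identifies the two integrals. This is the standard c\`adl\`ag replacement for the continuous-case argument in which one simply lets the single-scale error bound go to zero over a vanishing mesh.
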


The proof of Proposition~\ref{prop: associativity of rough integration} is identical to the proof of the corresponding statement for continuous paths, which can be found in \cite[Proposition~A.2]{Allan2023a}.

\subsection{The canonical rough path lift of a controlled path}

The next lemma provides the canonical construction of a c{\`a}dl{\`a}g rough path above a controlled path.

\begin{lemma}\label{lem: consistency of rough integrals}
  Let $\bX = (X, \X)$ be a c{\`a}dl{\`a}g rough path and $(Z,Z') \in \crpX$ be a controlled path. Then, $\bZ = (Z, \Z)$ is a c{\`a}dl{\`a}g rough path, where
  \begin{equation*}
    \Z_{s,t} := \int_s^t Z_u \otimes \d Z_u - Z_s \otimes Z_{s,t}, \qquad (s,t) \in \Delta_T,
  \end{equation*}
  with the integral defined as in~\eqref{eq: integration against controlled path}. We call $\bZ = (Z, \Z)$ the canonical rough path lift of $(Z, Z')$. Moreover, if $(Y, Y') \in \mathcal{V}^p_Z$, then $(Y, Y'Z') \in \crpX$, and
  \begin{equation*}
    \int_0^T Y_u \dd \bZ_u = \int_0^T Y_u \dd Z_u,
  \end{equation*}
  where on the left-hand side we have the rough integral of $(Y,Y')$ against $\bZ$, and on the right-hand side we have the integral of $(Y,Y'Z')$ against $(Z,Z')$ in the sense of \eqref{eq: integration against controlled path}.
\end{lemma}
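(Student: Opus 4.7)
The plan is to verify the three conditions defining a c\`adl\`ag rough path for $\bZ = (Z,\Z)$, then establish the controlled-path property and the equality of integrals. Throughout, let $w$ denote a suitably chosen control (aggregated from the controls underlying $\|X\|_p$, $\|\X\|_{p/2}$, $\|Z'\|_p$ and $\|R^Z\|_{p/2}$) so that the error bound in Lemma~\ref{lem: integration against controlled path}, applied with $(F,F')=(G,G')=(Z,Z')$, takes the clean form
\begin{equation*}
\bigl|\,\Z_{s,t} - (Z'_s \otimes Z'_s)\X_{s,t}\bigr| \,\leq\, C\,w(s,t)^{3/p}, \qquad (s,t) \in \Delta_T.
\end{equation*}
Since $p \in [2,3)$, we have $3/p > 1$, which is the one exponent inequality driving everything below.

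\textbf{Regularity and Chen's relation.} First, $Z \in D^p$ is part of the definition of a controlled path. For $\Z \in D^{p/2}_2$, I use the local expansion above: combined with $\X \in D^{p/2}_2$ and the super-additivity of $w$, one gets $\sum_{[s,t]\in\cP} |\rho_{s,t}|^{p/2} \leq C\,w(0,T)^{3/2-1}\,w(0,T) < \infty$ for the error $\rho_{s,t} := \Z_{s,t}-(Z'_s \otimes Z'_s)\X_{s,t}$, so $\|\Z\|_{p/2} < \infty$. The c\`adl\`ag regularity of $s\mapsto \Z_{s,t}$ and $t\mapsto \Z_{s,t}$ follows from the c\`adl\`ag regularity of $u\mapsto \int_0^u Z_r \otimes \dd Z_r$ (the rough integral is itself a controlled path by~\eqref{eq: estimate for rough integral}). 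Chen's relation is a direct computation: using additivity of the integral,
\begin{equation*}
\Z_{s,u}+\Z_{u,t}+Z_{s,u}\otimes Z_{u,t} = \int_s^t Z_r\otimes \dd Z_r - Z_s\otimes Z_{s,u} - Z_u\otimes Z_{u,t} + Z_{s,u}\otimes Z_{u,t},
\end{equation*}
and substituting $Z_u = Z_s + Z_{s,u}$ collapses the right-hand side to $\Z_{s,t}$.

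\textbf{Controlled-path property.} For $(Y,Y')\in \cV^p_Z$, chaining the two controlled-path expansions yields
\begin{equation*}
Y_{s,t} \;=\; Y'_s Z_{s,t} + R^Y_{s,t} \;=\; (Y'_s Z'_s)\, X_{s,t} \,+\, \bigl(Y'_s R^Z_{s,t} + R^Y_{s,t}\bigr).
\end{equation*}
The product $Y'Z'$ lies in $D^p$ by Lemma~\ref{lem: product of controlled paths}, and the new remainder $Y'R^Z + R^Y$ lies in $D^{p/2}_2$ since $\|Y'\|_\infty, \|R^Z\|_{p/2}, \|R^Y\|_{p/2}<\infty$. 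Hence $(Y, Y'Z')\in \crpX$.

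\textbf{Agreement of the two integrals.} Comparing the Riemann sums defining the two integrals, the rough integral against $\bZ$ uses the compensator $Y'_s \Z_{s,t}$, while the integral of $(Y,Y'Z')$ against $(Z,Z')$ in the sense of~\eqref{eq: integration against controlled path} uses $Y'_s(Z'_s\otimes Z'_s)\X_{s,t}$ (under the natural tensor identifications). Their difference equals $Y'_s\rho_{s,t}$, controlled by $\|Y'\|_\infty\,w(s,t)^{3/p}$. Since the exponent exceeds one, summing along any partition of $[0,T]$ and exploiting super-additivity of $w$ gives a total bound vanishing with the mesh; hence the two limits coincide. The main subtlety here is bookkeeping the tensorial identifications between $\Z_{s,t}$ and $(Z'_s\otimes Z'_s)\X_{s,t}$ so that the difference really is $Y'_s \rho_{s,t}$; once that is spelled out, everything reduces to the single estimate $|\rho_{s,t}|\lesssim w(s,t)^{3/p}$ with $3/p>1$, which is the crux of the argument.
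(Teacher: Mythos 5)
Your structure is sound and most of it is correct: the Chen-relation computation is right, the $\frac{p}{2}$-variation bound for $\Z$ via $\rho_{s,t} := \Z_{s,t} - (Z'_s \otimes Z'_s)\X_{s,t}$ and super-additivity is right (there one only needs the sum to be \emph{finite}, and $3/2 > 1$ suffices), and the chaining argument for $(Y, Y'Z') \in \crpX$ is correct. The gap is in the final step, and it is a genuine one in the c\`adl\`ag setting: you claim that $\sum_{[s,t]\in\cP} w(s,t)^{3/p} \to 0$ as $|\cP| \to 0$ ``by super-additivity, since the exponent exceeds one.'' That argument is valid only for \emph{continuous} controls $w$. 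For a c\`adl\`ag path $X$ with a jump at $\tau$, the control $w$ jumps there, and every partition has exactly one interval $[s_*,t_*]$ with $s_* < \tau \le t_*$ for which $w(s_*,t_*) \ge |\Delta X_\tau|^p$; its contribution to $\sum w(s,t)^{3/p}$ does not vanish as the mesh shrinks. So the bound $|\rho_{s,t}| \lesssim w(s,t)^{3/p}$ is, by itself, too coarse to conclude.

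What makes the c\`adl\`ag case work is precisely the information you discarded by symmetrising to a single control $w$. The estimate in Lemma~\ref{lem: integration against controlled path} is asymmetric: each term is a product of a factor taken over the \emph{half-open} interval $[s,t)$ (e.g.\ $\|F\|_{p,[s,t)}$, $\|R^F\|_{\frac{p}{2},[s,t)}$) and a factor over the closed interval $[s,t]$ (e.g.\ $\|R^G\|_{\frac{p}{2},[s,t]}$, $\|\X\|_{\frac{p}{2},[s,t]}$). The half-open factor excludes the possible jump at the right endpoint $t$, and the crucial fact from c\`adl\`ag rough path theory (Friz--Zhang) is that, after separating out the finitely many jumps above any threshold $\epsilon$, these half-open oscillations become uniformly small as the mesh shrinks; the closed factor is then controlled by super-additivity. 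Alternatively, one can argue via the uniqueness clause of the (c\`adl\`ag) sewing lemma rather than via Riemann sums directly, but either way the jump-handling is the real subtlety here---not, as you suggest, the tensorial bookkeeping. Incidentally, this is exactly the point at which ``follows the continuous-case proof verbatim'' is a mild overstatement in the paper itself: the appeal to the half-open/closed asymmetry of the c\`adl\`ag integral estimate is doing real work.
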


The proof of Lemma~\ref{lem: consistency of rough integrals} follows the proof of the corresponding statement for continuous paths verbatim; see \cite[Lemma~A.3]{Allan2023a}.

\subsection{The exponential of a rough path}

Recall that, given a c{\`a}dl{\`a}g rough path $\bX = (X, \X)$, one can define the so-called reduced rough path $\bX^r = (X, [\bX])$, where $[\bX]_t := X_{0,t} \otimes X_{0,t} - 2 \Sym(\X_{0,t})$ is the rough path bracket of $\bX$; see, e.g., \cite[Section~2.4]{Friz2018}. If $X$ satisfies Property {\textup(RIE)} relative to $p$ and a sequence of partitions $(\cP^n)_{n\in\N}$, then, by \cite[Proposition~2.18]{Allan2023b}, one can see that the rough path bracket $[\bX]$ coincides with the pathwise quadratic variation $[X]$ of $X$, in the sense of F{\"o}llmer; see \cite{Follmer1981}. Using this notion, one can introduce the rough exponential analogously to the stochastic exponential of It{\^o} calculus.

In the following, given a path $X$, we will write $\Delta X_t := X_{t-,t}$ for the jump of $X$ at time $t$.

\begin{lemma}\label{lem: dynamics of rough exponential}
  Given a one-dimensional c{\`a}dl{\`a}g rough path $\bX = (X,\X)$ (so that in particular $X$ is real-valued), such that $X_0 = 0$, $\Delta [\bX]_t = (\Delta X_t)^2$ for every $t \in (0,T]$, and $\sum_{t \in (0,T]} (\Delta X_t)^2 < \infty$, the rough exponential $V = \cE(X)$ is defined by
  \begin{equation*}
    V_t := \exp \Big(X_t - \frac{1}{2} \Gamma_t\Big) \prod_{0 < s \leq t} (1 + \Delta X_s) \exp(- \Delta X_s), \qquad t \in [0,T],
  \end{equation*}
  where $\Gamma_t := [\bX]_t - \sum_{s \leq t} (\Delta X_s)^2$ for $t \in [0,T]$. We then have that $V$ is the unique controlled path in $\crpX$ satisfying the linear rough differential equation
  \begin{equation}\label{eq: dynamics of rough exponential}
    V_t = 1 + \int_0^t V_s \dd \bX_s, \qquad t \in [0,T],
  \end{equation}
  with Gubinelli derivative $V' = V$.
\end{lemma}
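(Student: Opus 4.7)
My approach is to (i) verify that $V$ is well-defined, (ii) exhibit a controlled-path structure via an exponential reformulation, and (iii) verify the RDE by a direct Taylor expansion combined with the Itô-type identity $X_{u,v}^2 = 2\X_{u,v} + [\bX]_{u,v}$ that follows in one dimension from the definition of $[\bX]$. Uniqueness then comes from the standard existence/uniqueness theorem for linear RDEs, which applies since the coefficient (the identity) is smooth and bounded on any bounded set.

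\textbf{Step 1 (Well-definedness).} Since $X$ is c\`adl\`ag on $[0,T]$, only finitely many jumps satisfy $|\Delta X_s| \geq \tfrac12$. For all remaining $s$, a Taylor expansion gives $|(1+\Delta X_s)e^{-\Delta X_s} - 1| \leq C (\Delta X_s)^2$, and so the product defining $V$ converges absolutely under the assumption $\sum_{s} (\Delta X_s)^2 < \infty$. This also shows that $L_t := \sum_{0<s\leq t}[\log(1+\Delta X_s) - \Delta X_s]$ is of finite $1$-variation, and $V_t = \exp(M_t)$ with $M_t := X_t - \tfrac12\Gamma_t + L_t$.

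\textbf{Step 2 (Controlled path).} The assumption $\Delta[\bX]_s = (\Delta X_s)^2$ forces $\Gamma$ to be continuous, and Step~1 gives that $L$ has finite $1$-variation. Consequently $M$ is controlled by $X$ with Gubinelli derivative $M' \equiv 1$ and remainder $R^M = L - \tfrac12 \Gamma \in D_2^{p/2}$. By the chain rule for controlled paths applied to the smooth function $\exp$ (e.g.\ a c\`adl\`ag analogue of \cite[Lemma~7.3]{Friz2018}), $V = \exp(M) \in \crpX$ with Gubinelli derivative $V' = \exp(M) \cdot M' = V$, as claimed.

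\textbf{Step 3 (Verification of the RDE).} First, one checks the jump: since $\Delta\Gamma_s = 0$ and $\Delta L_s = \log(1+\Delta X_s) - \Delta X_s$, we obtain $\Delta M_s = \log(1+\Delta X_s)$, whence $\Delta V_s = V_{s-}(e^{\Delta M_s}-1) = V_{s-}\Delta X_s$, consistent with an It\^o-type integral (recall $\X_{s-,s} = 0$ is equivalent to $\Delta[\bX]_s = (\Delta X_s)^2$). For the full identity, I would fix a partition $\cP$ of $[0,t]$ and Taylor-expand
\begin{equation*}
V_{u,v} = V_u\bigl(e^{M_{u,v}} - 1\bigr) = V_u M_{u,v} + \tfrac12 V_u M_{u,v}^2 + V_u\, r_{u,v},
\end{equation*}
with $|r_{u,v}| \lesssim |M_{u,v}|^3$. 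Substituting $M_{u,v} = X_{u,v} - \tfrac12 \Gamma_{u,v} + L_{u,v}$, keeping only the term $X_{u,v}^2$ in $M_{u,v}^2$ (the remaining contributions have vanishing $p/2$-variation in the mesh), and invoking $X_{u,v}^2 = 2\X_{u,v} + [\bX]_{u,v}$, I obtain
\begin{equation*}
V_{u,v} = V_u X_{u,v} + V_u \X_{u,v} + \tfrac12 V_u \bigl([\bX]_{u,v} - \Gamma_{u,v}\bigr) + V_u L_{u,v} + \widetilde{r}_{u,v},
\end{equation*}
with $\sum |\widetilde{r}_{u,v}| \to 0$ as $|\cP| \to 0$. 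By definition of $\Gamma$ and $L$, and since $\log(1+x) - x = -\tfrac12 x^2 + O(x^3)$,
\begin{equation*}
\tfrac12\bigl([\bX]_{u,v} - \Gamma_{u,v}\bigr) + L_{u,v} = \tfrac12 \!\!\sum_{u<s\leq v}\!\! (\Delta X_s)^2 + \!\!\sum_{u<s\leq v}\!\!\bigl[\log(1+\Delta X_s) - \Delta X_s\bigr],
\end{equation*}
whose sum over $[u,v] \in \cP$ telescopes to a quantity of order $\sum_s |\Delta X_s|^3$, which is finite and absorbs into the error once multiplied by $V_u$ and localized by partition. Summing and passing to the limit, $V_t - 1 = \sum V_{u,v} \to \int_0^t V_u\,\d\bX_u$ by the very definition of the rough integral in \eqref{eq: rough integral}. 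Uniqueness follows from the standard theorem on linear RDEs.

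\textbf{Main obstacle.} The delicate point is the bookkeeping in Step~3: matching the ``compensator'' $-\tfrac12\Gamma$ and the jump correction $L$ in the exponent of $V$ with the bracket-type contribution $\tfrac12([\bX]_{u,v} - \Gamma_{u,v})$ produced by turning $X_{u,v}^2$ into $2\X_{u,v} + [\bX]_{u,v}$. Once this cancellation is performed explicitly, the residual terms have finite $p/3$-variation and the limit in the definition of the rough integral is routine.
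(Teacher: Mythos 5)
Your route — an exponential reformulation $V = \exp(M)$ followed by direct Taylor expansion over a shrinking partition — is genuinely different from the paper's, which rewrites $V = f(Y,A)$ with $f(y,a) = a e^y$, $Y := X - \tfrac12\Gamma$ and $A := \prod_{s\le\cdot}(1+\Delta X_s)e^{-\Delta X_s}$ (of finite $1$-variation), builds the two-dimensional rough path $\bZ = (Y,A)$ with its Young-integral off-diagonal second-level components, computes the components of the bracket $[\bZ]$, and then applies the It\^o formula for c\`adl\`ag rough paths. Two problems, however, make your argument incomplete as written. First, the reformulation $V_t = \exp(M_t)$ with $M = X - \tfrac12\Gamma + L$ and $L_t = \sum_{s\le t}[\log(1+\Delta X_s) - \Delta X_s]$ silently assumes $1+\Delta X_s > 0$ for every $s$ (otherwise $\log(1+\Delta X_s)$ is undefined and $V$ may be zero or negative), but this is not among the hypotheses of the lemma; the paper's factorisation $V = A\,e^Y$ with $A$ of finite $1$-variation handles signed and vanishing $A$ without change.

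Second, and more seriously, the bookkeeping in Step 3 does not close. After keeping only $X_{u,v}^2$ from $M_{u,v}^2$ you claim $\sum|\widetilde{r}_{u,v}| \to 0$, but $\widetilde{r}$ contains the cross term $V_u X_{u,v}L_{u,v}$, whose sum converges to $\sum_s V_{s-}\,\Delta X_s\,\Delta L_s$ (nonzero in general), as well as the cubic and higher Taylor terms, which converge to $\sum_s V_{s-}\bigl(e^{\Delta M_s}-1-\Delta M_s-\tfrac12(\Delta M_s)^2\bigr)$. Simultaneously, $\sum_{[u,v]} V_u\bigl(\tfrac12([\bX]_{u,v}-\Gamma_{u,v}) + L_{u,v}\bigr) \to \sum_s V_{s-}\bigl(\tfrac12(\Delta X_s)^2 + \log(1+\Delta X_s) - \Delta X_s\bigr)$, which is also finite but nonzero — saying it is ``of order $\sum_s|\Delta X_s|^3$ and absorbs into the error'' does not show it vanishes. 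What actually happens is that all these jump contributions cancel \emph{exactly}, because $\Delta M_s = \log(1+\Delta X_s)$ and hence $V_{s-}(e^{\Delta M_s}-1) = V_{s-}\Delta X_s$ matches the jump of the rough integral (using $\X_{s-,s} = 0$). To make your proof correct you would need to track the full Taylor series (not truncate at second order) and invoke this identity explicitly; as written, the reader has no argument that the residual sums to zero. The paper's route via the It\^o formula for rough paths encapsulates precisely this cancellation and avoids the hand computation.
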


\begin{proof}
  Since we assume that $\sum_{t \in (0,T]} (\Delta X_t)^2 < \infty$, and $\Delta [\bX]_t = (\Delta X_t)^2$ for all $t \in (0,T]$, the path $\Gamma = [\bX] - \sum_{s \leq \cdot} (\Delta X_s)^2$ is continuous and has finite $\frac{p}{2}$-variation. Let $Y := X - \frac{1}{2} \Gamma$ and $A := \prod_{s \leq \cdot} (1 + \Delta X_s) \exp(-\Delta X_s)$. One can verify that $A$ is of finite $1$-variation; see, e.g., the proof of \cite[Ch.~II, Theorem~37]{Protter2005}. Hence, the two-dimensional path $Z := (Y, A)$ admits a rough path lift $\bZ = (Z, \Z)$, such that
  \begin{align*}
    &\Z^{1,1}_{s,t} = \X_{s,t} - \frac{1}{2}\int_s^t X_{s,u} \dd \Gamma_u - \frac{1}{2}\int_s^t \Gamma_{s,u} \dd X_u + \frac{1}{4} \int_s^t \Gamma_{s,u} \dd \Gamma_u,\\
    &\Z^{1,2}_{s,t} = \int_s^t Y_{s,u} \dd A_u, \qquad \Z^{2,1}_{s,t} = \int_s^t A_{s,u} \dd Y_u, \qquad \Z^{2,2}_{s,t} = \int_s^t A_{s,u} \dd A_u,
  \end{align*}
  for $(s,t) \in \Delta_T$, where all the integrals above are interpreted as Young integrals (as in, e.g., \cite[Proposition~2.4]{Friz2018}).

  We now consider the reduced rough path $(Z, [\bZ])$ associated with $\bZ$. By definition, we have that
  \begin{equation*}
    [Y,A]_t := [\bZ]^{1,2}_t = [\bZ]^{2,1}_t = Y_{0,t} A_{0,t} - \bigg(\int_0^t Y_{0,u} \dd A_u + \int_0^t A_{0,u} \dd Y_u \bigg).
  \end{equation*}
  Since $\int_0^t Y_{0,u} \dd A_u$ and $\int_0^t A_{0,u} \dd Y_u$ are Young integrals, for any sequence of partitions $(\cP^n)_{n \in \N}$ of $[0,T]$ with vanishing mesh size, we have that
  \begin{equation*}
    \int_0^t Y_{0,u} \dd A_u = \lim_{n \to \infty} \sum_{[u,v] \in \cP^n} Y_{0,u} A_{u \wedge t,v \wedge t}, \qquad \int_0^t A_{0,u} \dd Y_u = \lim_{n \to \infty} \sum_{[u,v] \in \cP^n} A_{0,u} Y_{u \wedge t,v \wedge t}.
  \end{equation*}
  
  Noting that
  \begin{equation*}
    Y_{0,v \wedge t} A_{0,v \wedge t} - Y_{0,u \wedge t} A_{0,u \wedge t} = Y_{0,u \wedge t} A_{u \wedge t,v \wedge t} + A_{0,u \wedge t} Y_{u \wedge t,v \wedge t} + Y_{u \wedge t,v \wedge t} A_{u \wedge t,v \wedge t} 
  \end{equation*}
  and taking $\lim_{n \to \infty} \sum_{[u,v] \in \cP^n}$ on each side, we obtain
  \begin{equation*}
    [Y,A]_t = \lim_{n \to \infty} \sum_{[u,v] \in \cP^n} Y_{u \wedge t,v \wedge t} A_{u \wedge t,v \wedge t} = \sum_{s \leq t} \Delta Y_s \Delta A_s = \sum_{s \leq t} \Delta X_s \Delta A_s,
  \end{equation*}
  and one can similarly show that $[A]_t := [\bZ]^{2,2}_t = \sum_{s \leq t} (\Delta A_s)^2$.

  Since $\Gamma$ is continuous and of finite $\frac{p}{2}$-variation, one can show, using the integration by parts formula for Young integrals, that $[Y]_t := [\bZ]^{1,1}_t = [\bX]_t$, so that $[Y]_t = \Gamma_t + \sum_{s \leq t} (\Delta X_s)^2$.

  Applying the It{\^o} formula for rough paths (\cite[Theorem~2.12]{Friz2018}) to $V_t = f(Z_t)$, where $f(y,a) := a \exp(y)$, and using the expressions derived above for the rough path bracket $[\bZ]$, a straightforward calculation (similar to the proof of \cite[Ch.~II, Theorem~37]{Protter2005} in the semimartingale setting) establishes that $V_t = 1 + \int_0^t V_s \dd \bX_s$. In particular, this involves noting that $\int_0^t V_s \dd Y_s = \int_0^t V_s \dd \bX_s - \frac{1}{2} \int_0^t V_s \dd \Gamma_s$, where in the first integral on the right-hand side we identify $(V,V)$ as a controlled path with respect to $X$.

  Finally, the uniqueness of solutions to \eqref{eq: dynamics of rough exponential} follows from straightforward estimates using the stability of rough integration (\cite[Lemma~3.4]{Friz2018}).
\end{proof}

\begin{remark}\label{remark: RIE paths satisfy conditions for rough exp}
  We note that, if $X$ is a real-valued c\`adl\`ag path which satisfies Property \textup{(RIE)}, then its canonical rough path lift $\bX = (X,\X)$ satisfies the assumptions of Lemma~\ref{lem: dynamics of rough exponential}. Indeed, by the definition of $\X$ in \eqref{eq: RIE rough path}, it is easy to see that $\X_{t-,t} = 0$ for every $t \in (0,T]$, and, recalling that the bracket $[\bX]$ satisfies $[\bX]_{s,t} = X_{s,t} \otimes X_{s,t} - 2 \Sym(\X_{s,t})$ for all $(s,t) \in \Delta_T$, we have that $\Delta [\bX]_t = (X_{t-,t})^2 - 2 \Sym(\X_{t-,t}) = (\Delta X_t)^2$ for every $t \in (0,T]$. Moreover, a slight generalization of \cite[Proposition~2.18]{Allan2023b} (to allow non-nested partitions) shows that the bracket $[\bX]$ has finite $1$-variation (and is actually increasing, since here $X$ is one-dimensional), and we then have that $\sum_{t \in (0,T]} (\Delta X_t)^2 = \sum_{t \in (0,T]} \Delta [\bX]_t \leq [\bX]_T < \infty$.
\end{remark}

\bibliography{quellen}{}
\bibliographystyle{amsalpha}

\end{document}